\documentclass[10pt,journal,onecolumn]{IEEEtran}

\usepackage{algorithm}
\usepackage{algpseudocode}
\usepackage{times,amssymb,amsmath,amsfonts,nicefrac,float,accents,color,
graphicx,bbm,caption,subcaption,mathrsfs,amsthm,stmaryrd,soul}
\usepackage[noadjust]{cite}
\usepackage{tikz}
\usetikzlibrary{calc}
\usepackage[mathscr]{eucal}
\usepackage{arydshln}
\usepackage{booktabs}
\usepackage{makecell}

\newcommand{\tabincell}[2]{\begin{tabular}{@{}#1@{}}#2\end{tabular}}

\newtheorem{thm}{Theorem}[section]
\newtheorem{rem}[thm]{Remark}
\newtheorem{definition}[thm]{Definition}
\newtheorem{lem}[thm]{Lemma}

\newtheorem{cor}[thm]{Corollary}

\newtheorem{claim}[thm]{Claim}

\title{New Constructions of Optimal Locally Repairable Codes with Super-Linear Length}

\author{Xiangliang Kong$^{\text{a}}$, Xin Wang$^{\text{b}}$\thanks{The research of X. Wang was supported by the National Natural Science Foundation of China under Grant No. 11801392 and the Natural Science Foundation
of Jiangsu Province under Grant No. BK20180833.} and Gennian Ge$^{\text{a}}$\thanks{Corresponding author. Email address:  gnge@zju.edu.cn (G. Ge). The research of G. Ge was supported by the National Natural Science Foundation of China under Grant No. 11971325, National Key Research and Development Program of China under Grant  Nos. 2020YFA0712100  and  2018YFA0704703,  and Beijing Scholars Program.}\\
  \footnotesize $^{\text{a}}$ School of Mathematical Sciences, Capital Normal University, Beijing, 100048, China\\
  \footnotesize $^{\text{b}}$ School of Mathematical Sciences, Soochow University, Soochow 215301, Jiangsu, China}

\begin{document}
\date{}
\maketitle

\begin{abstract}
As an important coding scheme in modern distributed storage systems, locally repairable codes (LRCs) have attracted a lot of attentions from perspectives of both practical applications and theoretical research. As a major topic in the research of LRCs, bounds and constructions of the corresponding optimal codes are of particular concerns.

In this work, codes with $(r,\delta)$-locality which have optimal minimal distance w.r.t. the bound given by Prakash et al. \cite{Prakash2012Optimal} are considered. Through parity check matrix approach, constructions of both optimal $(r,\delta)$-LRCs with all symbol locality ($(r,\delta)_a$-LRCs) and optimal $(r,\delta)$-LRCs with information locality ($(r,\delta)_i$-LRCs) are provided. As a generalization of a work of Xing and Yuan \cite{XY19}, these constructions are built on a connection between sparse hypergraphs and optimal $(r,\delta)$-LRCs. With the help of constructions of large sparse hypergraphs, the length of codes constructed can be super-linear in the alphabet size. This improves upon previous constructions when the minimal distance of the code is at least $3\delta+1$. As two applications, optimal H-LRCs with super-linear length and GSD codes with unbounded length are also constructed.

\medskip
\noindent{\it Keywords:} Optimal locally repairable codes, parity-check matrix, sparse hypergraphs

\smallskip


\end{abstract}

\section{Introduction}

In modern distributed storage systems, erasure coding based schemes are employed to provide efficient repair for failed storage nodes. Among all these storage codes, maximum distance separable (MDS) codes are favored for their high repair efficiency and reliability. However, due to the large bandwidth and disk I/O during repair process (see \cite{MMDARSD13}), schemes based on MDS codes can be costly when only a few nodes fail in the system. This greatly affects the practicability of MDS codes in storage systems, especially in large-scale distributed file systems.

To maintain high repair efficiency with less bandwidth, locally repairable codes (LRCs) were introduced in \cite{gopalan2012locality}. A block code is called a locally repairable code with locality $r$ if any failed code symbol can be recovered by accessing at most $r$ survived ones. Moreover, if this code is linear, $r$ should be much smaller than code dimension $k$. Therefore, LRCs can guarantee efficient recovery of single node failures with low repair bandwidth. As a result, LRCs have been implemented in many large scale systems e.g., Microsoft Azure \cite{CHYABPJS12} and Hadoop HDFS \cite{MMDARSD13}.

Over the past few years, the concept of LRCs has been generalized in many different aspects. As one major generalization, the notion of locally repairable codes with $(r,\delta)$-locality ($(r,\delta)$-LRCs) was introduced by Prakash et al. \cite{Prakash2012Optimal}, which extends the capability of repairing one erasure within each repair set to $\delta-1$ erasures. Like original LRCs, a Singleton-type upper bound on the minimum distance of $(r,\delta)$-LRCs was given in \cite{Prakash2012Optimal}. Recently, finding constructions of the optimal LRCs and optimal $(r,\delta)$-LRCs with respect to such bounds has become an interesting and challenging work, which attracted lots of researchers. For examples, see \cite{JMX20,Jin19,LMX19,LXY19,SRKV13,TB14,TPD16,XY19} for constructions of optimal LRCs and see \cite{CMST20,CXHF18,ZL20,Zhang20} for constructions of optimal $(r,\delta)$-LRCs. In this paper, we focus on constructions of optimal $(r,\delta)$-LRCs. For the study of availabilities of LRCs, see \cite{CCFT19,CMST20,RPDV16,SES19,TB16,WZ14}, and for the study of codes with hierarchical locality (H-LRCs), see \cite{ZL20,SAK15,BBV2019,LC20}. For other generalizations, we refer to the survey \cite{balaji2018erasure}.

Usually, longer codes over smaller fields are favored for their efficient transmission performances and fast implementations in practical applications. Therefore, given the size $q$ of the underlying field and other parameters, it is natural to ask how long a code with such parameters can be. For optimal $(r,\delta)$-LRCs, this question was recently asked by Guruswami et al. \cite{guruswami2019long}. They considered this question for the case $\delta=2$ and proved an upper bound on the code length. Through a greedy algorithm, they also constructed optimal $(r,2)$-LRCs with super-linear (in $q$) length, which confirmed the tightness of their upper bound for some cases. Latter in \cite{CMST20}, Cai et al. considered this question for the general case $\delta>2$. In their paper, Cai et al. derived a general upper bound on the length of optimal $(r,\delta)$-LRCs and they also provided a general construction for such codes. Furthermore, using combinatorial objects such as union-intersection-bounded families, packings and Steiner systems, they obtained optimal $(r,\delta)$-LRCs with length $\Omega(q^{\delta})$, which meet their upper bound on the code length when the minimal distance $d$ satisfies $2\delta+1\leq d\leq 3\delta$. Very recently, Cai and Schwartz \cite{CS20} extended their results in \cite{CMST20} to codes that not only have information $(r,\delta)$-locality but also can recover some erasure patterns beyond the minimum distance. They also introduced a new kind of array codes called \emph{generalized sector-disk} (GSD) codes, which can recover special erasure patterns mixed of whole disk erasures together with additional sector erasures that are beyond the minimum distance.


In this paper, through parity-check matrix approach, we provide general constructions for both optimal $(r,\delta)$-LRCs with all symbol locality and optimal $(r,\delta)$-LRCs with information locality and extra global recoverability. Our constructions are built on a connection between sparse hypergraphs in extremal combinatorics and optimal $(r,\delta)$-LRCs, which can be viewed as a generalization of a work of Xing and Yuan \cite{XY19}. Based on known results and a probabilistic construction about sparse hypergraphs, we obtain optimal $(r,\delta)_a$-LRCs (codes with all symbol $(r,\delta)$-locality) and optimal $(r,\delta)_i$-LRCs (codes with information $(r,\delta)$-locality) with length super-linear in $q$. Compared to the results in \cite{CMST20} and \cite{CS20}, our results provide longer codes for $d\geq 3\delta+1$. Furthermore, as two applications of our constructions, we construct optimal H-LRCs with super-linear length, which improves the results given by \cite{Zhang20}; and we also provide a construction of generalized sector-disk codes with unbounded length.

The remainder of this paper is organized as follows. In Section \ref{sec_pre}, we fix some notations and provide preliminaries on locally repairable codes. In Section \ref{sec3}, we present our constructions of optimal $(r,\delta)$-LRCs with all symbol locality. In Section \ref{sec4}, we present our constructions of optimal $(r,\delta)$-LRCs with information locality and extra global recoverability. In Section \ref{sec5}, we first give a brief introduction about Tu\'{r}an-type problems for sparse hypergraphs, and then based on constructions of a special kind of sparse hypergraphs, we obtain optimal $(r,\delta)_a$-LRCs and optimal $(r,\delta)_i$-LRCs with super-linear length. In Section \ref{sec6}, we provide two applications of our constructions for H-LRCs and GSD codes. Finally, we conclude our paper with some remarks in Section \ref{sec7}.

\section{Preliminaries}\label{sec_pre}

Firstly, we introduce some notations and terminologies that will be frequently used throughout the paper:
\begin{itemize}
  \item[1)] Let $q$ be a prime power, we define $\mathbb{F}_q$ as the finite field with $q$ elements. $\mathcal{C}$ is said to be an $[n,k,d]_q$ code if $\mathcal{C}$ is a linear code over $\mathbb{F}_q$ with length $n$, dimension $k$ and minimum distance $d$.
  \item[2)] For positive integer $n$, we use $[n]=\{1,2,\cdots,n\}$ as the first $n$ positive integers. For $x\geq 0$, we use $\lfloor x\rfloor$ and $\lceil x\rceil$ to denote the floor function and ceiling function of $x$, respectively.
  \item[3)] We use $O$ to denote the zero matrix with proper size according to the context.
  \item[4)] For a vector $\mathbf{v}\in \mathbb{F}_q^{n}$, define $\omega(\mathbf{v})\triangleq|\{i\in[n]: v(i)\neq 0\}|$ as the weight of $\mathbf{v}$.
  \item[5)] A vector over $\mathbb{F}_q$ is said to be Vandemonde-type with generator (or generating element) $a$ if it has the form $b(1,a,a^2,\cdots)^T$ for some $a,b\in \mathbb{F}_{q}^{*}$. For a set of Vandemonde-type vectors $\{\mathbf{v}_i\}_{i=1}^{s}$, the generating set of $\{\mathbf{v}_i\}_{i=1}^{s}$ consists of all the generators for every $\mathbf{v}_i$, $1\leq i\leq s$.
  \item[6)] For positive integers $m$ and $n$, let $E$ be a subset of $[n]$ with size $s$. Write $E=\{i_1,\ldots,i_s\}$ when $s\geq 1$ and $E=\emptyset$ when $s=0$. Let $\mathbf{H}=(\mathbf{h}_1,\mathbf{h}_2,\ldots,\mathbf{h}_n)$ be a matrix of size $m\times n$, where $\mathbf{h}_i\in \mathbb{F}_q^{m}$ for $1\leq i\leq n$. Then, the restriction of $\mathbf{H}$ over $E$ is defined as $\mathbf{H}|_{E}=(\mathbf{h}_{i_1},\mathbf{h}_{i_2},\ldots,\mathbf{h}_{i_s})$ when $s\geq 1$ and $\mathbf{H}|_{E}=()$, i.e., the empty matrix, when $s=0$.
  \item[7)] We use the standard Bachmann-Landau notations $\Omega(\cdot)$, $\theta(\cdot)$, $O(\cdot)$ and $o(\cdot)$, whenever the constant factors are not important.
\end{itemize}

Now we state the formal definition of $(r,\delta)$-locality.
\begin{definition}\emph{(\cite{Prakash2012Optimal})}\label{def01} Let $\mathcal{C}$ be an $[n,k,d]_q$ code. The $i$th code symbol $c_i$ of $\mathcal{C}$ is called to have locality $(r,\delta)$ if there exists a subset $S_i\subset[n]$ satisfying
\begin{itemize}
\item[$\bullet$] $i\in S_i$ and $|S_i|\leq r+\delta-1$,
\item[$\bullet$] the minimum distance of the code $\mathcal{C}|_{S_i}$ obtained by deleting code symbols $c_i$, $i\in[n]\backslash S_i$, is at least $\delta$.
\end{itemize}
\end{definition}
An $[n,k,d]_q$ code $\mathcal{C}$ is said to have all symbol $(r,\delta)$-locality ($(r,\delta)_a$-locality) if all symbols of $\mathcal{C}$ have locality $(r,\delta)$ and it is said to have information $(r,\delta)$-locality ($(r,\delta)_i$-locality), if there exists a $k$-set $I\subseteq [n]$ with $rank(I)=k$ such that for every $i\in I$, the $i_{th}$ symbol has $(r,\delta)$-locality. As shown in \cite{Prakash2012Optimal}, for both $[n,k,d]_q$ codes with $(r,\delta)_a$-locality and $[n,k,d]_q$ codes with $(r,\delta)_i$-locality, their minimal distance $d$ satisfies the following Singleton-type bound:
\begin{equation}\label{singleton_bound}
d\leq n-k+1-(\left\lceil\frac{k}{r}\right\rceil-1)(\delta-1).
\end{equation}
When the equality in (\ref{singleton_bound}) holds, the code $\mathcal{C}$ is called optimal. For the sake of our construction, we change the form of the Singleton-type bound as follows.
\begin{lem}\label{lem01} Assume that $(r+\delta-1)|n$. If the Singleton-type bound \emph{(\ref{singleton_bound})} is achieved, then
\begin{equation}n-k=(\delta-1)\frac{n}{r+\delta-1}+d-\delta-(\delta-1)\left\lfloor{\frac{d-\delta}{r+\delta-1}}\right\rfloor.
\end{equation}
\end{lem}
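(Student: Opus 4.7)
The plan is to unwind the equality case of the Singleton-type bound, replace $\lceil k/r\rceil$ by a new auxiliary integer, and then re-express that integer in terms of $n/(r+\delta-1)$ and $\lfloor (d-\delta)/(r+\delta-1)\rfloor$ using the divisibility hypothesis.

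First I would set $t \triangleq \lceil k/r\rceil$ and rewrite the equality in \eqref{singleton_bound} as $k = n - d + 1 - (t-1)(\delta-1)$. Solving for $n - k$ gives
\[
n - k \;=\; d - 1 + (t-1)(\delta-1) \;=\; d - \delta + t(\delta-1),
\]
so the whole statement reduces to the identity
\[
t \;=\; \frac{n}{r+\delta-1} \;-\; \left\lfloor\frac{d-\delta}{r+\delta-1}\right\rfloor.
\]

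Next I would pin down $t$ by using the definition of the ceiling, namely $(t-1)r < k \le tr$. Substituting the expression for $k$ into these two inequalities and collecting the $(\delta-1)$ terms on the left, the lower bound $(t-1)r < k$ becomes $(t-1)(r+\delta-1) < n-d+1$ and the upper bound $k \le tr$ becomes $t(r+\delta-1) \ge n-d+\delta$. Consequently
\[
\frac{n-d+\delta}{r+\delta-1} \;\le\; t \;<\; \frac{n-d+1}{r+\delta-1} + 1,
\]
and since the length of this half-open interval is $r/(r+\delta-1) < 1$, the integer $t$ is uniquely determined by its lower end: $t = \lceil (n-d+\delta)/(r+\delta-1)\rceil$ (existence of such $t$ is guaranteed by the assumption that the bound is achieved).

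Finally I would invoke the divisibility hypothesis $(r+\delta-1)\mid n$ and write $n = N(r+\delta-1)$, so that
\[
t \;=\; \left\lceil N + \frac{\delta - d}{r+\delta-1}\right\rceil \;=\; N + \left\lceil\frac{-(d-\delta)}{r+\delta-1}\right\rceil \;=\; N - \left\lfloor\frac{d-\delta}{r+\delta-1}\right\rfloor,
\]
using the standard identity $\lceil -x\rceil = -\lfloor x\rfloor$. Plugging this back into $n-k = d - \delta + t(\delta-1)$ yields the claimed identity. The only mildly delicate point I expect is justifying that the half-open interval above has a unique integer and that this integer is exactly $\lceil (n-d+\delta)/(r+\delta-1)\rceil$; everything else is straightforward arithmetic, and the divisibility assumption is precisely what lets the ceiling pass through $N$ cleanly to produce the floor on the right-hand side.
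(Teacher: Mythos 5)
Your proof is correct and follows essentially the same route as the paper's: both isolate $\lceil k/r\rceil$ from the equality case of the Singleton-type bound and use the divisibility of $n$ by $r+\delta-1$ to show it equals $\frac{n}{r+\delta-1}-\lfloor\frac{d-\delta}{r+\delta-1}\rfloor$, then substitute back. The only cosmetic difference is that the paper parametrizes $k=ar-b$ with $0\le b\le r-1$ and reads off the floor from the divisibility of $d-\delta-b$, whereas you pin down $t=\lceil k/r\rceil$ via a length-$<1$ interval and the identity $\lceil -x\rceil=-\lfloor x\rfloor$; the substance is the same.
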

\begin{proof} Suppose that $d=n-k+1-(\lceil\frac{k}{r}\rceil-1)(\delta-1)$. Write $k=ar-b$ for some integers $a\geq1$ and $0\leq b\leq r-1$. Substituting $k=ar-b$ back into (\ref{singleton_bound}), we can get $d=n-(ar-b)-(\delta-1)a+\delta=n-(r+\delta-1)a+b+\delta$.
This implies that $a=\frac{n}{r+\delta-1}-\frac{d-\delta-b}{r+\delta-1}$. Since $\frac{n}{r+\delta-1}$ is an integer, thus $(r+\delta-1)|d-b-\delta$. Therefore, we further have $a=\frac{n}{r+\delta-1}-\lfloor{\frac{d-\delta}{r+\delta-1}}\rfloor$.
Finally, the result follows from
$$d=n-k-(a-1)(\delta-1)+1
=n-k-\frac{n}{r+\delta-1}(\delta-1)-\left\lfloor{\frac{d-\delta}{r+\delta-1}}\right\rfloor(\delta-1)+\delta. \qedhere $$
\end{proof}
\begin{rem} Similar results are shown in \cite{guruswami2019long} and \cite{XY19} for the case $\delta=2$.
\end{rem}
\begin{rem} When $r=d-\delta$ and $(r+\delta-1)|n$, the Singleton-type bound \emph{(\ref{singleton_bound})} can't be met. Indeed, let $x$ be the least nonnegative integer satisfying $$d+x=n-k+1-(\left\lceil\frac{k}{r}\right\rceil-1)(\delta-1).$$
Since we can assume $k=ar-b$ for some integers $a\geq1$ and $0\leq b\leq r-1$, thus we have $$d+x=n-a(r+\delta-1)+b+\delta.$$
As $r=d-\delta$, it follows that $$r+x-b=n-a(r+\delta-1).$$
Then $r+x-b$ must be divisible by $r+\delta-1$. So $x=\delta-1+b$. This indicates that the minimum distance $d$ of $\mathcal{C}$ is upper bounded by \begin{equation}\label{singleton_bound2}
d\leq n-k+1-\left\lceil\frac{k}{r}\right\rceil(\delta-1).
\end{equation}
When $\mathcal{C}$ meets the bound (\ref{singleton_bound2}), we say it is optimal for this case. When $\delta=2$, such phenomenon has already appeared in \cite{gopalan2012locality}.
\end{rem}

\section{Constructions of $(r,\delta)$-LRCs with all symbol locality}\label{sec3}

In this section, we consider linear codes with all symbol $(r,\delta)$-locality and provide a general construction for optimal $[n,k,d; (r,\delta)_a]_q$-LRCs through parity-check matrix approach. Compared to the constructions in \cite{ZL20} and \cite{Zhang20}, the restrictions of the parity-check matrix in our construction are more relaxed and therefore, our construction will lead to longer codes.

\subsection{Construction A}\label{ssec31}

Let $d\geq \delta+1$, $R=r+\delta-1$ and $n=mR$. For $i\in[m]$, let $G_i=\{g_{i,1},g_{i,2},\cdots,g_{i,R}\}$ be an $R$-subset of $\mathbb{F}_q$. Then, for each $i\in [m]$, we can construct a $(d-1)\times R$ Vandermonde matrix with generating set $G_i$ of the form
$\left(\begin{array}{c}
\mathbf{U}_i\\
\mathbf{V}_i
\end{array}\right)$, where
\begin{equation*}
\mathbf{U}_i= \left(\begin{array}{cccc}
1 &1 &\cdots &1\\
g_{i,1} &g_{i,2} &\cdots &g_{i,R}\\
\vdots &\vdots &\ddots &\vdots\\
g^{\delta-2}_{i,1} &g^{\delta-2}_{i,2} &\cdots &g^{\delta-2}_{i,R}
\end{array} \right)\text{~and~}
\mathbf{V}_i= \left(\begin{array}{cccc}
g^{\delta-1}_{i,1} &g^{\delta-1}_{i,2} &\cdots &g^{\delta-1}_{i,R}\\
g^{\delta}_{i,1} &g^{\delta}_{i,2} &\cdots &g^{\delta}_{i,R}\\
\vdots &\vdots &\ddots &\vdots\\
g^{d-2}_{i,1} &g^{d-2}_{i,2} &\cdots &g^{d-2}_{i,R}
\end{array} \right).
\end{equation*}
Note that $\mathbf{U}_i$ is a Vandermonde matrix of size $(\delta-1)\times R$ and $\mathbf{V}_i$ is of size $(d-\delta)\times R$. Put
\begin{equation}\label{pcm1}
\mathbf{H}=\left(\begin{array}{cccc}
\mathbf{U}_1 &O &\cdots &O\\
O &\mathbf{U}_2 &\cdots &O\\
\vdots &\vdots &\ddots &\vdots\\
O &O &\cdots &\mathbf{U}_m\\
\mathbf{V}_1 &\mathbf{V}_2 &\cdots &\mathbf{V}_m
\end{array}\right).
\end{equation}
Let $\mathcal{C}$ be the linear code with parity-check matrix $\mathbf{H}$. Due to the structure of $\mathbf{H}$ and the property of Vandermonde matrix $\mathbf{U}_i$, it is immediate from Definition \ref{def01} that $\mathcal{C}$ has all symbol $(r,\delta)$-locality. On the other hand, $\mathcal{C}$ has dimension
\begin{equation*}
k(\mathcal{C})\geq n-(\delta-1)m-(d-\delta)=rm-(d-\delta).
\end{equation*}
When $r>d-\delta$, we have $\lceil\frac{k(\mathcal{C})}{r}\rceil\geq m$. Thus, $\mathcal{C}$ has minimum distance
\begin{equation*}
d(\mathcal{C})\leq n-k(\mathcal{C})+\delta-\lceil\frac{k(\mathcal{C})}{r}\rceil(\delta-1)\leq d.
\end{equation*}
As for $r=d-\delta$, there is still $d(\mathcal{C})\leq d$ with respect to (\ref{singleton_bound2}). Therefore, for $r\geq d-\delta$, in order to obtain an optimal $[n,k,d;(r,\delta)_a]$-LRC from the above construction, it suffices to show that the minimum distance of $\mathcal{C}$ equals $d$. More precisely, our following aim is to find $m$ $R$-subsets $G_1,G_2,\cdots,G_m$ in $\mathbb{F}_q$ such that any $d-1$ columns from the matrix $\mathbf{H}$ are linearly independent. For brevity, we refer to the $i$th block as the set of columns from $\mathbf{H}$ where $\mathbf{U}_i$ arises. So there are $m$ column blocks and each one is made up of $R$ columns.

We finish this subsection with the following two observations about columns in $\mathbf{H}$:
\begin{description}
  \item[\textbf{Obs.1}]: any $d-1$ columns in a single block are linearly independent;
  \item[\textbf{Obs.2}]: any $\delta-1$ columns from one block are linearly independent from columns belonging to other blocks.
\end{description}

\subsection{Optimal LRCs with $(r,\delta)_a$-locality from Construction A}\label{ssec32}

In this subsection, we put some sufficient conditions on the generating sets $G_1, G_2, \cdots, G_m$ in Construction A to guarantee the optimality of minimum distances w.r.t. bounds (\ref{singleton_bound}) and  (\ref{singleton_bound2}). As a warm up, we start with the construction of optimal $(r,\delta)_a$-LRCs with small minimum distance and unbounded length. It is worth noting that Zhang and Liu also proved the following result in \cite{ZL20}, for the completeness of this paper, we include the result here.
\begin{thm}\cite{ZL20}\label{thm31}
Let $\delta+1\leq d\leq2\delta$, set $G_1=G_2=\cdots=G_m$, then any $d-1$ columns of $\mathbf{H}$ are linearly independent.
\end{thm}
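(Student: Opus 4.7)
The plan is to pick any $d-1$ columns of $\mathbf{H}$, assume they satisfy a dependence relation $\sum_i c_i \mathbf{h}_{j_i}=0$, and show all $c_i=0$ by a short case analysis on how many selected columns fall into each block. For $t\in[m]$, let $s_t$ denote the number of chosen columns in block $t$, so $\sum_t s_t = d-1$.

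First I would exploit the block-diagonal structure of the top part of $\mathbf{H}$. Restricting the dependence to the rows occupied by $\mathbf{U}_t$ yields, for each $t$, a relation $\sum_{j_i\in\text{block }t}c_i\,(\mathbf{U}_t)_{j_i}=0$ involving only the chosen columns from block $t$. Since $\mathbf{U}_t$ is a $(\delta-1)\times R$ Vandermonde matrix, any $\delta-1$ of its columns are linearly independent; hence for every block $t$ with $s_t\le\delta-1$, all coefficients $c_i$ with $j_i$ in block $t$ vanish.

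Next I would use the hypothesis $d\le 2\delta$, i.e.\ $d-1\le 2\delta-1$, which forces at most one block to contain $\ge\delta$ of the selected columns. If no such block exists, every block has $s_t\le\delta-1$, and by the previous step every $c_i$ is already zero. Otherwise, let $t_0$ be the unique block with $s_{t_0}\ge\delta$; the previous step kills all coefficients outside $t_0$, so the relation collapses to $\sum_{j_i\in\text{block }t_0}c_i\,\mathbf{h}_{j_i}=0$. Since $s_{t_0}\le d-1$, Obs.1 (any $d-1$ columns within a single block are linearly independent, which is just the $(d-1)\times R$ Vandermonde property of $\bigl(\begin{smallmatrix}\mathbf{U}_{t_0}\\ \mathbf{V}_{t_0}\end{smallmatrix}\bigr)$, using $R=r+\delta-1\ge d-1$ from $r\ge d-\delta$) finishes the argument.

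There is no real obstacle; the whole proof is a counting plus Vandermonde argument, and it is worth noting that the assumption $G_1=\cdots=G_m$ is not used to establish the independence itself (any choice of generating sets would do) but rather to guarantee that the construction needs only $q\ge R$, thereby yielding $(r,\delta)_a$-LRCs of unbounded length $n=mR$ over a fixed alphabet. The key structural point is that $d\le 2\delta$ is exactly the regime in which the pigeonhole bound ``at most one heavy block'' holds, which is why going beyond $d=2\delta$ in later sections will require genuinely new ideas (sparse hypergraphs) rather than just the trivial choice $G_1=\cdots=G_m$.
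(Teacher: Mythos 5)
Your proof is correct and follows essentially the same route as the paper's: use the pigeonhole bound $d-1\le 2\delta-1$ to isolate at most one ``heavy'' block, kill the coefficients in light blocks via the block-diagonal Vandermonde structure of the $\mathbf{U}_t$ (the paper's Obs.~2), and finish with Obs.~1 in the heavy block. Your side remark that the hypothesis $G_1=\cdots=G_m$ plays no role in the independence argument itself is accurate and correctly identifies that the assumption only serves to let the construction use $q\ge r+\delta-1$ points while producing codes of unbounded length.
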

\begin{proof}
Pick any $d-1$ columns from $H$. To see whether these columns are linearly independent, it suffices to consider the case where only one block contains at least $\delta$ columns because of $\delta+1\leq d\leq2\delta$ and \textbf{Obs.2}. Then combining \textbf{Obs.1} with \textbf{Obs.2}, we can conclude that these $d-1$ columns of $\mathbf{H}$ are linearly independent.
\end{proof}
\begin{cor}\label{cor31}
Let $q\geq r+\delta-1$. Assume that $\delta+1\leq d\leq2\delta$, $r\geq d-\delta$ and $(r+\delta-1)|n$, then there exist optimal $[n,k,d;(r,\delta)_a]$-LRCs with $n=m(r+\delta-1)$ for any positive integer $m$.
\end{cor}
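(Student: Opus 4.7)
The strategy is simply to instantiate Construction~A with a single common generating set and then invoke Theorem~\ref{thm31}. Since $q\geq R=r+\delta-1$, I can pick any $R$-subset $G\subseteq \mathbb{F}_q$ and take $G_1=G_2=\cdots=G_m=G$. This produces a parity-check matrix $\mathbf{H}$ of the form (\ref{pcm1}), defining a linear code $\mathcal{C}$ of length $n=m(r+\delta-1)$. By the block-diagonal structure of the top of $\mathbf{H}$ together with the Vandermonde property of each $\mathbf{U}_i$, $\mathcal{C}$ automatically has all-symbol $(r,\delta)$-locality (this is exactly the observation recorded right after display (\ref{pcm1})).

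The main step is to pin down both the minimum distance and the dimension. For the minimum distance, Theorem~\ref{thm31}, whose hypotheses $\delta+1\leq d\leq 2\delta$ and $G_1=\cdots=G_m$ are precisely the ones I have arranged, gives that any $d-1$ columns of $\mathbf{H}$ are linearly independent, so $d(\mathcal{C})\geq d$. The matching upper bound is already supplied by the discussion following (\ref{pcm1}): the row count of $\mathbf{H}$ yields $k(\mathcal{C})\geq rm-(d-\delta)$, and feeding this into Singleton-type bound (\ref{singleton_bound}) when $r>d-\delta$, or into (\ref{singleton_bound2}) when $r=d-\delta$, forces $d(\mathcal{C})\leq d$. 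Combining the two, $d(\mathcal{C})=d$ and the relevant Singleton-type bound is met with equality. In particular, the inequalities collapse to equalities, $\mathbf{H}$ has full row rank, and $\mathcal{C}$ is an optimal $[n,k,d;(r,\delta)_a]$-LRC with $n=m(r+\delta-1)$.

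No real obstacle arises: Theorem~\ref{thm31} carries all the combinatorial weight, and the corollary is essentially its packaging as an unconditional existence result, valid for every positive integer $m$ since the construction of $G$ only requires $q\geq r+\delta-1$. The only bookkeeping worth being careful about is treating the two cases $r>d-\delta$ and $r=d-\delta$ separately, so that the correct Singleton-type bound ((\ref{singleton_bound}) versus (\ref{singleton_bound2})) is used to close the argument; the computation $\lceil k(\mathcal{C})/r\rceil=m$ in the first case is immediate from $0<d-\delta<r$, while in the boundary case $r=d-\delta$ one appeals directly to the alternative bound established in the second remark after Lemma~\ref{lem01}.
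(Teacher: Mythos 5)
Your proposal is correct and takes essentially the same route the paper intends: the corollary is stated without a separate proof precisely because it is the direct packaging of Theorem~\ref{thm31} together with the dimension and distance bookkeeping already carried out in Section~\ref{ssec31}. You identify the right ingredients --- choose a single $R$-subset $G$ (possible because $q\geq r+\delta-1$), invoke Theorem~\ref{thm31} to get $d(\mathcal{C})\geq d$, and close with the Singleton-type upper bound, using (\ref{singleton_bound}) when $r>d-\delta$ and (\ref{singleton_bound2}) when $r=d-\delta$ --- and the collapsing-inequalities observation correctly pins down $k(\mathcal{C})=rm-(d-\delta)$ and optimality.
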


For $d\geq 2\delta+1$, we have the following theorem.
\begin{thm}\label{main0}
Let $d\geq 2\delta+1$ and $r\geq d-\delta$. Suppose that for any subset $S\subseteq[m]$ with $2\leq |S|\leq\lfloor\frac{d-1}{\delta}\rfloor$, we have \begin{equation}\label{main_cond}
|\bigcup_{i\in S}G_i|\geq (r+\frac{\delta}{2}-1)|S|+\frac{\delta}{2},
\end{equation}
then any $d-1$ columns of $\mathbf{H}$ are linearly independent. As a result, the code $\mathcal{C}$ generated by Construction A in Section \ref{ssec31} is an optimal $[n,k,d;(r,\delta)_a]_q$-LRC.
\end{thm}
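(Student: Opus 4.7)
The plan is to show that any $d-1$ columns of $\mathbf{H}$ are linearly independent. Fix such a column set $T$, partition it into blocks as $T=\bigsqcup_i T_i$ with $t_i=|T_i|$, and let $(\alpha_{i,k})$ be the coefficients of a hypothetical vanishing linear combination. Because $\mathbf{U}_i$ is a Vandermonde matrix, any block with $t_i\le\delta-1$ forces its own $\alpha_{i,k}$ to vanish (which is exactly what underlies \textbf{Obs.1} and \textbf{Obs.2}), so only the blocks in $S:=\{i:t_i\ge\delta\}$ can be active. From $|S|\delta\le\sum_it_i=d-1$ one gets $|S|\le\lfloor(d-1)/\delta\rfloor$, and the sub-cases $|S|\le 1$ follow directly from \textbf{Obs.1}. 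So the substantive case is $2\le|S|\le\lfloor(d-1)/\delta\rfloor$.

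For such $S$, the active dependence is recast as a polynomial identity. Write $\Phi_i(x)=\prod_{k\in T_i}(x-g_{i,k})$. By partial fractions, the block-$i$ top constraint is equivalent to the existence of $Q_i\in\mathbb{F}_q[x]$ with $\deg Q_i\le t_i-\delta$ such that $\alpha_{i,k}=Q_i(g_{i,k})/\Phi_i'(g_{i,k})$ and $\sum_k\alpha_{i,k}/(x-g_{i,k})=Q_i(x)/\Phi_i(x)$. Reading off Laurent coefficients at $\infty$, the bottom constraints then amount to $\sum_{i\in S}Q_i(x)/\Phi_i(x)=O(x^{-d})$. Since $D:=\bigcup_{i\in S}T_i$ has size at most $\sum t_i\le d-1$, clearing denominators with $\Pi(x):=\prod_{i\in S}\Phi_i(x)$ yields the polynomial identity
\[
\sum_{i\in S}Q_i(x)\,\Pi_i(x)=0,\qquad \Pi_i(x):=\prod_{i'\in S,\,i'\ne i}\Phi_{i'}(x).
\]

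The core of the proof is an iterative elimination. For any subset $S'\subseteq S$ with $|S'|\ge 2$, condition \eqref{main_cond} applied to $S'$ together with $T_i\subseteq G_i$ gives the overcount bound
\[
N_{S'}-|D_{S'}|\;\le\;|S'|R-\Bigl|\bigcup_{i\in S'}G_i\Bigr|\;\le\;\frac{(|S'|-1)\delta}{2},
\]
where $N_{S'}=\sum_{i\in S'}t_i$, $D_{S'}=\bigcup_{i\in S'}T_i$ and $R=r+\delta-1$. Let $\sigma_i^{(S')}$ count the $g\in T_i$ that also lie in some $T_{i'}$ with $i'\in S'\setminus\{i\}$. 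A direct double count gives $\sum_{i\in S'}\sigma_i^{(S')}\le 2(N_{S'}-|D_{S'}|)\le(|S'|-1)\delta$, so by pigeonhole some $i_0\in S'$ has $\sigma_{i_0}^{(S')}\le\delta-1$, meaning $T_{i_0}$ contains at least $t_{i_0}-\delta+1$ generators unused by the other blocks in $S'$. Evaluating the current polynomial identity at each such $g$ annihilates every $\Pi_i$ with $i\ne i_0$ while keeping $\Pi_{i_0}(g)\ne 0$, forcing $Q_{i_0}(g)=0$ and hence (by counting zeros) $Q_{i_0}\equiv 0$. Factoring $\Phi_{i_0}$ out of every remaining $\Pi_i$ reduces to the analogous identity on $S'\setminus\{i_0\}$, and we iterate until $|S_t|=1$, at which point the empty-product $\Pi$ is $1$ and the last $Q$ must be zero.

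The main obstacle is the precise calibration in the third paragraph: the constant $(r+\delta/2-1)|S|+\delta/2$ in \eqref{main_cond} is exactly what makes $\sum_i\sigma_i^{(S')}<|S'|\delta$ hold, which is what allows a block to be eliminated at \emph{every} step and not merely on average. A companion subtlety is that \eqref{main_cond} must apply to all subsets $S_t$ arising along the iteration; this is automatic because the hypothesis is imposed on every subset of cardinality between $2$ and $\lfloor(d-1)/\delta\rfloor$, and $|S_t|$ only decreases. Once every $Q_i\equiv 0$ is deduced, all $\alpha_{i,k}=Q_i(g_{i,k})/\Phi_i'(g_{i,k})=0$, giving the desired independence and hence the optimality of $\mathcal{C}$ with respect to the Singleton-type bound.
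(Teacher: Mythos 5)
Your proof is correct, and it takes a genuinely different route from the paper's. The paper works entirely at the level of the parity-check matrix: it splits the generators of each active block into ``new'' ones $F_i^1$ and ``repeated'' ones $F_i\setminus F_i^1$, performs explicit elementary row and column operations to collapse the new generators into a single $(d-1)\times|F|$ Vandermonde block (whose full rank kills the corresponding coefficients), and finally derives a contradiction from a \emph{single} double-counting estimate of $\sum_i|F_i\setminus F_i^1|$ using \eqref{main_cond} once or twice on specific index sets. You instead recast the top constraints via partial fractions into bounded-degree numerators $Q_i$ (the bound $\deg Q_i\le t_i-\delta$ being exactly the Laurent-at-infinity reading of the $\mathbf{U}_i$ block), turn the bottom constraints into the polynomial identity $\sum_{i\in S}Q_i\Pi_i=0$ by degree considerations, and then run an \emph{iterative} elimination: at each step \eqref{main_cond} on the current $S'$ together with the double count $\sum_i\sigma_i^{(S')}\le 2(N_{S'}-|D_{S'}|)\le(|S'|-1)\delta$ produces one block $i_0$ with at least $t_{i_0}-\delta+1$ private generators, enough to annihilate $Q_{i_0}$, and $\Phi_{i_0}$ factors out of the remaining identity. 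I checked the supporting details: the overcount monotonicity $N_{S'}-|D_{S'}|\le|S'|R-|\bigcup G_i|$ follows from $T_i\subseteq G_i$; the double count uses $m\le 2(m-1)$ for $m\ge 2$; $t_{i_0}\ge\delta$ guarantees at least one private generator; and the condition \eqref{main_cond} is available for every $S'$ arising in the iteration since $|S'|$ only shrinks below $\lfloor(d-1)/\delta\rfloor$. Your approach is more conceptual and modular, replacing the paper's global matrix surgery with a local degree argument and trading the single terminal contradiction for repeated applications of the same pigeonhole step; the paper's version has the virtue that its matrix manipulations transfer almost verbatim to the information-locality setting of Theorem~\ref{main1}, which is why the paper reuses them there.
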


\begin{proof}
For $1\leq i\leq m$ and $1\leq j\leq R$, let $\mathbf{h}_{i,j}$ be the $j_{th}$ column from the $i_{th}$ block of $\mathbf{H}$, i.e.,
\begin{equation*}
\begin{array}{c}
\mathbf{h}_{i,j}=\begin{array}{c}
(0,0,\ldots,0,1,g_{i,j},\ldots,g_{i,j}^{\delta-2},0,0,\ldots,0,g_{i,j}^{\delta-1},g_{i,j}^{\delta},\ldots,g_{i,j}^{d-2})^{T}.
\end{array}
\\[-8pt]
 \begin{array}{cllllc}
 &~\underbrace{\rule{14mm}{0mm}}_{(i-1)(\delta-1)} & ~~~~~~~~~~~~~~~~ & \underbrace{\rule{14mm}{0mm}}_{(m-i)(\delta-1)} & ~~~~~~~~~~~~~~~~ &
 \end{array}
 \end{array}
\end{equation*}
Assume that there exist $d-1$ columns $\{\mathbf{h}_{i_1,j_1},\mathbf{h}_{i_2,j_2},\ldots,\mathbf{h}_{i_{d-1},j_{d-1}}\}$ in $\mathbf{H}$ that are linearly dependent. Then, we have
\begin{equation}\label{eq001}
\sum_{l=1}^{d-1}\lambda_{l}\mathbf{h}_{i_l,j_l}=\mathbf{0}.
\end{equation}
For $1\leq i\leq m$, denote $E_{i}=\{j_l: \lambda_l\neq 0~\text{and}~i_l=i \}$. Clearly, we have $\sum_{i\in [m]}|E_i|\leq d-1$. According to the structure of $\mathbf{H}$, we know that either $|E_i|=0$ or $|E_i|\geq \delta$. Otherwise, one shall get $|E_i|$ distinct columns linearly dependent in $\mathbf{U}_i$, which contradicts to the property of $\mathbf{U}_i$.

W.l.o.g., assume that $\{i:E_i\neq \emptyset\}=[t]$ and for each $i\in [t]$, $|E_i|=s_i$. Clearly, we have $t\leq\frac{d-1}{\delta}$. For each $i\in [t]$, denote $F_{i}=\{g_{i,j}\in G_{i}: j\in E_i\}$ as the generating set of columns corresponding to $E_i$ and $F=\bigcup_{i\in [t]}F_i$. Denote $\mathbf{H}'$ as the $(m(\delta-1)+d-\delta)\times (\sum_{i\in [t]}s_i)$ submatrix of $\mathbf{H}$ consisting of columns indexed by $\bigcup_{i\in [t]}\{(i,j): j\in E_i\}$. Write $F_i=\{a_{i,1},\ldots,a_{i,s_i}\}$, then,
$\mathbf{H}'$ has the following form:
\begin{equation}\label{H'0}
\mathbf{H}'=\left(\begin{array}{cccc}
\mathbf{A}_1 &O &\cdots &O\\
O &\mathbf{A}_2 &\cdots &O\\
\vdots &\vdots &\ddots &\vdots\\
O &O &\cdots & \mathbf{A}_t\\
O &O &\cdots & O\\
\vdots &\vdots &\ddots &\vdots\\
\mathbf{B}_1 & \mathbf{B}_2 & \cdots & \mathbf{B}_t
\end{array}\right),
\end{equation}
where
\begin{equation*}
\mathbf{A}_i=\left(\begin{array}{cccc}
1 & 1 & \cdots & 1 \\
a_{i,1} & a_{i,2} & \cdots & a_{i,s_{i}} \\
\vdots & \vdots & \ddots & \vdots \\
a_{i,1}^{\delta-2} & a_{i,2}^{\delta-2} & \cdots & a_{i,s_{i}}^{\delta-2}
\end{array}\right)
\text{~and~}
\mathbf{B}_i=\left(\begin{array}{cccc}
a_{i,1}^{\delta-1} & a_{i,2}^{\delta-1} & \cdots & a_{i,s_{i}}^{\delta-1} \\
a_{i,1}^{\delta} & a_{i,2}^{\delta} & \cdots & a_{i,s_{i}}^{\delta} \\
\vdots & \vdots & \ddots & \vdots \\
a_{i,1}^{d-2} & a_{i,2}^{d-2} & \cdots & a_{i,s_{i}}^{d-2}
\end{array}\right).
\end{equation*}

Denote $F_1^{1}=F_1$ and $F_i^{1}=F_{i}\setminus \bigcup_{j=1}^{i-1}F_j$ for $2\leq i\leq t$. Then, we have $F=\sqcup_{i=1}^{t}F_{i}^{1}$. By permutating the columns of $\mathbf{H}'$, we can obtain a matrix of the following form:
\begin{equation*}\label{H_20}
\mathbf{H}_2=(\mathbf{H}_L ||~\mathbf{H}_R)=\left(\begin{array}{ccccc||cccc}
\mathbf{A}_1& O & O & \cdots & O & O & O & \cdots & O\\
O & \mathbf{A}_2^{1} & O & \cdots & O & \mathbf{A}_2^{2} & O & \cdots & O\\
O & O & \mathbf{A}_3^{1} & \cdots & O & O & \mathbf{A}_3^{2} & \cdots & O\\
\vdots & \vdots & \vdots & \ddots & \vdots & \vdots & \vdots & \ddots & \vdots\\
O & O & O & \cdots & \mathbf{A}_t^{1} & O & O & \cdots & \mathbf{A}_t^{2}\\
O & O & O & \cdots & O & O & O & \cdots & O \\
\vdots & \vdots & \vdots & \ddots & \vdots & \vdots & \vdots & \ddots & \vdots \\
\mathbf{B}_1 & \mathbf{B}_2^{1} & \mathbf{B}_3^{1} & \cdots & \mathbf{B}_t^{1} & \mathbf{B}_2^{2} & \mathbf{B}_3^{2} & \cdots & \mathbf{B}_t^{2}
\end{array}\right),
\end{equation*}
where for $2\leq i\leq t$, $\mathbf{A}_i^{1}=\mathbf{A}_i|_{F_{i}^1}$, $\mathbf{A}_i^{2}=\mathbf{A}_i|_{F_{i}\setminus F_{i}^1}$ and $\mathbf{B}_i^{1}=\mathbf{B}_i|_{F_{i}^1}$, $\mathbf{B}_i^{2}=\mathbf{B}_i|_{F_{i}\setminus F_{i}^1}$. \footnote{Given a Vandermonde matrix $\mathbf{A}$ with generating set $G$, for simplicity, we denote $\mathbf{A}|_{F}$ as the matrix obtained by restricting $\mathbf{A}$ to the columns corresponding to those elements in $F\subseteq G$.} Similarly, denote $E_{i}^{1}=\{j\in E_i: g_{i,j}\in F_{i}^{1}\}$ as the index set of columns generated by $F_{i}^{1}$. Then, (\ref{eq001}) can be written as
\begin{equation}\label{eq002}
\sum_{i=1}^{t}\sum_{j\in E_i^{1}}\lambda_{i,j}\mathbf{h}_{i,j}+\sum_{i=1}^{t}\sum_{j\in E_i\setminus E_{i}^{1}}\lambda_{i,j}\mathbf{h}_{i,j}=\mathbf{0},
\end{equation}
where $\lambda_{i,j}\neq 0$ is the relabeled $\lambda_l$ for $(i,j)=(i_l,j_l)$.

Note that for $2\leq i\leq t$ and each column in $\mathbf{B}_{i}^{2}$, its generating element in $F$ has already appeared in $F_{i'}^{1}$ for some $1\leq i'<i$. Therefore, we can do the following elementary row and column operations on $\mathbf{H}_2$:
\begin{itemize}
  \item First, for each $2\leq i\leq t$ and each column $\mathbf{h}_{i,j}$ in $(O~\cdots~O~(\mathbf{A}_i^{2})^{T}~O~\cdots~(\mathbf{B}_{i}^{2})^{T})^{T}$ of $\mathbf{H}_R$, subtract the column $\mathbf{h}_{i',j'}$ in $(O~\cdots~O~(\mathbf{A}_{i'}^{1})^{T}~O~\cdots~(\mathbf{B}_{i'}^{1})^{T})^{T}$ of $\mathbf{H}_L$ from it, where $(i',j')$ satisfies $i'<i$, $j'\in E_{i'}$ and $g_{i,j}=g_{i',j'}\in F_{i'}^{1}$. This leads to a matrix equivalent to $\mathbf{H}_2$:
      \begin{equation*}\label{H_2'0}
      \mathbf{H}_2'=(\mathbf{H}_L ||~\mathbf{H}_R')=\left(\begin{array}{cccc||cccc}
      \mathbf{A}_1& O & \cdots & O & -\mathbf{A}_{2,1}^{2} & -\mathbf{A}_{3,1}^{2} & \cdots & -\mathbf{A}_{t,1}^{2}\\
      O & \mathbf{A}_2^{1} & \cdots & O & \mathbf{A}_2^{2} & -\mathbf{A}_{3,2}^{2} & \cdots & -\mathbf{A}_{t,2}^{2}\\
      O & O & \cdots & O & O & \mathbf{A}_3^{2} & \cdots & -\mathbf{A}_{t,3}^{2}\\
      \vdots & \vdots & \ddots & \vdots & \vdots & \vdots & \ddots & \vdots\\
      O & O & \cdots & \mathbf{A}_t^{1} & O & O & \cdots & \mathbf{A}_t^{2}\\
      O & O & \cdots & O & O & O & \cdots & O \\
      \vdots & \vdots & \ddots & \vdots & \vdots & \vdots & \ddots & \vdots \\
      \mathbf{B}_1 & \mathbf{B}_2^{1} & \cdots & \mathbf{B}_t^{1} & O & O & \cdots & O
      \end{array}\right),
      \end{equation*}
      where for $1\leq j<i\leq t$, the $l_{th}$ column of $\mathbf{A}_{i,j}^{2}$ is identical to the $l_{th}$ column of $\mathbf{A}_i^{2}$ if the corresponding generating element appears in $F_{j}^{1}\cap F_{i}$ and is identical to the zero vector, otherwise. Clearly, we have $\mathbf{A}_{2,1}^{2}=\mathbf{A}_2^{2}$ and $\sum_{j=1}^{i-1}\mathbf{A}_{i,j}^{2}=\mathbf{A}_i^{2}$. Moreover, (\ref{eq002}) turns into:
      \begin{equation}\label{eq003}
      \sum_{i=1}^{t}\sum_{j\in E_i^{1}}\lambda'_{i,j}\mathbf{h}_{i,j}+\sum_{i=1}^{t}\sum_{j\in E_i\setminus E_{i}^{1}}\lambda_{i,j}\mathbf{h}'_{i,j}=\mathbf{0},
      \end{equation}
      where for $(i,j)\in [t]\times E_{i}^{1}$,
      \begin{equation*}
      \lambda_{i,j}'=\lambda_{i,j}+\sum_{i'>i}\sum_{\substack{j'\in E_{i'}:\\ g_{i',j'}=g_{i,j}}}\lambda_{i',j'};
      \end{equation*}
      and for $(i,j)\in [t]\times E_{i}\setminus E_{i}^{1}$,
      \begin{equation*}
      \mathbf{h}_{i,j}'=\mathbf{h}_{i,j}-\mathbf{h}_{i',j'}\in \mathbf{H}_{R}'
      \end{equation*}
      for some $(i',j')$ satisfying $i'<i$, $j'\in E_{i'}$ and $g_{i,j}=g_{i',j'}\in F_{i'}^{1}$.
  \item Second, for each $1\leq i\leq \delta-1$, add the $(i+j(\delta-1))_{th}$ row to the $i_{th}$ row for all $1\leq j\leq t-1$. Then, we have:
      \begin{equation}\label{H_2''0}
      \mathbf{H}_2''=(\mathbf{H}_L'' ||~\mathbf{H}_R'')=\left(\begin{array}{cccc||cccc}
      \mathbf{A}_{1} & \mathbf{A}_{2}^{1} & \cdots & \mathbf{A}_{t}^{1} & O & O & \cdots & O\\
      O & \mathbf{A}_2^{1} & \cdots & O & \mathbf{A}_2^{2} & -\mathbf{A}_{3,2}^{2} & \cdots & -\mathbf{A}_{t,2}^{2}\\
      O & O & \cdots & O & O & \mathbf{A}_3^{2} & \cdots & -\mathbf{A}_{t,3}^{2}\\
      \vdots & \vdots & \ddots & \vdots & \vdots & \vdots & \ddots & \vdots\\
      O & O & \cdots & \mathbf{A}_t^{1} & O & O & \cdots & \mathbf{A}_t^{2}\\
      O & O & \cdots & O & O & O & \cdots & O \\
      \vdots & \vdots & \ddots & \vdots & \vdots & \vdots & \ddots & \vdots \\
      \mathbf{B}_1 & \mathbf{B}_2^{1} & \cdots & \mathbf{B}_t^{1} & O & O & \cdots & O
      \end{array}\right).
      \end{equation}
      Since elementary row operations don't affect linear relations among columns, therefore for $\mathbf{H}_2''$, (\ref{eq003}) turns into
      \begin{equation}\label{eq004}
      \sum_{i=1}^{t}\sum_{j\in E_i^{1}}\lambda'_{i,j}\mathbf{h}_{i,j}''+\sum_{i=1}^{t}\sum_{j\in E_i\setminus E_{i}^{1}}\lambda_{i,j}\mathbf{h}_{i,j}''=\mathbf{0},
      \end{equation}
      where $\mathbf{h}_{i,j}''$s are the new columns in $\mathbf{H}_2''$: for $(i,j)\in [t]\times E_{i}^{1}$,
      \begin{equation*}
      \mathbf{h}_{i,j}''=\begin{cases}
      \mathbf{h}_{i,j},~\text{if}~i=1;\\
      \mathbf{h}_{i,j}+(1,g_{i,j},\ldots,g_{i,j}^{\delta-2},0,\ldots,0)^{T},~\text{otherwise};
      \end{cases}
      \end{equation*}
      and for $(i,j)\in [t]\times E_{i}\setminus E_{i}^{1}$,
      \begin{equation*}
      \mathbf{h}_{i,j}''=\begin{cases}
      \mathbf{h}_{i,j}'+(1,g_{i,j},\ldots,g_{i,j}^{\delta-2},0,\ldots,0)^{T},~\text{if}~g_{i,j}\in F_{1}^{1};\\
      \mathbf{h}_{i,j}',~\text{otherwise}.
      \end{cases}
      \end{equation*}
\end{itemize}

Consider the following submatrix consisting of the first $\delta-1$ rows and the last $d-\delta$ rows of $\mathbf{H}_{L}''$:
\begin{equation*}\label{sub_matrix10}
\mathbf{H}_0=\left(\begin{array}{cccc}
\mathbf{A}_1 & \mathbf{A}_2^{1} & \cdots & \mathbf{A}_t^{1}\\
\mathbf{B}_1 & \mathbf{B}_2^{1} & \cdots & \mathbf{B}_t^{1}
\end{array}\right).
\end{equation*}
Clearly, $\mathbf{H}_{0}$ is a $(d-1)\times |F|$ Vandermonde matrix and the construction of $\mathbf{A}_{i}^{1}$ guarantees that columns in $\mathbf{H}_0$ are pairwise distinct. Since $|F|\leq d-1$, it follows that columns in $\mathbf{H}_0$ are linearly independent. On the other hand, according to the structure of $\mathbf{h}_{i,j}''$ for $(i,j)\in [t]\times E_{i}^{1}$, (\ref{eq004}) indicates that
\begin{equation*}
\sum_{i=1}^{t}\sum_{j\in E_i^{1}}\lambda'_{i,j}(1,g_{i,j},\ldots,g_{i,j}^{\delta-2},g_{i,j}^{\delta-1},\ldots,g_{i,j}^{d-2})^{T}=\mathbf{0}.
\end{equation*}
Therefore, we have $\lambda'_{i,j}=0$ for every $(i,j)\in [t]\times E_{i}^{1}$. This leads to
\begin{equation}\label{eq005}
\sum_{i=1}^{t}\sum_{j\in E_i\setminus E_{i}^{1}}\lambda_{i,j}\mathbf{h}_{i,j}''=\mathbf{0},
\end{equation}
where $\lambda_{i,j}$s are the original non-zero coefficients in (\ref{eq002}).

Now, in the following context, based on (\ref{eq005}), we shall derive a contradiction by estimating $\sum_{i=2}^{t}|F_i\setminus F_{i}^{1}|$.

For $2\leq i\leq t$ with $E_i\neq E_{i}^{1}$ and $1\leq l\leq |E_i\setminus E_{i}^{1}|$, let $\mathbf{h}_{i,j_l}''$ be the $l_{th}$ column in
\begin{equation*}
(O~(-\mathbf{A}_{i,2}^{2})^{T}~\cdots~(-\mathbf{A}_{i,i-1}^{2})^{T}~(-\mathbf{A}_{i}^{2})^{T}~O~\cdots~O)^{T},
\end{equation*}
i.e., the $(i-1)_{th}$ block of $\mathbf{H}_R''$. For simplicity of presentation, we rewrite (\ref{eq005}) in the following form:
\begin{equation}\label{eq006}
\left(\begin{array}{cccc}
\mathbf{A}_2^{2} & -\mathbf{A}_{3,2}^{2} & \cdots & -\mathbf{A}_{t,2}^{2}\\
O & \mathbf{A}_3^{2} & \cdots & -\mathbf{A}_{t,3}^{2}\\
\vdots & \vdots & \ddots & \vdots\\
O & O & \cdots & \mathbf{A}_t^{2}
\end{array}\right)\cdot
\left(\begin{array}{c}
\mathbf{v}_{2}^{T}\\
\mathbf{v}_{3}^{T} \\
\vdots\\
\mathbf{v}_{t}^{T}
\end{array}\right)
=\mathbf{0},
\end{equation}
where $\mathbf{v}_i=(\mu_{i,1},\mu_{i,2},\ldots,\mu_{i,|E_i\setminus E_{i}^1|})$ and $\mu_{i,l}=\lambda_{i,j_l}\neq 0$. Given $2\leq i'\leq t$, for $1\leq i\leq i'-1$, define
\begin{equation*}
\mathbf{v}_{i',i}(l)=\begin{cases}
\mathbf{v}_{i'}(l),~\text{if the $l_{th}$ column in $\mathbf{A}_{i',i}^{2}$ is a non-zero vector};\\
0, ~\text{otherwise}.
\end{cases}
\end{equation*}
Since $\sum_{i=1}^{i'-1}\mathbf{A}_{i',i}^{2}=\mathbf{A}_{i'}^{2}$, thus we also have $\sum_{i=1}^{i'-1}\mathbf{v}_{i',i}=\mathbf{v}_{i'}$.
With the help of this observation, (\ref{eq006}) is actually the following system of equations:
\begin{equation}\label{eq007}
\mathbf{A}_i^{2}\cdot \mathbf{v}_{i}^{T}-\sum_{i'>i}\mathbf{A}_{i',i}^{2}\cdot \mathbf{v}_{i',i}^{T}=\mathbf{0},~2\leq i\leq t.
\end{equation}
According to the constructions of $\mathbf{A}_{i}^{2}$ and $\mathbf{A}_{i',i}^{2}$, columns in $\mathbf{A}_{i}^{2}$ are distinct from those columns in $\mathbf{A}_{i',i}^{2}$ for all $i'>i$. Note that $\mu_{i,j}\neq {0}$ for every $2\leq i\leq t$ and $1\leq j\leq |E_i\setminus E_{i}^1|$. Therefore, despite the fact that there might be identical columns in different $\mathbf{A}_{i',i}^{2}$s, (\ref{eq007}) and the property of Vandermonde matrix force that
\begin{equation*}\label{eq008}
\omega(\mathbf{v}_{i})+\sum_{i<i'\leq t}\omega(\mathbf{v}_{i',i})\geq \delta,
\end{equation*}
for every $2\leq i\leq t$. Therefore, we further have
\begin{equation}\label{eq009}
\sum_{i=2}^{t}(\omega(\mathbf{v}_{i})+\sum_{i<i'\leq t}\omega(\mathbf{v}_{i',i}))\geq (t-1)\delta.
\end{equation}
Denote $\mathbf{v}=(\mathbf{v}_{2},\mathbf{v}_{3},\ldots,\mathbf{v}_{t})$. Note that $\sum_{i=2}^{t}|F_i\setminus F_{i}^{1}|=\omega(\mathbf{v})$ and the LHS of (\ref{eq009}) is actually $2\omega(\mathbf{v})-\sum_{i=2}^{t}\omega(\mathbf{v}_{i,1})$, thus we have
\begin{equation*}\label{eq010}
\sum_{i=2}^{t}|F_i\setminus F_{i}^{1}|\geq\frac{\delta}{2}(t-1)+\frac{\sum_{i=2}^{t}\omega(\mathbf{v}_{i,1})}{2}.
\end{equation*}

On the other hand, for each $g_{i,j}\in F$, let $c(g_{i,j})=|\{i'\in[t]: g_{i,j}\in F_{i'}\setminus F_{i'}^{1}\}|$. Through a simple double counting argument, we have
\begin{equation*}
\sum_{g_{i,j}\in F}c(g_{i,j})=\sum_{i=2}^{t}|F_i\setminus F_{i}^{1}|.
\end{equation*}
\begin{itemize}
  \item When $\sum_{i=2}^{t}\omega(\mathbf{v}_{i,1})=0$, we have $\omega(\mathbf{v}_{i,1})=0$ for each $2\leq i\leq t$. This indicates that $(F_{i}\setminus F_{i}^{1})\cap F_{1}=\emptyset$ for each $2\leq i\leq t$, which further leads to $F_1\cap \bigcup_{i=2}^{t}F_i=\emptyset$. Since $F_i\subseteq G_i$ for each $i\in [t]$, thus, we have
      \begin{equation*}
      |\bigcup_{i=2}^{t}G_i|\leq \sum_{i=2}^{t}|G_i|-\sum_{g_{i,j}\in \bigcup_{i=2}^{t}F_i}c(g_{i,j})=(r+\delta-1)(t-1)-\sum_{i=2}^{t}|F_i\setminus F_{i}^{1}|.
      \end{equation*}
      Combined with $|\bigcup_{i=2}^{t}G_l|\geq (r+\frac{\delta}{2}-1)(t-1)+\frac{\delta}{2}$, this leads to $\sum_{i=2}^{t}|F_i\setminus F_{i}^{1}|\leq \frac{\delta}{2}(t-2)$, a contradiction.
  \item When $\sum_{i=2}^{t}\omega(\mathbf{v}_{i,1})>0$, consider $\bigcup_{i\in [t]}G_i$, we have
      \begin{equation*}
      |\bigcup_{i\in [t]}G_i|\leq \sum_{i\in [t]}|G_i|-\sum_{g_{i,j}\in F}c(g_{i,j})=(r+\delta-1)t-\sum_{i=2}^{t}|F_i\setminus F_{i}^{1}|.
      \end{equation*}
      Combined with $|\bigcup_{i\in [t]}G_l|\geq (r+\frac{\delta}{2}-1)t+\frac{\delta}{2}$, this leads to $\sum_{i=2}^{t}|F_i\setminus F_{i}^{1}|\leq \frac{\delta}{2}(t-1)$, a contradiction.
\end{itemize}

Therefore, any $d-1$ columns are linearly independent. This completes the proof of Theorem \ref{main0}.
\end{proof}

\begin{rem}\label{rem1}
\begin{itemize}
  \item[(i)] Theorem \ref{main0} can be viewed as a generalization of Theorem 3.1 in \cite{XY19}, when we take $\delta=2$, the sufficient part of Theorem 3.1 in \cite{XY19} follows from Theorem \ref{main0}.
  \item[(ii)] In \cite{Zhang20}, the author proved a similar result under the condition that
  \begin{equation}\label{cond_zhang}
  |\bigcup_{i\in S}G_i|\geq (r+\delta-2)|S|+1,
  \end{equation}
  for any $S\subseteq[m]$ with $2\leq |S|\leq\lfloor\frac{d-1}{\delta}\rfloor$. Compared to this condition, (\ref{main_cond}) is more relaxed and weakens the restriction of intersections among different repair groups.
\end{itemize}
\end{rem}

\section{Constructions of $(r,\delta)$-LRCs with information locality}\label{sec4}

In this section, we consider linear codes with information $(r,\delta)$-locality. Similarly, through parity-check matrix approach, we provide a general construction for optimal $[n,k,d; (r,\delta)_i]$-LRCs. 

\subsection{Construction B}\label{ssec41}

Let $1\leq v\leq r$, $R=r+\delta-1$ and $n=(l+1)R+h+v-r$ with $h\geq 0$. Let $G_{l+2}=\{g_{l+2,1},g_{l+2,2},\ldots,g_{l+2,h}\}$ be an $h$-subset of $\mathbb{F}_q$, $G_i=\{g_{i,1},g_{i,2},\ldots,g_{i,R}\}$ for $1\leq i\leq l$ and $G_{l+1}=\{g_{l+1,1},g_{l+1,2},\ldots,g_{l+1,v+\delta-1}\}$ be other $l+1$ subsets of $\mathbb{F}_q\setminus G_{l+2}$. Define $f(x)=\prod_{i=1}^{h}(x-g_{l+2,i})$ and consider the following $((l+1)(\delta-1)+h)\times n$ matrix:
\begin{equation}\label{2parity_check_matrix}
\mathbf{H}=\left(\begin{array}{ccccc}
\mathbf{U}_1 &O &\cdots &O & O\\
O &\mathbf{U}_2 &\cdots &O & O\\
\vdots &\vdots &\ddots &\vdots & \vdots\\
O &O &\cdots &\mathbf{U}_{l+1} & O\\
\mathbf{V}_1 &\mathbf{V}_2 &\cdots &\mathbf{V}_{l+1} & \mathbf{V}_{l+2}
\end{array}\right),
\end{equation}
where for $1\leq i\leq l+1$,
\begin{equation*}\label{2pcm1}
\mathbf{U}_i= \left(\begin{array}{cccc}
f(g_{i,1}) &f(g_{i,2}) &\cdots &f(g_{i,|G_{i}|})\\
g_{i,1}f(g_{i,1}) &g_{i,2}f(g_{i,2}) &\cdots &g_{i,|G_{i}|}f(g_{i,|G_{i}|})\\
\vdots &\vdots &\ddots &\vdots\\
g^{\delta-2}_{i,1}f(g_{i,1}) &g^{\delta-2}_{i,2}f(g_{i,2}) &\cdots &g^{\delta-2}_{i,|G_{i}|}f(g_{i,|G_{i}|})
\end{array} \right),~
\mathbf{V}_i= \left(\begin{array}{cccc}
1 &1 &\cdots &1\\
g_{i,1} &g_{i,2} &\cdots &g_{i,|G_{i}|}\\
\vdots &\vdots &\ddots &\vdots\\
g^{h-1}_{i,1} &g^{h-1}_{i,2} &\cdots &g^{h-1}_{i,|G_{i}|}
\end{array} \right)
\end{equation*}
and
\begin{equation*}\label{2pcm2}
\mathbf{V}_{l+2}= \left(\begin{array}{cccc}
1 &1 &\cdots &1\\
g_{l+2,1} &g_{l+2,2} &\cdots &g_{l+2,h}\\
\vdots &\vdots &\ddots &\vdots\\
g^{h-1}_{l+2,1} &g^{h-1}_{l+2,2} &\cdots &g^{h-1}_{l+2,h}
\end{array} \right).
\end{equation*}
Let $\mathcal{C}$ be the $[n,k]$ linear code with parity-check matrix $\mathbf{H}$. Note that for each $1\leq i\leq l$, $\mathbf{U}_i$ is a $(\delta-1)\times R$ matrix with rank $\delta-1$ and $U_{l+1}$ is a $(\delta-1)\times(v+\delta-1)$ matrix with rank $\delta-1$. Therefore, $U_i$s can be viewed as parity-check matrices of generalized Reed-Solomon codes which guarantee that for $1\leq i\leq n-h$, each code symbol $c_i$ has $(r,\delta)$-locality. On the other hand, $\mathcal{C}$ has dimension
\begin{equation*}
k(\mathcal{C})\geq n-(\delta-1)(l+1)-h=lr+v.
\end{equation*}
Since $1\leq v\leq r$, we have $\lceil\frac{k(\mathcal{C})}{r}\rceil\geq l+1$. Thus, $\mathcal{C}$ has minimum distance
\begin{equation}\label{mindis}
d(\mathcal{C})\leq n-k(\mathcal{C})+\delta-\lceil\frac{k(\mathcal{C})}{r}\rceil(\delta-1)\leq h+\delta.
\end{equation}
Therefore, in order to obtain an optimal LRC with $(r,\delta)_i$-locality from the above construction, it suffices to show that the minimum distance of $\mathcal{C}$ equals to $h+\delta$. The same as Section III, our following aim is to find $l+2$ subsets $G_1,G_2,\cdots,G_{l+2}$ in $\mathbb{F}_q$ such that any $h+\delta-1$ columns from the matrix $\mathbf{H}$ are linearly independent.

\subsection{Optimal LRCs with $(r,\delta)_i$-locality from Construction B}\label{ssec42}

In this subsection, sufficient conditions on generating sets $G_1, G_2, \cdots, G_{l+2}$ in Construction B are discussed to guarantee the optimality of the minimum distance. Actually, as we shall see later, $\mathcal{C}$ can recover more than $h+\delta-1$ erasures under proper restrictions on $G_i$s.

For convenience, we use the evaluation points (instead of the indices of code symbols) to denote erasure patterns. Denote $\mathcal{E}=\{E_1,\ldots,E_{l+2}\}$ as an erasure pattern, where $E_i\subseteq G_i$ corresponding to the set of erasure points in $G_i$, $1\leq i\leq l+2$.

\begin{thm}\label{main1}
Let $\mathcal{C}$ be the linear code with parity-check matrix $\mathbf{H}$ from construction B. Let $\mathcal{E}=\{E_1,\ldots,E_{l+2}\}$ be an erasure pattern with $E_i\subseteq G_i$ for $1\leq i\leq l+2$. Denote $S=\{i\in [l+1]: |E_i|\geq \delta\}$. If the erasure pattern $\mathcal{E}$ satisfies
\begin{equation}\label{ineq_erasurepattern}
|\bigcup_{i\in S}E_i|+|E_{l+2}|\leq h+\delta-1
\end{equation}
and
\begin{equation}\label{ineq_evaluationpoint}
|\bigcup_{i\in S}G_i|\geq
\begin{cases}
(r+\frac{\delta}{2}-1)|S|+\frac{\delta}{2},~\text{if } l+1\notin S;\\
(r+\frac{\delta}{2}-1)|S|+\frac{\delta}{2}+v-r,~\text{otherwise},\\
\end{cases}
\end{equation}
then the erasure pattern $\mathcal{E}$ can be recovered.
\end{thm}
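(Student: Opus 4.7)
The plan is to adapt the column-operation/row-operation argument from the proof of Theorem \ref{main0}, with two new ingredients tailored to Construction B: the extra block of columns indexed by $E_{l+2}$ (which are supported only in the bottom $h$ $\mathbf{V}$-rows because $f$ annihilates $G_{l+2}$), and the nonzero scaling $f(g_{i,j})$ attached to every column of $\mathbf{U}_i$ for $i\in[l+1]$. Recoverability of $\mathcal{E}$ is equivalent to linear independence of the corresponding columns of $\mathbf{H}$, so I would assume a nontrivial dependence and aim for a contradiction.

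First I would reduce to the ``heavy'' blocks: for $i\in[l+1]\setminus S$ one has $|E_i|<\delta$, and the $(\delta-1)\times|E_i|$ submatrix of $\mathbf{U}_i$ restricted to $E_i$ is a Vandermonde matrix column-scaled by the nonzero constants $f(g_{i,j})$ (nonzero because $G_i\cap G_{l+2}=\emptyset$), so its columns are independent and the associated coefficients must vanish. After renumbering $S=[t]$, I would run the same column permutation (by the partition $F_i=F_i^1\sqcup(F_i\setminus F_i^1)$ with $F_i^1=F_i\setminus\bigcup_{j<i}F_j$), duplicate-column subtraction, and row-addition steps as in Theorem \ref{main0}. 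Since $G_i\cap G_{l+2}=\emptyset$ for every $i\in[l+1]$, the $E_{l+2}$ columns share no generators with the $F_i$'s and are therefore left untouched by the column subtractions.

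The crucial new step is a polynomial-span argument that replaces the ``$(d-1)\times|F|$ Vandermonde'' step. After the operations, I would isolate the condensed $(h+\delta-1)\times(|F|+|F_{l+2}|)$ submatrix formed by the combined top $\delta-1$ $\mathbf{U}$-row together with the bottom $h$ $\mathbf{V}$-rows. Its columns are $(f(g),gf(g),\ldots,g^{\delta-2}f(g),\,1,g,\ldots,g^{h-1})^T$ for $g\in\bigsqcup_i F_i^1$ and $(0,\ldots,0,\,1,g,\ldots,g^{h-1})^T$ for $g\in F_{l+2}$, and its row space equals the evaluation of $\{x^k f(x)\}_{k=0}^{\delta-2}\cup\{x^k\}_{k=0}^{h-1}$ at the column generators. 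A triangularity-by-degree check shows that the polynomial span of these $h+\delta-1$ functions is exactly $\mathbb{F}_q[x]_{\leq h+\delta-2}$, of dimension $h+\delta-1$. Combined with \eqref{ineq_erasurepattern} and the pairwise distinctness of the generators in $\bigsqcup_i F_i^1\sqcup F_{l+2}$ (which follows from $\bigsqcup_i F_i^1\subseteq\mathbb{F}_q\setminus G_{l+2}$), a standard Lagrange-interpolation argument shows that evaluation is injective, so the condensed columns are linearly independent. This forces every new coefficient $\lambda'_{i,j}$, and in particular every $\lambda_{l+2,j}$, to be zero.

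What remains is a dependence purely among the duplicate columns of the heavy blocks, which is formally identical to equation (\ref{eq005}) in Theorem \ref{main0}. I would reuse the counting estimate $\omega(\mathbf{v})\geq\tfrac{1}{2}\bigl((t-1)\delta+\sum_{i\geq 2}\omega(\mathbf{v}_{i,1})\bigr)$ together with the double-counting bound $\omega(\mathbf{v})\leq\sum_{i\in S}|G_i|-|\bigcup_{i\in S}G_i|$ to contradict \eqref{ineq_evaluationpoint}. The two branches of \eqref{ineq_evaluationpoint} correspond precisely to the two possible values of $\sum_{i\in S}|G_i|$: it equals $(r+\delta-1)|S|$ when $l+1\notin S$ and $(r+\delta-1)|S|+(v-r)$ when $l+1\in S$, matching the extra $+v-r$ in the lower bound. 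The main obstacle is the polynomial-span step: it is where Construction B's annihilator $f(x)$ earns its keep, simultaneously decoupling the $F_{l+2}$ columns from the $\mathbf{U}$-rows and promoting the condensed rows to a Vandermonde-like space of degree $h+\delta-2$; careful bookkeeping of the $\sum_{i\geq 2}\omega(\mathbf{v}_{i,1})=0$ versus $>0$ case split, in parallel with the proof of Theorem \ref{main0}, is the other place where attention is required.
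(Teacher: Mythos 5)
Your proposal is correct and follows essentially the same line as the paper's own proof: reduction to the heavy blocks (those $i$ with $|E_i|\geq\delta$) via the rank of the scaled Vandermonde $\mathbf{U}_i|_{E_i}$; the same partition $F_i^1=F_i\setminus\bigcup_{j<i}F_j$, column subtraction, and row-addition steps imported from Theorem~\ref{main0}; and then the double-counting contradiction. Your "polynomial-span/change-of-basis" argument is a slightly cleaner, uniform repackaging of the paper's Claim~\ref{claim1}, which instead singles out a square submatrix of $\mathbf{H}_0$ and handles the cases $|E|+|E_{l+2}|>h$ and $|E|+|E_{l+2}|\leq h$ separately via a degree/root-counting argument — but both exploit exactly the fact that $\{x^kf(x)\}_{k=0}^{\delta-2}\cup\{x^k\}_{k=0}^{h-1}$ is a basis of $\mathbb{F}_q[x]_{\leq h+\delta-2}$ and that the generating points in $\bigsqcup_iF_i^1\sqcup F_{l+2}$ are pairwise distinct, and both conclude that $\lambda'_{i,j}=0$ and $E_{l+2}=\emptyset$ before handing off to the counting step, so the approaches are the same in substance.
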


\begin{proof}
Note that for any $\mathbf{c}=(c_1,\ldots,c_n)\in \mathcal{C}$ and each $1\leq i\leq n-h$, code symbol $c_i$ in $\mathcal{C}$ has $(r,\delta)$-locality. Therefore, $\mathcal{C}$ is capable of recovering all the erasures in $E_i\in \mathcal{E}$ with $|E_i|\leq \delta-1$ for every $1\leq i\leq l+1$ independently. Thus, we only need to consider erasures from $E_i\in\mathcal{E}$ with $|E_{i}|\geq \delta$ and the erasures from $E_{l+2}$. Let $s=|S|$, w.l.o.g., we can assume that $S=[s]$.

Take $\mathcal{E}'=\{E_1,\ldots,E_s,E_{l+2}\}$ and define $\mathbf{H}|_{\mathcal{E}'}$ as
\begin{equation*}\label{2pcm3}
\mathbf{H}|_{\mathcal{E}'}=\left(\begin{array}{ccccc}
\mathbf{U}_1|_{E_1} &O &\cdots &O & O\\
O &\mathbf{U}_2|_{E_2} &\cdots &O & O\\
\vdots &\vdots &\ddots &\vdots & \vdots\\
O &O &\cdots &\mathbf{U}_{s}|_{E_{s}} & O\\
\vdots &\vdots &\vdots &\vdots & \vdots\\
\mathbf{V}_1|_{E_1} &\mathbf{V}_2|_{E_2} &\cdots &\mathbf{V}_{s}|_{E_{s}} & \mathbf{V}_{l+2}|_{E_{l+2}}
\end{array}\right),
\end{equation*}
for simplicity of presentation, here, $U_i|_{E_i}$ ($V_{i}|_{E_i}$) denotes the restriction of $U_i$ ($V_i$) to the set of columns generated by elements in $E_i$. Note that an erasure pattern $\mathcal{E}$ can be recovered by the code $\mathcal{C}$ with parity-check matrix $\mathbf{H}$ if and only if the restriction of $\mathbf{H}$ to $\mathcal{E}$ has full column rank. Therefore, we only need to show that $\mathbf{H}|_{\mathcal{E}'}$ has full column rank.

Assume not, i.e., there exists a non-zero vector $\mathbf{v}=(\mathbf{v}_1,\mathbf{v}_2,\ldots,\mathbf{v}_{s},\mathbf{v}_{l+2})$ with $\mathbf{v}_{i}\in \mathbb{F}_q^{|E_i|}$ such that
\begin{equation}\label{eq021}
\mathbf{H}|_{\mathcal{E}'}\cdot
\left(\begin{array}{c}
\mathbf{v}_1^{T}\\
\vdots\\
\mathbf{v}_{s}^{T}\\
\mathbf{v}_{l+2}^{T}
\end{array}\right)
=\mathbf{0}.
\end{equation}
Write $\mathbf{H}|_{E_i}=(\mathbf{h}_{i,1}~\mathbf{h}_{i,2}~\cdots~\mathbf{h}_{i,|E_i|})$, $\mathbf{v}_{i}=(\lambda_{i,1},\ldots,\lambda_{i,|E_i|})$ and assume that $\mathbf{h}_{i,j}$ is generated by $a_{i,j}\in E_{i}\subseteq G_{i}$. Denote $E_i'=\{a_{i,j}:a_{i,j}\in E_{i} \text{~and~} \lambda_{i,j}\neq 0\}$, $\mathcal{E}''=\{E_1',\ldots,E_s',E_{l+2}'\}$ and $\mathbf{v}_i'$ as the vector of length $\omega(\mathbf{v}_i)$ by puncturing $\mathbf{v}_i$ on its non-zero coordinates.
Then, (\ref{eq021}) turns into the following form
\begin{equation*}\label{eq022}
\mathbf{H}|_{\mathcal{E}''}\cdot
\left(\begin{array}{c}
(\mathbf{v}_1')^{T}\\
\vdots\\
(\mathbf{v}_{s}')^{T}\\
(\mathbf{v}_{l+2}')^{T}
\end{array}\right)
=\mathbf{0}.
\end{equation*}
This reduces the problem to a sub-erasure pattern $\mathcal{E}''$ of $\mathcal{E}'$. Thus, w.l.o.g., we can assume that $\lambda_{i,j}\neq 0$ for every $i\in [s]\cup\{l+2\}$ and $1\leq j\leq |E_{i}|$.

For $i\in [s]$, let
\begin{equation*}
\mathbf{A}_i=\mathbf{U}_i|_{E_i}=\left(\begin{array}{cccc}
f(a_{i,1}) &f(a_{i,2}) &\cdots &f(a_{i,|E_i|})\\
a_{i,1}f(a_{i,1}) &a_{i,2}f(a_{i,2}) &\cdots &a_{i,|E_i|}f(a_{i,|E_i|})\\
\vdots &\vdots &\ddots &\vdots\\
a^{\delta-2}_{i,1}f(a_{i,1}) &a^{\delta-2}_{i,2}f(a_{i,2}) &\cdots &a^{\delta-2}_{i,|E_i|}f(a_{i,|E_i|})
\end{array}\right)
\end{equation*}
and for $i\in [s]\cup\{l+2\}$, let
\begin{equation*}
\mathbf{B}_i=\mathbf{V}_i|_{E_i}=\left(\begin{array}{cccc}
1 &1 &\cdots &1\\
a_{i,1} &a_{i,2} &\cdots &a_{i,|E_i|}\\
\vdots &\vdots &\ddots &\vdots\\
a^{h-1}_{i,1} &a^{h-1}_{i,2} &\cdots &a^{h-1}_{i,|E_{i}|}
\end{array}\right).
\end{equation*}
Denote $E_1^{1}=E_1$, $E_i^{1}=E_{i}\setminus \bigcup_{j=1}^{i-1}E_i$ for $2\leq i\leq s$ and $E=\sqcup_{i=1}^{s}E_{i}^{1}$. By permutating the columns of $\mathbf{H}|_{\mathcal{E}'}$, we can obtain an equivalent matrix of the following form:
\begin{equation*}\label{2pcm5}
\mathbf{H}_2=(\mathbf{H}_L || \mathbf{H}_R)=\left(\begin{array}{ccccc||cccc}
\mathbf{A}_1 & O & \cdots & O & O & O & O & \cdots & O\\
O & \mathbf{A}_2^{1} & \cdots & O & O & \mathbf{A}_2^{2} & O & \cdots & O\\
O & O & \cdots & O & O & O & \mathbf{A}_3^{2} & \cdots & O\\
\vdots & \vdots & \ddots & \vdots & \vdots & \vdots & \vdots & \ddots & \vdots\\
O & O & \cdots & \mathbf{A}_s^{1} & O & O & O & \cdots & \mathbf{A}_s^{2}\\
O & O & \cdots & O & O & O & O & \cdots & O \\
\vdots & \vdots & \ddots & \vdots & \vdots & \vdots & \vdots & \ddots & \vdots \\
\mathbf{B}_1 & \mathbf{B}_2^{1} & \cdots & \mathbf{B}_s^{1} & \mathbf{B}_{l+2} & \mathbf{B}_2^{2} & \mathbf{B}_3^{2} & \cdots & \mathbf{B}_s^{2}
\end{array}\right),
\end{equation*}
where for $2\leq i\leq s$, $\mathbf{A}_i^{1}=\mathbf{A}_i|_{E_{i}^1}$, $\mathbf{A}_i^{2}=\mathbf{A}_i|_{E_{i}\setminus E_{i}^1}$ and $\mathbf{B}_i^{1}=\mathbf{B}_i|_{E_{i}^1}$, $\mathbf{B}_i^{2}=\mathbf{B}_i|_{E_{i}\setminus E_{i}^1}$. For each $i\in [s]$, denote $I_i^{1}=\{j\in [|E_i|]: a_{i,j}\in E_{i}^{1}\}$ and $I_i^{2}=\{j\in [|E_i|]: a_{i,j}\in E_{i}\setminus E_{i}^{1}\}$. Then,  (\ref{eq021}) can be written as
\begin{equation}\label{eq023}
\sum_{i=1}^{s}\sum_{j\in I_i^{1}}\lambda_{i,j}\mathbf{h}_{i,j}+\sum_{j\in [|E_{l+2}|]}\lambda_{l+2,j}\mathbf{h}_{l+2,j}+\sum_{i=1}^{s}\sum_{j\in I_{i}^{2}}\lambda_{i,j}\mathbf{h}_{i,j}=\mathbf{0}.
\end{equation}

Similar to the proof of Theorem \ref{main0}, we can do the following elementary row and column operations:
\begin{itemize}
  \item First, for each $2\leq i\leq s$ and each column $\mathbf{h}_{i,j}$ in $(O~\cdots~O~(\mathbf{A}_i^{2})^{T}~O~\cdots~(\mathbf{B}_{i}^{2})^{T})^{T}$ of $\mathbf{H}_R$, subtract the column $\mathbf{h}_{i',j'}$ in $(O~\cdots~O~(\mathbf{A}_{i'}^{1})^{T}~O~\cdots~(\mathbf{B}_{i'}^{1})^{T})^{T}$ of $\mathbf{H}_L$ from it, where $(i',j')$ satisfies $i'<i$ and $a_{i,j}=a_{i',j'}\in E_{i'}^{1}$. This leads to a matrix equivalent to $\mathbf{H}_2$:
      \begin{equation*}\label{2pcmH_2'}
      \mathbf{H}_2'=(\mathbf{H}_L || \mathbf{H}_R')=\left(\begin{array}{ccccc|cccc}
      \mathbf{A}_1 & O & \cdots & O & O & -\mathbf{A}_{2,1}^{2} & -\mathbf{A}_{3,1}^{2} & \cdots & -\mathbf{A}_{s,1}^{2}\\
      O & \mathbf{A}_2^{1} & \cdots & O & O & \mathbf{A}_2^{2} & -\mathbf{A}_{3,2}^{2} & \cdots & -\mathbf{A}_{s,2}^{2}\\
      O & O & \cdots & O & O & O & \mathbf{A}_3^{2} & \cdots & -\mathbf{A}_{s,3}^{2}\\
      \vdots & \vdots & \ddots & \vdots & \vdots & \vdots & \vdots & \ddots & \vdots\\
      O & O & \cdots & \mathbf{A}_s^{1} & O & O & O & \cdots & \mathbf{A}_s^{2}\\
      O & O & \cdots & O & O & O & O & \cdots & O \\
      \vdots & \vdots & \ddots & \vdots & \vdots & \vdots & \vdots & \ddots & \vdots \\
      \mathbf{B}_1 & \mathbf{B}_2^{1} & \cdots & \mathbf{B}_s^{1} & \mathbf{B}_{l+2} & O & O & \cdots & O
      \end{array}\right),
      \end{equation*}
      where for $1\leq j<i\leq s$, the $l_{th}$ column of $\mathbf{A}_{i,j}^{2}$ is identical to the $l_{th}$ column of $\mathbf{A}_i^{2}$ if the corresponding generating element appears in $E_{j}^{1}\cap E_{i}$ and is identical to the zero vector, otherwise. Clearly, we have $\mathbf{A}_{2,1}^{2}=\mathbf{A}_2^{2}$ and $\sum_{j=1}^{i-1}\mathbf{A}_{i,j}^{2}=\mathbf{A}_i^{2}$. Moreover, (\ref{eq023}) turns into
      \begin{equation}\label{eq024}
      \sum_{i=1}^{s}\sum_{j\in I_i^{1}}\lambda'_{i,j}\mathbf{h}_{i,j}+\sum_{j\in[|E_{l+2}|]}\lambda_{l+2,j}\mathbf{h}_{l+2,j}+\sum_{i=1}^{s}\sum_{j\in I_i^{2}}\lambda_{i,j}\mathbf{h}'_{i,j}=\mathbf{0},
      \end{equation}
      where for $(i,j)\in [s]\times I_{i}^{1}$,
      \begin{equation*}
      \lambda_{i,j}'=\lambda_{i,j}+\sum_{i< i'\leq s}\sum_{\substack{j'\in I_{i'}^{2}:\\a_{i',j'}=a_{i,j}}}\lambda_{i',j'},
      \end{equation*}
      and for $(i,j)\in [s]\times I_{i}^{2}$,
      \begin{equation*}
      \mathbf{h}_{i,j}'=\mathbf{h}_{i,j}-\mathbf{h}_{i',j'}\in \mathbf{H}_{R}'
      \end{equation*}
      for some $(i',j')$ satisfying $i'<i$ and $j'\in I_{i'}^{1}$ such that $a_{i,j}=a_{i',j'}$.
  \item Second, for each $1\leq i\leq \delta-1$, add the $(i+j(\delta-1))_{th}$ row to the $i_{th}$ row for all $1\leq j\leq s-1$. This leads to
      \begin{equation}\label{2pcmH_2''}
      \mathbf{H}_2''=(\mathbf{H}_L'' || \mathbf{H}_R'')=\left(\begin{array}{ccccc||cccc}
      \mathbf{A}_{1} & \mathbf{A}_2^{1} & \cdots & \mathbf{A}_{s}^{1} & O & O & O & \cdots & O\\
      O & \mathbf{A}_2^{1} & \cdots & O & O & \mathbf{A}_2^{2} & -\mathbf{A}_{3,2}^{2} & \cdots & -\mathbf{A}_{s,2}^{2}\\
      O & O & \cdots & O &  O & O & \mathbf{A}_3^{2} & \cdots & -\mathbf{A}_{s,3}^{2}\\
      \vdots & \vdots & \ddots & \vdots & \vdots & \vdots &  \vdots & \ddots & \vdots\\
      O & O & \cdots & \mathbf{A}_s^{1} & O & O & O & \cdots & \mathbf{A}_s^{2}\\
      O & O & \cdots & O & O & O & O & \cdots & O \\
      \vdots & \vdots & \ddots & \vdots & \vdots & \vdots & \vdots & \ddots & \vdots \\
      \mathbf{B}_1 & \mathbf{B}_2^{1} & \cdots & \mathbf{B}_s^{1} & \mathbf{B}_{l+2} & O & O & \cdots & O
      \end{array}\right).
      \end{equation}
      Since elementary row operations don't affect linear relations among columns, therefore for $\mathbf{H}_2''$, (\ref{eq024}) turns into
      \begin{equation}\label{eq025}
      \sum_{i=1}^{s}\sum_{j\in I_i^{1}}\lambda'_{i,j}\mathbf{h}_{i,j}''+\sum_{j\in[|E_{l+2}|]}\lambda_{l+2,j}\mathbf{h}_{l+2,j}+\sum_{i=1}^{s}\sum_{j\in I_i^{2}}\lambda_{i,j}\mathbf{h}_{i,j}''=\mathbf{0},
      \end{equation}
      where for $(i,j)\in [s]\times I_{i}^{1}$,
      \begin{equation*}
      \mathbf{h}_{i,j}''=\begin{cases}
      \mathbf{h}_{i,j},~\text{when}~i=1;\\
      \mathbf{h}_{i,j}+(f(a_{i,j}),a_{i,j}f(a_{i,j}),\ldots,a_{i,j}^{\delta-2}f(a_{i,j}),0,\ldots,0)^{T},~\text{when}~i\geq 2,
      \end{cases}
      \end{equation*}
      and for $(i,j)\in [s]\times I_{i}^{2}$,
      \begin{equation*}
      \mathbf{h}_{i,j}''=\begin{cases}
      \mathbf{h}_{i,j}'+(f(a_{i,j}),a_{i,j}f(a_{i,j}),\ldots,a_{i,j}^{\delta-2}f(a_{i,j}),0,\ldots,0)^{T},~\text{if}~a_{i,j}\in E_{1}^{1};\\
      \mathbf{h}_{i,j}',~\text{otherwise}.
      \end{cases}
      \end{equation*}
\end{itemize}

Now, consider the following submatrix consisting of the first $\delta-1$ rows and the last $h$ rows of $\mathbf{H}_{L}''$:  \begin{equation*}\label{2sub_matrix1}
\mathbf{H}_0=\left(\begin{array}{ccccc}
\mathbf{A}_1 & \mathbf{A}_2^{1} & \cdots & \mathbf{A}_s^{1} & O\\
\mathbf{B}_1 & \mathbf{B}_2^{1} & \cdots & \mathbf{B}_s^{1} & \mathbf{B}_{l+2}
\end{array}\right).
\end{equation*}
Clearly, $\mathbf{H}_0$ is of size $(h+\delta-1)\times (|E|+|E_{l+2}|)$.
\begin{claim}\label{claim1}
$\mathbf{H}_0$ has full column rank.
\end{claim}
\begin{proof}
Since $f(g_{l+2,j})=0$ for every $g_{l+2,j}\in G_{l+2}$, thus, we can treat the zero submatrix in the top-right corner of $\mathbf{H}_{0}$ as:
\begin{equation*}
\mathbf{A}_{l+2}=\left(\begin{array}{cccc}
f(a_{l+2,1}) &f(a_{l+2,2}) &\cdots &f(a_{l+2,|E_{l+2}|})\\
a_{l+2,1}f(a_{l+2,1}) &a_{l+2,2}f(a_{l+2,2}) &\cdots &a_{l+2,|E_{l+2}|}f(a_{l+2,|E_{l+2}|})\\
\vdots &\vdots &\ddots &\vdots\\
a^{\delta-2}_{l+2,1}f(a_{l+2,1}) &a^{\delta-2}_{l+2,2}f(a_{l+2,2}) &\cdots &a^{\delta-2}_{l+2,|E_{l+2}|}f(a_{l+2,|E_{l+2}|})
\end{array}\right),
\end{equation*}
where $\{a_{l+2,1},a_{l+2,2},\ldots,a_{l+2,|E_{l+2}|}\}= E_{l+2}$.

For ease of notations, we let $E_{l+2}^{1}=E_{l+2}$. Note that
\begin{equation*}
|E|+|E_{l+2}|=|\bigcup_{i\in [s]}E_i|+|E_{l+2}|\leq h+\delta-1,
\end{equation*}
when $|E|+|E_{l+2}|\geq h+1$, we can consider the square submatrix of $\mathbf{H}_0$ consisting of the first $d_0-h+1=|E|+|E_{l+2}|-h$ rows and the last $h$ rows:
\begin{equation*}
\mathbf{H}_0'=\left(\begin{array}{ccccc}
\tilde{\mathbf{A}}_1 & \tilde{\mathbf{A}}_2 & \cdots & \tilde{\mathbf{A}}_s & \tilde{\mathbf{A}}_{l+2}\\
\mathbf{B}_1 & \mathbf{B}_2^{1} & \cdots & \mathbf{B}_s^{1} & \mathbf{B}_{l+2}
\end{array}\right),
\end{equation*}
where for $i\in[s]\cup\{l+2\}$,
\begin{equation*}
\tilde{\mathbf{A}}_{i}=\left(\begin{array}{cccc}
f(a_{i,1}) & f(a_{i,2}) &\cdots & f(a_{i,|E_{i}^{1}|})\\
a_{i,1}f(a_{i,1}) & a_{i,2}f(a_{i,2}) &\cdots & a_{i,|E_{i}^{1}|}f(a_{i,|E_{i}^{1}|})\\
\vdots & \vdots & \ddots &\vdots\\
a^{d_0-h}_{i,1}f(a_{i,1}) & a^{d_0-h}_{i,2}f(a_{i,2}) & \cdots & a^{d_0-h}_{i,|E_{i}^{1}|}f(a_{i,|E_{i}^{1}|})
\end{array}\right).
\end{equation*}
For any integer $d\geq 0$, denote $\mathbb{F}_{q}^{\leq d}[x]$ as the linear space of polynomials with degree at most $d$ in $\mathbb{F}_{q}[x]$. Since $\{1,x,\ldots,x^{h-1}\}$ together with $\{f(x),xf(x),\ldots,x^{d_0-h}f(x)\}$ form a basis of $\mathbb{F}_{q}^{\leq d_0}[x]$, therefore, for any non-zero vector $\mathbf{u}\in \mathbb{F}_{q}^{d_0+1}$,
\begin{equation*}
\mathbf{u}\cdot (f(x),xf(x),\ldots,x^{d_0-h}f(x),1,x,\ldots,x^{h-1})^{T}\in \mathbb{F}_{q}^{\leq d_0}[x]
\end{equation*}
has at most $d_0=|E|+|E_{l+2}|-1$ different zeros in $\mathbb{F}_q$. Since $a_{i,j}$s from $\bigsqcup_{i=1}^{s}E_{i}^{1}\sqcup E_{l+2}$ are pairwise distinct, thus, $\mathbf{u}\cdot\mathbf{H}_0'\neq \mathbf{0}$ for any non-zero vector $\mathbf{u}\in \mathbb{F}_{q}^{d_0+1}$. This shows that $rank(\mathbf{H}_0')=|E|+|E_{l+2}|$.

When $|E|+|E_{l+2}|\leq h$, consider the square submatrix $\mathbf{H}_0'$ consisting of the last $|E|+|E_{l+2}|$ rows of $\mathbf{H}_0$. Similarly, by the property of Vandermonde-type matrices, we can also obtain $rank(\mathbf{H}_0')=|E|+|E_{l+2}|$.

To sum up, for both cases, $\mathbf{H}_0$ contains a square submatrix of rank $|E|+|E_{l+2}|$, therefore, $\mathbf{H}_0$ has full column rank.
\end{proof}

According to the structure of $\mathbf{h}_{i,j}''$, (\ref{eq025}) together with Claim \ref{claim1} actually indicates that
\begin{equation*}
\begin{cases}
\lambda'_{i,j}=0,~\text{for}~(i,j)\in [s]\times I_{i}^{1};\\
E_{l+2}=\emptyset.
\end{cases}
\end{equation*}
Therefore, we have
\begin{equation}\label{eq026}
\sum_{i=1}^{s}\sum_{j\in I_i^{2}}\lambda_{i,j}\mathbf{h}_{i,j}''=\mathbf{0}.
\end{equation}
Note that the structures of $\mathbf{H}_{R}''$s in (\ref{H_2''0}) and (\ref{2pcmH_2''}) are the same and $\mathbf{A}_i$s here are also Vandermonde-like matrices. Therefore, through an analogous argument to the latter part of the proof of Theorem \ref{main0}, we can also derive a contradiction by estimating $\sum_{i=2}^{t}|E_i\setminus E_{i}^{1}|$, which shows that $\mathbf{H}|_{\mathcal{E}'}$ has full column rank.

This completes the proof of Theorem \ref{main1}.
\end{proof}

In the same vine, for small $h$, we can obtain optimal $(r,\delta)_i$-LRCs with arbitrarily long length by Theorem \ref{main1}.
\begin{thm}\label{thm41}
Let $1\leq h\leq \delta$, set $G_1=G_2=\cdots=G_{l}$ and $G_{l+1}$ as any $v+\delta-1$-subset of $G_1$ in Construction B, then the code $\mathcal{C}$ generated by Construction B can correct any $h+\delta-1$ erasures.
\end{thm}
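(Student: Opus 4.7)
The plan is to derive Theorem \ref{thm41} as a direct corollary of Theorem \ref{main1}. I would pick any erasure pattern $\mathcal{E}=\{E_1,\ldots,E_{l+2}\}$ with $\sum_{i=1}^{l+2}|E_i|\leq h+\delta-1$ and verify that both hypotheses (\ref{ineq_erasurepattern}) and (\ref{ineq_evaluationpoint}) of Theorem \ref{main1} hold automatically for this pattern.

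The key observation is that $|S|\leq 1$, where $S=\{i\in [l+1]:|E_i|\geq \delta\}$. Indeed, if $|S|\geq 2$, then $\sum_{i\in S}|E_i|\geq 2\delta$, whereas the assumption $h\leq \delta$ forces the total weight to satisfy $h+\delta-1\leq 2\delta-1<2\delta$, a contradiction. Consequently, only the cases $S=\emptyset$ or $S=\{i_0\}$ for a single $i_0\in [l+1]$ need to be considered.

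With this in hand, (\ref{ineq_erasurepattern}) is immediate from $|\bigcup_{i\in S}E_i|+|E_{l+2}|\leq \sum_{i=1}^{l+2}|E_i|\leq h+\delta-1$. For (\ref{ineq_evaluationpoint}), the case $S=\emptyset$ is vacuous. When $S=\{i_0\}$ with $i_0\in [l]$, the required lower bound is $(r+\frac{\delta}{2}-1)+\frac{\delta}{2}=r+\delta-1$, which matches $|G_{i_0}|=R$ with equality. When $S=\{l+1\}$, the required lower bound is $(r+\frac{\delta}{2}-1)+\frac{\delta}{2}+v-r=v+\delta-1$, which matches $|G_{l+1}|$ with equality. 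So both conditions of Theorem \ref{main1} are satisfied, and the pattern is recoverable.

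The argument is essentially a counting reduction, so there is no serious obstacle; the conceptual crux is recognizing that for $h\leq\delta$ the bound $|S|\leq 1$ holds automatically, which collapses condition (\ref{ineq_evaluationpoint}) into a single per-set cardinality check. This also explains why the degenerate choice $G_1=\cdots=G_l$ with $G_{l+1}\subseteq G_1$ suffices here: no non-trivial constraint on the overlaps among different repair groups is actually invoked, in contrast with the general situation treated in Theorem \ref{main1}.
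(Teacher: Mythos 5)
Your proof is correct and takes essentially the same route as the paper's: observe that $h\le\delta$ forces $|S|\le 1$, from which conditions (\ref{ineq_erasurepattern}) and (\ref{ineq_evaluationpoint}) of Theorem \ref{main1} are automatic. You spell out the per-case cardinality checks (including the $S=\{l+1\}$ case giving $v+\delta-1$) in slightly more detail than the paper, but the argument is the same.
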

\begin{proof}
Given any erasure pattern $\mathcal{E}=\{E_1,\ldots,E_{l+2}\}$ satisfying $\sum_{i=1}^{l+2}|E_i|=h+\delta-1$. Since $1\leq h\leq \delta$, thus, there is only one block which contains at least $\delta$ columns. This indicates that $|S|=1$ and thus, (\ref{ineq_evaluationpoint}) holds naturally. Therefore, $\mathcal{E}$ can be recovered by $\mathcal{C}$.
\end{proof}
\begin{cor}\label{cor40}
Let $q\geq r+\delta-1$. Assume that $1\leq h\leq \delta$, then there exists an optimal $[n,k,h+\delta;(r,\delta)_i]$-LRC with length $n=(l+1)(r+\delta-1)+h+v-r$ for any positive integer $l$.
\end{cor}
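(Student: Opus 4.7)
The plan is to realize the desired code as a direct instance of Construction B with generating sets matching the hypotheses of Theorem \ref{thm41}. Fixing $R = r+\delta-1$, I would first choose an $h$-subset $G_{l+2}$ of $\mathbb{F}_q$ and an $R$-subset $G_1 \subseteq \mathbb{F}_q \setminus G_{l+2}$ (possible whenever the field is large enough to place the two disjointly), then set $G_2 = \cdots = G_l = G_1$ and take $G_{l+1}$ to be any $(v+\delta-1)$-subset of $G_1$. Feeding this data into Construction B produces a linear code $\mathcal{C}$ of length $n = (l+1)R + h + v - r = (l+1)(r+\delta-1) + h + v - r$ whose first $n-h$ coordinates lie in repair groups of size at most $r+\delta-1$ with local minimum distance $\delta$, so $\mathcal{C}$ has information $(r,\delta)$-locality by design.

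Next, Theorem \ref{thm41} applies verbatim to this instance and guarantees that $\mathcal{C}$ recovers every erasure pattern of weight $h+\delta-1$, giving $d(\mathcal{C}) \geq h+\delta$. Combined with the a priori upper bound $d(\mathcal{C}) \leq h+\delta$ recorded in (\ref{mindis}), this pins the minimum distance down to $d(\mathcal{C}) = h+\delta$. Chasing equality through the two inequalities used to derive (\ref{mindis}), together with the dimension estimate $k(\mathcal{C}) \geq lr+v$ coming from the row count of the parity-check matrix, forces $k(\mathcal{C}) = lr+v$ and $\lceil k(\mathcal{C})/r\rceil = l+1$ (the latter uses $1 \leq v \leq r$).

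Plugging the triple $\bigl(n, k, d\bigr) = \bigl((l+1)(r+\delta-1) + h + v - r,\ lr+v,\ h+\delta\bigr)$ into the Singleton-type bound (\ref{singleton_bound}) then yields equality, certifying $\mathcal{C}$ as an optimal $[n, k, h+\delta; (r,\delta)_i]$-LRC; letting $l$ range over all positive integers delivers the claimed unbounded family. The only subtlety worth double-checking is the field-size hypothesis: the stated $q \geq r+\delta-1$ already places $G_1$, but for $h \geq 1$ one also needs a disjoint $G_{l+2}$ of size $h$, so one should read the assumption as being slightly strengthened (to $q \geq r+\delta-1+h$) in that regime. Beyond this bookkeeping point, the entire argument reduces to quoting Theorem \ref{thm41} and the bound (\ref{mindis}), so no genuine obstacle arises.
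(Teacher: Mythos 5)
Your proposal is correct and follows the same route the paper implicitly takes: plug the equal generating sets $G_1=\cdots=G_l$, the $(v+\delta-1)$-subset $G_{l+1}\subseteq G_1$, and an $h$-set $G_{l+2}$ into Construction B, apply Theorem~\ref{thm41} to get $d(\mathcal{C})\geq h+\delta$, and combine with the upper bound (\ref{mindis}) and the dimension estimate $k(\mathcal{C})\geq lr+v$ to force equality in the Singleton-type bound. Your observation about the field size is also apt: since $G_{l+2}$ must be disjoint from $G_1$, one in fact needs $q\geq r+\delta-1+h$ rather than the stated $q\geq r+\delta-1$, a small oversight in the corollary as written.
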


As another consequence of Theorem \ref{main1}, for general $h$, we have the following corollary.

\begin{cor}\label{cor41}
If the system $\mathcal{G}=\{G_1,\ldots,G_{l+1}\}$ from Construction B satisfies
\begin{equation}\label{ineq_evaluationpoint2}
|\bigcup_{i\in S}G_i|\geq
\begin{cases}
(r+\frac{\delta}{2}-1)|S|+\frac{\delta}{2},~\text{if } l+1\notin S;\\
(r+\frac{\delta}{2}-1)|S|+\frac{\delta}{2}+v-r,~\text{otherwise},
\end{cases}
\end{equation}
for every subset $S\subseteq [l+1]$ of size at most $\lfloor\frac{h+\delta-1}{\delta}\rfloor$, then the code $\mathcal{C}$ generated by Construction B is an optimal $[n,k,h+\delta;(r,\delta)_i]$-LRC.
\end{cor}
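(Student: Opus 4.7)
The plan is to derive Corollary \ref{cor41} as a direct consequence of Theorem \ref{main1} together with the upper bound on $d(\mathcal{C})$ recorded in (\ref{mindis}). Recall from Section \ref{ssec41} that Construction B guarantees $(r,\delta)_i$-locality and the inequality $d(\mathcal{C}) \leq h + \delta$. Thus to establish optimality it suffices to prove $d(\mathcal{C}) \geq h+\delta$, i.e., that every erasure pattern of total weight at most $h+\delta-1$ is correctable by $\mathcal{C}$.

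First I would translate ``the minimum distance is at least $h+\delta$'' into the language of erasure patterns used in Theorem \ref{main1}. Given any $\mathcal{E} = \{E_1, \ldots, E_{l+2}\}$ with $E_i \subseteq G_i$ and $\sum_{i=1}^{l+2} |E_i| \leq h+\delta-1$, set $S = \{i \in [l+1] : |E_i| \geq \delta\}$, exactly as in the statement of Theorem \ref{main1}. The goal then reduces to checking that the two hypotheses (\ref{ineq_erasurepattern}) and (\ref{ineq_evaluationpoint}) of Theorem \ref{main1} hold for this $\mathcal{E}$.

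For (\ref{ineq_erasurepattern}), note that $\bigcup_{i \in S} E_i \subseteq \bigsqcup_{i \in [l+1]} E_i$, so
\[
\Bigl|\bigcup_{i \in S} E_i\Bigr| + |E_{l+2}| \;\leq\; \sum_{i=1}^{l+1}|E_i| + |E_{l+2}| \;\leq\; h+\delta-1,
\]
which is precisely (\ref{ineq_erasurepattern}). For (\ref{ineq_evaluationpoint}), the defining property of $S$ yields $\delta \cdot |S| \leq \sum_{i \in S} |E_i| \leq h+\delta-1$, so $|S| \leq \lfloor (h+\delta-1)/\delta \rfloor$. Therefore the hypothesis of the corollary on the system $\mathcal{G}$ applies to this particular $S$, giving exactly the lower bound on $|\bigcup_{i \in S} G_i|$ required by (\ref{ineq_evaluationpoint}) (with the two cases $l+1 \in S$ and $l+1 \notin S$ matched automatically).

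With both hypotheses verified, Theorem \ref{main1} yields that $\mathcal{E}$ is recoverable by $\mathcal{C}$, so $d(\mathcal{C}) \geq h+\delta$. Combined with (\ref{mindis}), this gives $d(\mathcal{C}) = h+\delta$, and together with the $(r,\delta)_i$-locality noted in Section \ref{ssec41}, $\mathcal{C}$ is an optimal $[n,k,h+\delta;(r,\delta)_i]$-LRC. The argument is essentially routine — the only subtle point is the counting step that bounds $|S|$ by $\lfloor (h+\delta-1)/\delta \rfloor$ so that the hypothesis of the corollary becomes applicable; all heavy lifting has already been done inside Theorem \ref{main1}.
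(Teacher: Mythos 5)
Your proposal is correct and follows essentially the same route as the paper: both reduce the claim to the hypotheses of Theorem \ref{main1}, observe that (\ref{mindis}) already gives $d(\mathcal{C})\leq h+\delta$, bound $|S|\leq\lfloor(h+\delta-1)/\delta\rfloor$ because each block in $S$ contributes at least $\delta$ erasures, and check (\ref{ineq_erasurepattern}) directly from the total erasure count. The only cosmetic remark is that the union $\bigcup_{i\in[l+1]}E_i$ need not be disjoint since the $G_i$ may overlap, so the $\bigsqcup$ symbol is imprecise, but the inequality $|\bigcup_{i\in S}E_i|\leq\sum_{i\in[l+1]}|E_i|$ that you actually use is still valid.
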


\begin{proof}
According to (\ref{mindis}), we only need to show that the code $\mathcal{C}$ can recover any erasure pattern $\mathcal{E}=\{E_i:1\leq i\leq l+2\}$ with $\sum_{i\in S}|E_i|+|E_{l+2}|\leq h+\delta-1$, where $E_i\subseteq G_i$. For any $\mathbf{c}=(c_1,\ldots,c_n)\in \mathcal{C}$ and $1\leq j\leq n-h$, the structure of $\mathbf{H}$ in Construction B ensures that the code symbol $c_j$ has $(r,\delta)$-locality. Therefore, for $i\in S$ with $|E_{i}|<\delta$, $E_i$ can be recovered. Denote $S'=\{i'\in S: |E_{i'}|\geq \delta\}$. Then, we have $|S'|\leq \lfloor\frac{h+\delta-1}{\delta}\rfloor$ and
\begin{equation*}
\sum_{i'\in S'}|E_{i'}|+|E_{l+2}|\leq h+\delta-1.
\end{equation*}
This leads to $|\bigcup_{i'\in S'}E_{i'}|+|E_{l+2}|\leq h+\delta-1$ and the result follows from Theorem \ref{main1}.
\end{proof}

\begin{rem}\label{rem2}
In \cite{CS20}, based on ideas of polynomial interpolation, Cai and Schwartz provide a construction of LRCs with $(r,\delta)_i$-locality with the same recovering capability (see Theorem 1 in \cite{CS20}). From the perspective of parity-check matrix, Theorem 1 in \cite{CS20} requires the generating sets $G_i$s to satisfy
\begin{equation*}
|G_{i}\cap(\bigcup_{i\neq j\in S}G_{j})|\leq \delta-1,
\end{equation*}
which is a local condition for $G_{i}$s. However, the minimal distance is a global parameter of the code. Therefore, due to the advantage of the intrinsic combinatorial property of $G_i$s satisfying (\ref{ineq_evaluationpoint}), Theorem \ref{main1} and Corollary \ref{cor41} can provide longer codes.
\end{rem}

\section{Optimal LRCs based on sparse hypergraphs}\label{sec5}

\subsection{Tu\'{r}an-type problems for sparse hypergraphs}\label{ssec51}
Throughout this section, we will use some standard notations of sparse hypergraph from \cite{ST20}. An $R$-uniform hypergraph ($R$-graph for short) on $n$ vertices $\mathcal{H}:=(V(\mathcal{H}),E(\mathcal{H}))$ is a pair of vertices and edges, where the vertex set $V(\mathcal{H})$ is a finite set (denoted as $[n]$) and edge set $E(\mathcal{H})$ is a collection of $R$-subsets of $V(\mathcal{H})$. For convenience, we often use $\mathcal{H}$ to denote its edge set $E(\mathcal{H})$ if there is no confusion.

For positive integers $v$ and $e$, let $\mathcal{G}_R(v,e)$ be the family of all $R$-graphs consisting of $e$ edges and at most $v$ vertices, i.e.,
$$\mathcal{G}_R(v,e)=\{\mathcal{H}\subseteq\dbinom{[n]}{R}:|E(\mathcal{H})|=e,|V(\mathcal{H})|\leq v\}.$$
Then, an $R$-graph $\mathcal{H}$ is said to be $\mathcal{G}_R(v,e)$-free if it does not contain a copy of any member in $\mathcal{G}_R(v,e)$. In relevant literatures, such $R$-graphs are called sparse hypergraphs. Usually, we denote $f_R(n,v,e)$
as the maximum number of edges in a $\mathcal{G}_R(v,e)$-free $R$-graph on $n$ vertices.

In \cite{brown1973some}, Brown, Erd\H{o}s and S\'{o}s first made the following estimation about the value of $f_R(n,v,e)$.
\begin{lem}\cite{brown1973some}\label{lem41}
For $R\geq2, e\geq2, v\geq{R+1}$, there exist constants $c_1$, $c_2$ depending only on $R$, $e$, $v$ such that $$c_1n^{\frac{eR-v}{e-1}}\leq f_R(n,v,e)\leq c_2n^{\lceil\frac{eR-v}{e-1}\rceil}.$$
\end{lem}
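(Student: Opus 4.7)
The plan is to prove the two inequalities separately, using a short averaging argument for the upper bound and the probabilistic deletion method for the lower bound. Throughout, write $\ell:=\lceil(eR-v)/(e-1)\rceil$, so that by construction $(e-1)\ell\geq eR-v$, regardless of whether $(eR-v)/(e-1)$ is an integer.

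For the upper bound, the key observation is that in a sufficiently dense $R$-graph some $\ell$-subset must be contained in $e$ or more edges. Given a $\mathcal{G}_R(v,e)$-free $R$-graph $\mathcal{H}$ on $n$ vertices, I would double-count incidences $(S,E)$ with $S\in\binom{V(\mathcal{H})}{\ell}$, $E\in\mathcal{H}$, and $S\subseteq E$. There are $|\mathcal{H}|\binom{R}{\ell}$ such pairs, so some $\ell$-set $S_0$ lies in at least $|\mathcal{H}|\binom{R}{\ell}/\binom{n}{\ell}$ edges. For $c_2$ large enough relative to $R,v,e$, the assumption $|\mathcal{H}|\geq c_2 n^\ell$ forces this average to be at least $e$, so one can select $e$ distinct edges $E_1,\ldots,E_e$ all through $S_0$. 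Their union satisfies $|E_1\cup\cdots\cup E_e|\leq \ell+e(R-\ell)=eR-(e-1)\ell\leq v$, giving a $\mathcal{G}_R(v,e)$-copy in $\mathcal{H}$ and contradicting the hypothesis. Hence $|\mathcal{H}|<c_2 n^\ell$.

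For the lower bound, I would run the standard alteration. Sample a random $R$-graph $\mathcal{H}_p$ on $[n]$ by including each $R$-subset independently with probability $p=c_0\, n^{-(v-R)/(e-1)}$ for a small constant $c_0>0$ (the hypothesis $v\geq R+1$ makes $p<1$ for large $n$). Then $\mathbb{E}[|E(\mathcal{H}_p)|]=p\binom{n}{R}=\Theta(n^{(eR-v)/(e-1)})$. The number of labelled $\mathcal{G}_R(v,e)$-configurations that embed into $[n]$ is at most $N_{R,v,e}\cdot n^v$ for a constant $N_{R,v,e}$ depending only on $R,v,e$, and each such copy appears in $\mathcal{H}_p$ with probability exactly $p^e$, so the expected number of bad copies is $O(c_0^e\cdot n^{(eR-v)/(e-1)})$. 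Choosing $c_0$ small enough forces this to be at most half of $\mathbb{E}[|E(\mathcal{H}_p)|]$; deleting one edge from each bad copy yields a $\mathcal{G}_R(v,e)$-free subgraph whose expected size is $\Omega(n^{(eR-v)/(e-1)})$, and fixing a favourable realisation gives the required graph.

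I do not expect a genuine obstacle: both arguments are essentially routine once $\ell$ and $p$ are chosen correctly. The points that require care are verifying $(e-1)\ell\geq eR-v$ (which the ceiling handles automatically in both the integer and non-integer case) and checking that the number of $\mathcal{G}_R(v,e)$-configurations on $[n]$ is $O(n^v)$ with an implicit constant independent of $n$; the matching of exponents between $p\binom{n}{R}$ and $n^v p^e$ is then forced by the definition of $p$.
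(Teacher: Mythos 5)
The paper does not prove this lemma; it simply cites Brown--Erd\H{o}s--S\'os \cite{brown1973some} for the statement. Your reconstruction is the standard argument for that theorem (averaging over $\ell$-subsets for the upper bound, random sampling plus deletion for the lower bound), and both halves check out: with $\ell=\lceil(eR-v)/(e-1)\rceil$ one has $1\le\ell\le R$ in the nontrivial range $R+1\le v< eR$, the union bound $|E_1\cup\cdots\cup E_e|\le\ell+e(R-\ell)=eR-(e-1)\ell\le v$ is exactly what the ceiling buys, and in the lower bound the exponent identity $v-e(v-R)/(e-1)=(eR-v)/(e-1)$ makes the expected number of forbidden configurations comparable to $\mathbb{E}|E(\mathcal{H}_p)|$, so the alteration goes through after taking $c_0$ small and passing to a realisation where $|E(\mathcal{H}_p)|$ minus the count of bad copies is at least half its expectation. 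The only caveats worth flagging are cosmetic: one should treat the degenerate case $v\ge eR$ (where $\ell\le0$ and $f_R(n,v,e)=O(1)$ trivially) separately, and in the deletion step it is cleaner to work with the single random variable $|E(\mathcal{H}_p)|-(\text{number of bad }e\text{-tuples of edges})$ rather than bounding the two expectations independently; neither affects the substance.
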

When $e-1|eR-v$, this already determined the order of $f_R(n,v,e)$ up to a constant factor. However, for $e-1\nmid eR-v$, it turns out to be extremely difficult to determine the correct exponent. With additional condition that $v=3(R-l)+l+1$ and $e=3$, Alon and Shapira \cite{alon2006extremal} proved the next result.
\begin{lem}\cite{alon2006extremal}\label{lem42}
For $2\leq l<R$, we have  $$n^{l-o(1)}<f_R(n,3(R-l)+l+1,3)=o(n^l).$$
Furthermore, there exists an explicit construction of $R$-graph which is both $\mathcal{G}_R(3(R-l)+l+1,3)$-free and $\mathcal{G}_R(2(R-l)+l,2)$-free with $n^{l-o(1)}$ edges.
\end{lem}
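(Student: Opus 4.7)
Since Lemma 5.2 is cited verbatim from \cite{alon2006extremal}, the plan is to sketch the Alon--Shapira approach, which generalizes the classical Ruzsa--Szemer\'edi treatment of the $(6,3)$-problem (the special case $R=3$, $l=2$). First I would address the upper bound $f_R(n,3(R-l)+l+1,3)=o(n^l)$. With $v=3R-2l+1$ and $e=3$ one has $eR-v=2l-1$, so Lemma 5.1 only delivers the weaker bound $O(n^l)$. To upgrade $O$ to $o$ I would invoke a hypergraph removal lemma (Gowers / Nagle--R\"odl--Schacht--Skokan): assuming for contradiction that an $R$-graph $\mathcal{H}$ on $n$ vertices has $\Omega(n^l)$ edges and is $\mathcal{G}_R(v,3)$-free, a supersaturation argument produces many triples of edges spanning at most $v$ vertices; the removal lemma then forces at least one such triple to actually appear, contradicting the freeness assumption.

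For the lower bound, I would adapt Behrend's construction. Let $B\subset\{1,\dots,N\}$ be a set of size $N^{1-o(1)}$ containing no non-trivial 3-term arithmetic progression, guaranteed by Behrend's theorem. Partition the vertex set into $R$ disjoint blocks and regard them as $R-l$ ``tripled'' coordinates and $l$ ``free'' coordinates; choose each block of size $\Theta(N)$ so that $n=\Theta(N)$. For each free vector $\mathbf{x}\in[N]^{l-1}$, each $b\in B$, and each admissible shift parameter, include an edge whose entries in the $R-l$ tripled coordinates encode an arithmetic progression with common difference $b$ (so that three edges agreeing in these coordinates encode a 3-AP in $B$) and whose entries in the $l$ free coordinates record $\mathbf{x}$. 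A direct count yields roughly $N^{l-1}\cdot|B|=N^{l-o(1)}$ edges. To certify $\mathcal{G}_R(2(R-l)+l,2)$-freeness, I would check that any two distinct edges coincide in at most $2(R-l)+l-1$ positions; to certify $\mathcal{G}_R(3(R-l)+l+1,3)$-freeness, I would verify that any three distinct edges whose union has at most $3(R-l)+l$ vertices force a nontrivial 3-AP inside $B$, contradicting the choice of $B$.

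The main obstacle is the construction itself: balancing the edge count $N^{l-o(1)}$ against the two simultaneous freeness conditions requires a precise book-keeping of which coordinates are ``tripled'' (pinned down by any three coincident edges) versus ``free'' (contributing independently to the edge count). By contrast, the upper bound reduces, essentially as a black box, to the regularity/removal machinery for hypergraphs once the extremal problem has been phrased in terms of avoiding $(v,3)$-configurations, so I expect the real work to lie in designing and verifying the Behrend-type hypergraph.
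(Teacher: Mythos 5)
The paper does not prove Lemma~\ref{lem42}; it is quoted verbatim from Alon and Shapira \cite{alon2006extremal}, so there is no in-paper argument to compare yours against. What can be assessed is whether your sketch faithfully reconstructs the Alon--Shapira proof, and here there are two real problems.

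First, the upper-bound logic is backwards. You write that ``assuming for contradiction that $\mathcal{H}$ \dots is $\mathcal{G}_R(v,3)$-free, a supersaturation argument produces many triples of edges spanning at most $v$ vertices; the removal lemma then forces at least one such triple to actually appear.'' But $\mathcal{G}_R(v,3)$-freeness means there are \emph{zero} such triples, so supersaturation cannot produce any, and a removal lemma never ``forces'' a configuration to appear --- it says that if a structure has few copies of a pattern, all copies can be destroyed by deleting few edges. The actual Ruzsa--Szemer\'edi-type argument goes the other way: one builds an auxiliary graph (or hypergraph) from $\mathcal{H}$ in which every edge lies in exactly one copy of the relevant pattern (this is where the $\mathcal{G}_R(v,3)$-freeness is used), notes that the total number of copies is therefore $O(n^{?})$ which is subcritical, applies removal to delete $o(n^{?})$ edges and kill all copies, and concludes that the number of copies --- and hence $|E(\mathcal{H})|$ --- was $o(n^l)$ to begin with. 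Your phrasing does not reduce to this even as a caricature.

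Second, the lower-bound construction is underspecified in a way that hides the real work. You posit $R-l$ ``tripled'' coordinates whose entries ``encode an arithmetic progression with common difference $b$,'' but a three-term AP has three entries, not $R-l$ of them, and in the base case $R=3$, $l=2$ one has $R-l=1$ even though all three coordinates $x$, $x+b$, $x+2b$ participate in the RS construction. The parameter count ($\mathbf{x}\in[N]^{l-1}$ together with $b\in B$ and an unspecified ``shift parameter'') is also not visibly consistent with producing $N^{l-o(1)}$ edges while forcing any three low-span edges to yield a nontrivial 3-AP in $B$. The Behrend-based strategy is the right family of ideas, and you correctly note that Lemma~\ref{lem41} only gives $O(n^l)$ (since $\lceil(2l-1)/2\rceil=l$) so that removal machinery is genuinely needed for the little-$o$; but as written the sketch neither recovers the RS base case correctly nor pins down the $\mathcal{G}_R(2(R-l)+l,2)$- and $\mathcal{G}_R(3(R-l)+l+1,3)$-freeness claims, which is precisely the nontrivial content of the Alon--Shapira construction.
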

Later in 2017, Ge and Shangguan \cite{ge2017sparse} provided a construction for hypergraphs forbidding small rainbow cycles with order-optimal edges w.r.t. Lemma \ref{lem41} (see Theorem 1.6 in \cite{ge2017sparse}). For general lower bound on $f_R(n,v,e)$, very recently, Shangguan and Tamo \cite{ST20} proved the following result.
\begin{thm}\cite{ST20}\label{lem44}
For $R\geq2, e\geq3, v\geq{R+1}$ satisfying $\emph{gcd}(e-1,eR-v)=1$ and sufficiently large $n$, there exists an $R$-graph with $$\Omega(n^{\frac{eR-v}{e-1}}(\log n)^{\frac{1}{e-1}})$$edges,
which is also $\mathcal{G}_R(iR-\lceil\frac{(i-1)(eR-v)}{e-1}\rceil,i)$-free for every $2\leq i\leq e$, and in particular,
$$f_R(n,v,e)=\Omega(n^{\frac{eR-v}{e-1}}(\log n)^{\frac{1}{e-1}})$$
as $n\rightarrow \infty$. Here the constants in $\Omega(\cdot)$ are independent of $n$.
\end{thm}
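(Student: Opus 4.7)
The plan is to prove this via the probabilistic method with alteration, tuning the edge-inclusion probability so that all of the forbidden configurations become simultaneously tractable. I would start with a random $R$-graph $\mathcal{H}_0$ on vertex set $[n]$, where each of the $\binom{n}{R}$ possible edges is included independently with probability
\[
p = c\, n^{(eR-v)/(e-1)\, -\, R}\,(\log n)^{1/(e-1)}
\]
for a small absolute constant $c>0$. Then $\mathbb{E}[|E(\mathcal{H}_0)|] = p\binom{n}{R} = \Theta\bigl(n^{(eR-v)/(e-1)}(\log n)^{1/(e-1)}\bigr)$, matching the target order.

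The next step is to bound, for each $2\leq i\leq e$, the expected number $N_i$ of sub-hypergraphs of $\mathcal{H}_0$ belonging to $\mathcal{G}_R(v_i,i)$, where $v_i:=iR-\lceil(i-1)(eR-v)/(e-1)\rceil$. Any such configuration uses at most $v_i$ vertices and $i$ edges, and the number of isomorphism types is bounded independently of $n$, so $\mathbb{E}[N_i]=O(n^{v_i}p^i)$. A direct computation gives
\[
n^{v_i}p^i = O\bigl(n^{v_i-iR+i(eR-v)/(e-1)}(\log n)^{i/(e-1)}\bigr),
\]
and the choice of $v_i$ forces the $n$-exponent to be at most $(eR-v)/(e-1)$, with equality precisely when $(e-1)\mid (i-1)(eR-v)$. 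Under the hypothesis $\gcd(e-1,eR-v)=1$, this divisibility happens in the range $2\leq i\leq e$ only at $i=e$, and there $v_e=v$. Thus $\mathbb{E}[N_i]$ is of strictly smaller polynomial order than $\mathbb{E}[|E(\mathcal{H}_0)|]$ for each $i<e$, while for $i=e$ it exceeds the expected edge count only by a multiplicative factor $(\log n)^{(e-1)/(e-1)} = \log n$.

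A naive alteration that deletes one edge per forbidden copy would destroy essentially all edges because of the $\log n$ surplus at $i=e$. To preserve the $(\log n)^{1/(e-1)}$ gain, I would retain each edge independently with a secondary survival probability and control the simultaneous bad events through the Lov\'asz Local Lemma (or equivalently a Rödl-nibble / semi-random analysis): for every potential forbidden copy, define a bad event whose probability decays like a suitable power of $p$, bound the dependency degree by counting how many forbidden configurations share a common edge, and verify the local-lemma inequality. The subcritical cases $i<e$ contribute negligibly, so only the critical $i=e$ events drive the analysis. I expect the main obstacle to be precisely this last step: killing all $i=e$ copies while keeping a constant fraction of the edges, without paying a full logarithmic factor. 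Once this succeeds, the resulting $R$-graph $\mathcal{H}$ is simultaneously $\mathcal{G}_R(v_i,i)$-free for every $2\leq i\leq e$ and retains $\Omega\bigl(n^{(eR-v)/(e-1)}(\log n)^{1/(e-1)}\bigr)$ edges. Specializing to $i=e$ (where $v_e=v$) yields the stated bound on $f_R(n,v,e)$.
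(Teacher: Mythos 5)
The paper does not prove this statement; it is quoted verbatim from \cite{ST20}. So your attempt can only be measured against the known proof technique of that reference, not against anything internal to the paper.

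Your first-moment setup is correct and matches the right calibration. With $p = c\,n^{(eR-v)/(e-1)-R}(\log n)^{1/(e-1)}$, the expected edge count is $\Theta(n^{\alpha}(\log n)^{1/(e-1)})$ where $\alpha=(eR-v)/(e-1)$, the $n$-exponent for $\mathbb{E}[N_i]$ is $i\alpha-\lceil(i-1)\alpha\rceil\leq\alpha$ with equality in $2\leq i\leq e$ forced (by $\gcd(e-1,eR-v)=1$) to occur only at $i=e$, and for $i=e$ one gets $v_e=v$ and a $\log n$ surplus over the edge count. All of this is right, as is the observation that naive one-edge-per-copy deletion then wipes out the graph.

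The gap is the patch you propose. Independent secondary thinning with retention probability $q$ scales the expected number of surviving $e$-configurations by $q^e$ and the surviving edges by $q$; to make the former a small fraction of the latter you need $q^{e-1}\ll 1/\log n$, i.e.\ $q\ll(\log n)^{-1/(e-1)}$, which cancels the $(\log n)^{1/(e-1)}$ gain and lands you back at $\Theta(n^{\alpha})$ edges---no improvement over Lemma \ref{lem41}. Invoking the Lov\'asz Local Lemma on top of this does not rescue it: the LLL certifies positive probability of avoiding all forbidden configurations, but it says nothing a priori about how many edges survive in the conditioned space, and the edge count is exactly what you must preserve. So the route as literally written is a dead end, not merely an unfinished step. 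The actual mechanism in \cite{ST20} is the one you gesture at in your parenthetical ``R\"odl-nibble / semi-random analysis'': one passes to the $e$-uniform \emph{conflict hypergraph} whose vertex set is $E(\mathcal{H}_0)$ and whose edges are the critical $e$-configurations, deletes the (rare, by your subcritical $i<e$ estimates and by a separate count of pairs of $e$-configurations sharing $\geq 2$ edges) small cycles to make it \emph{uncrowded}, and then applies an independent-set theorem for sparse uncrowded hypergraphs in the style of Ajtai--Koml\'os--Pintz--Spencer--Szemer\'edi / Duke--Lefmann--R\"odl. That theorem is precisely where the extra $(\log)^{1/(e-1)}$ factor is manufactured; it finds a \emph{structured}, not an independently-thinned, sub-collection of edges. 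Your proposal identifies the correct critical case and the correct shortfall of the first-moment method, but replaces the actual engine with one (independent thinning + LLL) that provably cannot deliver the logarithmic gain.
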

In the same paper, Shangguan and Tamo also considered this type of problems for hypergraphs that are simultaneously $\mathcal{G}_{R}(v_i,e_i)$-free for a series of $\{(v_i,e_i)\}_{i=1}^{s}$.
\begin{lem}\cite{ST20}\label{lem45}
Let $s\geq 1$, $R\geq 3$ and $(v_i,e_i)$, $1\leq i\leq s$ be fixed integers satisfying $v_i\geq R+1$, $e_i\geq 2$. Suppose further that $e_1\geq 3$, $\gcd(e_1-1,e_1R-v_1)=1$ and $\frac{e_1R-v_1}{e_1-1}<\frac{e_iR-v_i}{e_i-1}$ for $2\leq i\leq s$. Then there exists an $R$-graph with $\Omega(n^{\frac{e_1R-v_1}{e_1-1}}(\log n)^{\frac{1}{e_1-1}})$ edges which is $\mathcal{G}_{R}(v_i,e_i)$-free for each $1\leq i\leq s$.
\end{lem}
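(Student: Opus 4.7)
The plan is to prove Lemma 5.5 via the probabilistic alteration method, upgrading the single-family construction in Theorem 5.4 to handle the extra forbidden families $\mathcal{G}_R(v_i,e_i)$ for $2\le i\le s$, and relying on the strict inequality $\frac{e_1R-v_1}{e_1-1}<\frac{e_iR-v_i}{e_i-1}$ to absorb those extra deletions.

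First, I would set $p=c\cdot n^{-(v_1-R)/(e_1-1)}$ for a small constant $c>0$ and consider the random $R$-graph $\mathcal{H}\sim\mathcal{H}(n,p)$ in which every $R$-subset of $[n]$ is an edge independently with probability $p$. A direct computation gives $\mathbb{E}|E(\mathcal{H})|=\binom{n}{R}p=\Theta\bigl(n^{(e_1R-v_1)/(e_1-1)}\bigr)$, matching the target edge count up to the logarithmic factor. Next, for each $1\le i\le s$ and each isomorphism class of $R$-graphs $F$ having $e_i$ edges and $v(F)\le v_i$ vertices, the expected number of labelled copies of $F$ in $\mathcal{H}$ is $O\bigl(n^{v(F)}p^{e_i}\bigr)=O\bigl(n^{v(F)-e_i(v_1-R)/(e_1-1)}\bigr)$. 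Plugging in $v(F)=v_i$ gives the worst case, so the expected number of copies of any member of $\mathcal{G}_R(v_i,e_i)$ is $O\bigl(n^{\alpha_i}\bigr)$ with $\alpha_i:=v_i-e_i\tfrac{v_1-R}{e_1-1}$. A short algebraic manipulation shows that $\alpha_1=\tfrac{e_1R-v_1}{e_1-1}$, while the hypothesis $\tfrac{e_1R-v_1}{e_1-1}<\tfrac{e_iR-v_i}{e_i-1}$ rearranges exactly to $\alpha_i<\alpha_1$ for $2\le i\le s$.

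With these moment estimates in hand I would run the alteration step: delete one edge from every copy of every $F$ lying in any $\mathcal{G}_R(v_i,e_i)$, for $1\le i\le s$. By linearity of expectation the total number of deletions is at most $O\bigl(n^{\alpha_1}\bigr)+\sum_{i\ge 2}O\bigl(n^{\alpha_i}\bigr)=O\bigl(n^{(e_1R-v_1)/(e_1-1)}\bigr)$, and choosing $c$ sufficiently small makes this at most half the expected edge count. The resulting hypergraph is simultaneously $\mathcal{G}_R(v_i,e_i)$-free for every $i$ and has $\Omega\bigl(n^{(e_1R-v_1)/(e_1-1)}\bigr)$ edges with positive probability. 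This already establishes the polynomial part of the claim.

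The main obstacle is recovering the $(\log n)^{1/(e_1-1)}$ improvement, since a naive alteration loses it. To close this gap I would replace the i.i.d.\ random hypergraph by the refined probabilistic construction used in Theorem 5.4 (for instance, the random greedy process of Shangguan–Tamo or, equivalently, an application of the Spencer–Rödl nibble/semirandom method to the configuration hypergraph whose vertices are $R$-subsets of $[n]$ and whose edges encode all copies of $\mathcal{G}_R(v_1,e_1)$). That construction already produces, with positive probability, a $\mathcal{G}_R(v_1,e_1)$-free $R$-graph $\mathcal{H}_0$ of the desired size $\Omega\bigl(n^{(e_1R-v_1)/(e_1-1)}(\log n)^{1/(e_1-1)}\bigr)$; the final step is to add the configurations $\mathcal{G}_R(v_i,e_i)$, $i\ge 2$, to the list of forbidden substructures in that argument. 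Because $\alpha_i<\alpha_1$ strictly for $i\ge 2$, the expected number of such extra copies appearing during the semirandom process is of strictly lower order than $n^{\alpha_1}(\log n)^{1/(e_1-1)}$, so a final alteration removing one edge per copy changes the edge count only by a lower-order term. The technical burden is therefore to re-verify the moment bounds and the nibble-type concentration in the multi-family setting; once this is done, the conclusion follows exactly as in Theorem 5.4.
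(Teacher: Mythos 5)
The paper states this lemma as a citation to \cite{ST20} and gives no proof of its own, so there is nothing in the paper to compare your argument against; your proposal can only be judged on internal soundness. Your setup is correct: the choice $p=c\,n^{-(v_1-R)/(e_1-1)}$, the identity $\alpha_1=\tfrac{e_1R-v_1}{e_1-1}$, and the equivalence between $\alpha_i<\alpha_1$ and $\tfrac{e_1R-v_1}{e_1-1}<\tfrac{e_iR-v_i}{e_i-1}$ are all right, and the first-moment alteration legitimately produces a simultaneously $\mathcal{G}_R(v_i,e_i)$-free $R$-graph with $\Omega\bigl(n^{(e_1R-v_1)/(e_1-1)}\bigr)$ edges.

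The genuine gap is the $(\log n)^{1/(e_1-1)}$ factor, which is the whole point of the lemma over the classical Brown--Erd\H{o}s--S\'{o}s lower bound. You propose two routes and blur them together. Folding the families for $i\ge 2$ into the configuration hypergraph destroys its uniformity (its edges would have sizes $e_1,\dots,e_s$), whereas the random-greedy independent-set theorem behind Theorem~\ref{lem44} is stated for uniform hypergraphs with controlled degree and codegree; extending it to mixed uniformities is not automatic. The cleaner route is to run the process against $\mathcal{G}_R(v_1,e_1)$ alone and then delete one edge from every surviving copy of $\mathcal{G}_R(v_i,e_i)$ with $i\ge 2$, but the claim that the process output contains only $o\bigl(n^{\alpha_1}(\log n)^{1/(e_1-1)}\bigr)$ such copies is exactly the hard part: the edges of that output are strongly dependent, and you are tacitly replacing them with an i.i.d.\ random $R$-graph at the same density. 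That pseudorandomness step is the technical heart of the lemma, and your writeup postpones it as a ``technical burden'' without any indication of how it would go through. You also never address where the hypothesis $\gcd(e_1-1,e_1R-v_1)=1$ enters; it is not cosmetic --- in \cite{ST20} it is what controls the codegree of the configuration hypergraph, and without it the random-greedy theorem cannot be invoked at all.
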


According to Theorem \ref{main0} and Theorem \ref{main1}, constructions of both optimal $(r,\delta)_a$-LRCs and optimal $(r,\delta)_i$-LRCs require the generating sets $G_i$s to form a special kind of sparse hypergraph which is simultaneously $\mathcal{G}_{R}(iR-\lfloor(i-1)\frac{\delta}{2}\rfloor-1,i)$-free for $2\leq i\leq\mu$ (for some given integer $\mu\geq 3$). Armed with the above results, we have the following existence theorem for such hypergraphs with $|E(\mathcal{H})|$ growing super-linearly in $n$.

\begin{thm}\label{main51}
Let $\delta \geq 2$, $\mu\geq 3$ and $R\geq \min\{\delta,3\}$ be fixed integers. Then, for $n$ sufficiently large, there exists an $R$-uniform hypergraph $\mathcal{H}(V, E)$ that is simultaneously $\mathcal{G}_{R}(iR-\lfloor(i-1)\frac{\delta}{2}\rfloor-1,i)$-free for every $2\leq i\leq\mu$ with $|V|=n$ and
\begin{equation}\label{shg01}
|E|=\begin{cases}
\Omega(n^{\frac{\delta}{2}+\frac{1}{\mu-1}}{(\log{n})}^{\frac{1}{\mu-1}}), ~\text{when}~\delta~\text{is even};\\
\Omega(n^{\frac{\delta}{2}+\frac{1}{2(\mu-1)}}{(\log{n})}^{\frac{1}{2(\mu-1)}}), ~\text{when}~\delta~\text{is odd}~{and}~\mu~\text{is even};\\
\Omega(n^{\frac{\delta}{2}+\frac{1}{2(\mu-2)}}{(\log{n})}^{\frac{1}{2(\mu-2)}}), ~\text{when both}~\delta~\text{and}~\mu>3~\text{are odd};\\
\Omega(n^{\frac{\delta+1}{2}}), ~\text{when}~\delta~\text{is odd}~{and}~\mu=3.
\end{cases}
\end{equation}
\end{thm}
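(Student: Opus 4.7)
The plan is to reduce Theorem~\ref{main51} to the Shangguan--Tamo multi-constraint construction (Lemma~\ref{lem45}) in three of the four cases, and to resort to a direct probabilistic deletion argument in the remaining tied case. For $i \in \{2,\ldots,\mu\}$ set $(v_i,e_i) := (iR - \lfloor (i-1)\delta/2\rfloor - 1,\, i)$ and record the key ratio
\[
\rho_i := \frac{e_iR - v_i}{e_i - 1} = \frac{\lfloor (i-1)\delta/2\rfloor + 1}{i-1},
\]
which simplifies to $\rho_i = \delta/2 + 1/(i-1)$ when $(i-1)\delta$ is even and $\rho_i = \delta/2 + 1/(2(i-1))$ when $(i-1)\delta$ is odd. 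The first step is to locate the minimizer $i^*$ of $\rho_i$ over $\{2,\ldots,\mu\}$: it is $i^*=\mu$ when $\delta$ is even or when $\delta$ is odd and $\mu$ is even; it is $i^* = \mu - 1$ when $\delta$ is odd and $\mu \geq 5$ is odd (since $\mu > 3$ forces $1/(2(\mu-2)) < 1/(\mu-1)$); and the minimum is attained twice, $\rho_2 = \rho_3 = (\delta+1)/2$, when $\delta$ is odd and $\mu = 3$. In each case one checks that $\rho_{i^*}$ equals the main exponent appearing in~(\ref{shg01}).

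In the first three cases I would invoke Lemma~\ref{lem45} with the distinguished constraint $(v_1,e_1) := (v_{i^*}, i^*)$ and the other $(v_i, e_i)$'s as auxiliary ones. The three hypotheses of that lemma reduce to routine checks: (i) $e_1 = i^* \geq 3$, which holds since $i^* \in \{\mu-1,\mu\}$ and $\mu \geq 4$ in these cases; (ii) $\gcd(e_1 - 1, e_1R - v_1) = 1$ --- indeed, when $(i^*-1)\delta$ is even one has $e_1R - v_1 = (i^*-1)\delta/2 + 1$ so the gcd equals $\gcd(i^*-1, 1) = 1$, while when $(i^*-1)\delta$ is odd then $i^* - 1$ is odd and $e_1R - v_1 = ((i^*-1)\delta + 1)/2$, again reducing to $\gcd(i^*-1, 1)=1$; (iii) the strict inequalities $\rho_{i^*} < \rho_i$ for all $i \neq i^*$, which follow from the monotonicity of $\rho_i$ within each parity class of $i-1$ combined with a cross-class comparison using $i \leq \mu$. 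Lemma~\ref{lem45} then produces an $R$-graph with $\Omega(n^{\rho_{i^*}} (\log n)^{1/(i^*-1)})$ edges that is simultaneously $\mathcal{G}_R(v_i, e_i)$-free for every $i$, yielding the bounds in the first three rows of~(\ref{shg01}) (the log exponent coming out of Lemma~\ref{lem45} is at least the one stated in the theorem).

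The main obstacle is the fourth case ($\delta$ odd, $\mu = 3$), where $\rho_2 = \rho_3 = (\delta+1)/2$ kills the strict-inequality hypothesis of Lemma~\ref{lem45}, and where the alternative choice $e_1 = 3$ fails the gcd condition since $\gcd(2,\delta+1) = 2$. For this case I would bypass Lemma~\ref{lem45} and run a direct Brown--Erd\H{o}s--S\'{o}s-style random-deletion argument: sample each $R$-subset of $[n]$ independently with probability $p = c\, n^{-R + (\delta+1)/2}$ for a sufficiently small constant $c > 0$, and delete one edge from each sampled 2-bad pair (intersection $\geq (\delta+1)/2$) and from each sampled 3-bad triple (union of size $\leq 3R - \delta - 1$). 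Standard counting gives $\Theta(n^{2R - (\delta+1)/2})$ 2-bad pairs and $\Theta(n^{3R - \delta - 1})$ 3-bad triples in the complete $R$-uniform hypergraph, so the expected numbers of deletions and of sampled edges are all of order $n^{(\delta+1)/2}$, with the deletion contributions carrying extra factors of $c^2$ and $c^3$; choosing $c$ small enough keeps the total number of deletions below half the number of sampled edges, so that $\Omega(n^{(\delta+1)/2})$ edges survive in a hypergraph free of both forbidden configurations. In the subcase $\delta = 3$ a simple packing already satisfies the 3-free condition, so only the packing lower bound from Lemma~\ref{lem41} is needed; the random deletion is required for $\delta \geq 5$ odd.
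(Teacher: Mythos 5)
Your proposal is correct and follows essentially the same route as the paper: identify the minimizer of $\rho_i=\frac{\lfloor(i-1)\delta/2\rfloor+1}{i-1}$, feed the resulting tuple into Lemma~\ref{lem45} for the three cases where the gcd and strict-ratio hypotheses can be met (with the same choice of distinguished index $i^*\in\{\mu-1,\mu\}$), and fall back on a random-deletion argument in the tied case $\delta$ odd, $\mu=3$. The only cosmetic differences are that the paper's Claim~\ref{claim2} proves the deletion bound for all odd $\mu$ (then specializes to $\mu=3$) and runs it via Chernoff plus Markov rather than the one-line alteration computation, and that your remark ``$\mu\geq 4$ in these cases'' is slightly off when $\delta$ is even and $\mu=3$, though $e_1=i^*=3\geq 3$ still holds there.
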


\begin{proof}
For each $2\leq i\leq \mu$, let $v_i'=iR-\lfloor(i-1)\frac{\delta}{2}\rfloor-1$. Consider the sequence $\{\frac{iR-v'_i}{i-1}\}_{i=2}^{\mu}=\{\frac{\lfloor(i-1)\frac{\delta}{2}\rfloor+1}{i-1}\}_{i=2}^{\mu}$: For each $2\leq i\leq \mu$,
\begin{equation*}
\frac{\lfloor(i-1)\frac{\delta}{2}\rfloor+1}{i-1}=\begin{cases}
\frac{\delta}{2}+\frac{1}{i-1}, ~\text{if}~(i-1)\delta~\text{is even};\\
\frac{\delta}{2}+\frac{1}{2(i-1)}, ~\text{if}~(i-1)\delta~\text{is odd}.
\end{cases}
\end{equation*}
Therefore, when $\delta$ is even, $\{\frac{\lfloor(i-1)\frac{\delta}{2}\rfloor+1}{i-1}\}_{i=2}^{\mu}=\{\frac{\delta}{2}+\frac{1}{i-1}\}_{i=2}^{\mu}$ is a strictly decreasing sequence and $\frac{\delta}{2}+\frac{1}{\mu-1}<\frac{\delta}{2}+\frac{1}{i-1}$ for all $2\leq i\leq \mu-1$. When $\delta$ is odd, we have
\begin{equation*}
\frac{\lfloor(i-1)\frac{\delta}{2}\rfloor+1}{i-1}=\begin{cases}
\frac{\delta}{2}+\frac{1}{i-1}, ~\text{if}~i~\text{is odd};\\
\frac{\delta}{2}+\frac{1}{2(i-1)}, ~\text{if}~i~\text{is even}.
\end{cases}
\end{equation*}
Therefore, based on the monotone decreasing property of both $\frac{\delta}{2}+\frac{1}{i-1}$ (for odd $i$) and $\frac{\delta}{2}+\frac{1}{2(i-1)}$ (for even $i$), we have
\begin{equation*}
\begin{cases}
\frac{\delta}{2}+\frac{1}{2(\mu-1)}<\frac{\lfloor(i-1)\frac{\delta}{2}\rfloor+1}{i-1} \text{ for } 2\leq i\leq \mu-1 , \text{ when } \mu \text{ is even};\\
\frac{\delta}{2}+\frac{1}{2(\mu-2)}<\frac{\lfloor(i-1)\frac{\delta}{2}\rfloor+1}{i-1} \text{ for } 2\leq i\leq \mu-2 \text{ and } i=\mu,  \text{ when } \mu>3 \text{ and } \mu \text{ is odd};\\
\frac{\delta}{2}+\frac{1}{2(\mu-2)}=\frac{\delta}{2}+\frac{1}{\mu-1}, \text{ when } \mu=3.
\end{cases}
\end{equation*}

When $\delta$ is even, clearly, we have $\gcd(\mu-1,(\mu-1)\frac{\delta}{2}+1)=1$. By applying Lemma \ref{lem45} with $s=\mu-1$, $(v_1,e_1)=(\mu R-(\mu-1)\frac{\delta}{2}-1,\mu)$ and $\{(v_i,e_i)\}_{i=2}^{\mu-1}=\{(v_j',j)\}_{j=2}^{\mu-1}$, there exists an $R$-graph with $\Omega(n^{\frac{\delta}{2}+\frac{1}{\mu-1}}{(\log{n})}^{\frac{1}{\mu-1}})$ edges. This proves the first part of (\ref{shg01}).

When $\delta$ is odd, $\mu$ is even. Assume that $\mu=2u$ for some $u\geq 2$. Then, we have
\begin{equation*}
(\mu-1)\frac{\delta}{2}+\frac{1}{2}=(\mu-1)\frac{\delta-1}{2}+u.
\end{equation*}
Since $\delta$ is odd, thus $(\mu-1)|(\mu-1)\frac{\delta-1}{2}$. Therefore, we have $\gcd(\mu-1,(\mu-1)\frac{\delta}{2}+\frac{1}{2})=1$. By applying Lemma \ref{lem45} with $s=\mu-1$, $(v_1,e_1)=(\mu R-((\mu-1)\frac{\delta}{2}+\frac{1}{2}),\mu)$ and $\{(v_i,e_i)\}_{i=2}^{\mu-1}=\{( v'_j,j)\}_{j=2}^{\mu-1}$, there exists an $R$-graph with $\Omega(n^{\frac{\delta}{2}+\frac{1}{2(\mu-1)}}{(\log{n})}^{\frac{1}{2(\mu-1)}})$ edges. This proves the second part of (\ref{shg01}).

When $\delta$, $\mu>3$ are both odd. Assume that $\mu=2u+1$ for some $u\geq 2$. Then, we have
\begin{equation*}
(\mu-2)\frac{\delta}{2}+\frac{1}{2}=(\mu-2)\frac{\delta-1}{2}+u.
\end{equation*}
Thus, we also have $\gcd(\mu-2,(\mu-2)\frac{\delta}{2}+\frac{1}{2})=1$. By applying Lemma \ref{lem45} with $s=\mu-1$, $(v_1,e_1)=((\mu-1)R-((\mu-2)\frac{\delta}{2}+\frac{1}{2}),\mu-1)$ and $\{(v_i,e_i)\}_{i=2}^{\mu-1}=\{( v'_j,j)\}_{j=2,j\neq \mu-1}^{\mu}$, there exists an $R$-graph with $\Omega(n^{\frac{\delta}{2}+\frac{1}{2(\mu-2)}}{(\log{n})}^{\frac{1}{2(\mu-2)}})$ edges. This proves the third part of (\ref{shg01}).

Now, we turn to the proof of the rest part of (\ref{shg01}). When $\delta$ is odd and $\mu=3$, the conditions of Lemma \ref{lem45} no longer hold, thus we shall use the standard probabilistic method to prove the existence of such sparse hypergraph. Actually, we are going to prove the following stronger result.
\begin{claim}\label{claim2}
When both $\delta$ and $\mu$ are odd, there exists an $R$-uniform hypergraph $\mathcal{H}(V, E)$ that is simultaneously $\mathcal{G}_{R}(iR-\lfloor(i-1)\frac{\delta}{2}\rfloor-1,i)$-free for every $2\leq i\leq\mu$ with $|V|=n$ and $|E|=\Omega(n^{\frac{\delta}{2}+\frac{1}{2(\mu-2)}})$.
\end{claim}
\begin{proof}
Set $p:=p(n)=\varepsilon n^{\frac{\delta}{2}+\frac{1}{2(\mu-2)}-R}$ where $\varepsilon=\varepsilon(R,\delta,\mu)>0$ is a small constant to be determined. Construct an $R$-graph $\mathcal{H}_0\subseteq {V\choose R}$ randomly by choosing each member of ${V\choose R}$ independently with probability $p$. Let $X$ denote the number of edges in $\mathcal{H}_0$. Clearly, for $n$ sufficiently large,
\begin{equation*}
E[X]=p{n\choose R}\geq \frac{\varepsilon n^{\frac{\delta}{2}+\frac{1}{2(\mu-2)}}}{2R!}.
\end{equation*}
For $2\leq i\leq \mu$, let $\mathcal{Y}_i$ be the collection of all $i$ distinct edges of $\mathcal{H}_0$ whose union
contains at most $iR-\lfloor(i-1)\frac{\delta}{2}\rfloor-1$ vertices. Denote $Y_i$ as the size of $\mathcal{Y}_i$. Then,
\begin{align*}
E[Y_i]&\leq p^i{n\choose iR-\lfloor(i-1)\frac{\delta}{2}\rfloor-1}{iR-\lfloor(i-1)\frac{\delta}{2}\rfloor-1\choose R}^{i}\\
&\leq \varepsilon^{i} {iR-\lfloor(i-1)\frac{\delta}{2}\rfloor-1\choose R}^{i}n^{(\frac{\delta}{2}+\frac{1}{2(\mu-2)})i-\lfloor(i-1)\frac{\delta}{2}\rfloor-1}.
\end{align*}
Take $\varepsilon={(\mu R})^{-(3R)}$, since $\frac{\delta}{2}+\frac{1}{2(\mu-2)}\leq \frac{\lfloor(i-1)\frac{\delta}{2}\rfloor+1}{i-1}$ and ${iR-\lfloor(i-1)\frac{\delta}{2}\rfloor-1\choose R}^{i}\leq {\mu R\choose R}^{i}$, thus we have
\begin{align}\label{lb_ineq01}
E[Y_i]&\leq\varepsilon^{i} {iR-\lfloor(i-1)\frac{\delta}{2}\rfloor-1\choose R}^{i}n^{(\frac{\delta}{2}+\frac{1}{2(\mu-2)})i-\lfloor(i-1)\frac{\delta}{2}\rfloor-1} \nonumber\\
&<\frac{\varepsilon n^{\frac{\delta}{2}+\frac{1}{2(\mu-2)}}}{{\mu^{3i-3}(R!)^i}}\leq \frac{E[X]}{\mu^2R!}
\end{align}
for every $2\leq i\leq \mu$.

Applying Chernoff's inequality (see Corollary A.1.14 in \cite{TPM}) for $X$ and Markov's inequality for $Y_i$, it is easy to see that for each $2\leq i\leq\mu$ and sufficiently large $n$, we have
\begin{equation*}
Pr[X<0.9E[X]]<\frac{1}{2\mu} \text{ and } Pr[Y_i>2\mu E[Y_i]]<\frac{1}{2\mu}.
\end{equation*}
Therefore, with positive probability, there exists an $R$-graph $\mathcal{H}_0\subseteq{V\choose R}$ such that
\begin{equation*}
X\geq 0.9E[X] \text{ and } Y_i\leq 2\mu E[Y_i] \text{ for
each } 2\leq i\leq\mu.
\end{equation*}
Fix such $\mathcal{H}_0$, we construct a subgraph $\mathcal{H}_1$ from $\mathcal{H}_0$ by removing one edge from each member of $\mathcal{Y}_i$ in $\mathcal{H}_0$ for every $2\leq i\leq \mu$. By (\ref{lb_ineq01}), $\mathcal{H}_1$ satisfies $|E(\mathcal{H}_1)|=\Omega(n^{\frac{\delta}{2}+\frac{1}{2(\mu-2)}})$ and for each $2\leq i\leq \mu$, the union of any $i$ distinct edges in $\mathcal{H}_1$ contains at least $iR-\lfloor(i-1)\frac{\delta}{2}\rfloor$ vertices. Therefore, $\mathcal{H}_1$ is the desired $R$-graph and this proves the claim.
\end{proof}
Take $\mu=3$ in Claim \ref{claim2}, we have the fourth part of (\ref{shg01}). This completes the proof of Theorem \ref{main51}.
\end{proof}

\subsection{Optimal locally repairable codes with super-linear length based on sparse hypergraphs}

In this subsection, we are going to achieve our code constructions with the help of sparse hypergraphs. For LRCs with all symbol $(r,\delta)$-locality, we have the following result.

\begin{thm}\label{thm51}
For positive integers $\delta \geq 2$, $r\geq d-\delta$ and $d\geq 2\delta+1$. Let $R=r+\delta-1$, $\mu=\lfloor\frac{d-1}{\delta}\rfloor$ and $\mathcal{H}(V, E)$ be an $R$-uniform hypergraph with $V=\mathbb{F}_q$ that is simultaneously $\mathcal{G}_{R}(iR-\lfloor(i-1)\frac{\delta}{2}\rfloor-1,i)$-free for every $2\leq i\leq\mu$. Then, there exists an optimal $[n,k,d;(r,\delta)_a]_q$-LRC with length $n=R|E|$.
\end{thm}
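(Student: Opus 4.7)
The plan is to combine Construction A from Section \ref{ssec31} with the sparse hypergraph $\mathcal{H}$: identify each edge $e \in E(\mathcal{H})$ with one of the generating sets $G_i \subset \mathbb{F}_q$. Since each edge has exactly $R$ vertices and $V = \mathbb{F}_q$, this gives $m = |E|$ generating sets, each of size $R$, so Construction A produces a linear code of length $n = mR = R|E|$ with parity-check matrix $\mathbf{H}$ of the form in (\ref{pcm1}). By the structure of $\mathbf{H}$, this code has all-symbol $(r,\delta)$-locality; the only remaining task is to verify the hypothesis of Theorem \ref{main0} so as to conclude optimality.

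So the next step is to check condition (\ref{main_cond}): for every $S \subseteq [m]$ with $2 \leq |S| \leq \mu = \lfloor (d-1)/\delta \rfloor$, we need
\[
\Bigl|\bigcup_{i\in S}G_i\Bigr|\geq \Bigl(r+\tfrac{\delta}{2}-1\Bigr)|S|+\tfrac{\delta}{2}.
\]
Writing $|S|=i$ and using $R = r+\delta-1$, a direct rearrangement shows the right-hand side equals $iR - (i-1)\tfrac{\delta}{2}$. Meanwhile, the $\mathcal{G}_R(iR - \lfloor(i-1)\tfrac{\delta}{2}\rfloor - 1, i)$-freeness of $\mathcal{H}$ says exactly that the union of any $i$ edges contains \emph{strictly more} than $iR - \lfloor(i-1)\tfrac{\delta}{2}\rfloor - 1$ vertices, i.e.\ at least $iR - \lfloor(i-1)\tfrac{\delta}{2}\rfloor$ vertices. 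Thus the key inequality I must establish is
\[
iR - \Bigl\lfloor (i-1)\tfrac{\delta}{2}\Bigr\rfloor \;\geq\; iR - (i-1)\tfrac{\delta}{2},
\]
which is immediate since $\lfloor x \rfloor \leq x$.

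The only subtle point is the parity bookkeeping when $\delta$ is odd and $i$ is even: then $(i-1)\delta/2$ is a half-integer while the left-hand cardinality is an integer, so the sparseness bound actually gives $|\bigcup_{i\in S}G_i| \geq iR - \lfloor (i-1)\delta/2 \rfloor = iR - (i-1)\delta/2 + \tfrac{1}{2}$, which is stronger than needed. In all other parity combinations $\lfloor (i-1)\delta/2 \rfloor = (i-1)\delta/2$ exactly. Either way, (\ref{main_cond}) holds for all $S$ of size between $2$ and $\mu$, and Theorem \ref{main0} then yields an optimal $[n,k,d;(r,\delta)_a]_q$-LRC with the claimed length $n=R|E|$.

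I do not anticipate a genuine obstacle here: the theorem is essentially a translation between the combinatorial guarantee furnished by sparse hypergraphs and the algebraic condition required in Theorem \ref{main0}, and all the heavy lifting has already been done in Section \ref{ssec32}. The mildly delicate part is verifying that the $\lfloor \cdot \rfloor$ in the definition of the forbidden family is precisely tuned so that $\mathcal{G}_R$-freeness matches (\ref{main_cond}) on the nose; this motivates why the hypergraph parameters $iR - \lfloor (i-1)\delta/2 \rfloor - 1$ were chosen the way they were in Theorem \ref{main51}.
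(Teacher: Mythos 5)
Your proposal is correct and follows essentially the same route as the paper: assign each edge of $\mathcal{H}$ as a generating set in Construction A, translate $\mathcal{G}_R(iR-\lfloor(i-1)\tfrac{\delta}{2}\rfloor-1,i)$-freeness into the lower bound $|\bigcup_{i\in S}G_i|\geq iR-\lfloor(|S|-1)\tfrac{\delta}{2}\rfloor$, observe this dominates $(r+\tfrac{\delta}{2}-1)|S|+\tfrac{\delta}{2}$ since $\lfloor x\rfloor\leq x$, and invoke Theorem \ref{main0}. Your extra parity remark is correct but not strictly needed; otherwise this is the paper's argument.
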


\begin{proof}
Let $m=|E|$ and for each $e_i\in E(\mathcal{H})$, take $e_i$ as the generating set of Vandermonde matrices $\left(\begin{array}{c}
\mathbf{U}_i\\
\mathbf{V}_i
\end{array}\right)$ in Construction A. Note that $V(\mathcal{H})=\mathbb{F}_q$ and $\mathcal{H}$ is simultaneously $\mathcal{G}_{R}(iR-\lfloor(i-1)\frac{\delta}{2}\rfloor-1,i)$-free for $2\leq i\leq\mu$, therefore,
for any subset $S\subseteq[m]$ with $2\leq |S|\leq\lfloor\frac{d-1}{\delta}\rfloor$, we have
\begin{equation*}
|\bigcup_{i\in S}e_i|\geq R|S|-\lfloor(|S|-1)\frac{\delta}{2}\rfloor\geq(r+\frac{\delta}{2}-1)|S|+\frac{\delta}{2}.
\end{equation*}
Thus, the conclusion easily follows from Theorem \ref{main0}.
\end{proof}

As for LRCs with information $(r,\delta)$-locality, we have a similar result.

\begin{thm}\label{thm52}
For integers $r\geq 1$, $1\leq v\leq r$, $\delta \geq 2$ and $h\geq 0$. Let $R=r+\delta-1$, $\mu=\lfloor\frac{h+\delta-1}{\delta}\rfloor$ and $\mathcal{H}(V, E)$ be an $R$-uniform hypergraph with $V=\mathbb{F}_q\setminus G_{l+2}$ for some $h$-subset $G_{l+2}\subseteq \mathbb{F}_q$ that is simultaneously $\mathcal{G}_{R}(iR-\lfloor(i-1)\frac{\delta}{2}\rfloor-1,i)$-free for every $2\leq i\leq\mu$. Then, there exists an optimal $[n,k,h+\delta;(r,\delta)_i]_q$-LRC with length $n=R|E|-r+v$.
\end{thm}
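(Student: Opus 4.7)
The plan is to mirror the proof of Theorem \ref{thm51}, but substituting Construction B and Corollary \ref{cor41} for Construction A and Theorem \ref{main0}. I would fix an arbitrary $h$-subset $G_{l+2}\subseteq\mathbb{F}_q$, let $\mathcal{H}$ be the hypergraph on $V=\mathbb{F}_q\setminus G_{l+2}$ guaranteed by hypothesis, set $l+1=|E(\mathcal{H})|$, and enumerate its edges as $e_1,\ldots,e_{l+1}$. Then I would take $G_i=e_i$ for $1\leq i\leq l$ (these are full $R$-subsets automatically disjoint from $G_{l+2}$) and let $G_{l+1}$ be any $(v+\delta-1)$-subset of $e_{l+1}$. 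Feeding $G_1,\ldots,G_{l+1},G_{l+2}$ into Construction B produces a linear code $\mathcal{C}$ of the prescribed length, with dimension at least $lr+v$ and designed distance $h+\delta$.

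The heart of the argument is to verify that the system $\{G_i\}_{i=1}^{l+1}$ satisfies the set-union condition (\ref{ineq_evaluationpoint2}) for every subset $S\subseteq[l+1]$ with $|S|\leq \mu=\lfloor(h+\delta-1)/\delta\rfloor$. If $l+1\notin S$, then every $G_i$ with $i\in S$ is a full edge of $\mathcal{H}$, and the $\mathcal{G}_{R}(|S|R-\lfloor(|S|-1)\delta/2\rfloor-1,|S|)$-freeness applied with $i=|S|$ yields
\[
\Bigl|\bigcup_{i\in S}G_i\Bigr|\geq |S|R-\Bigl\lfloor\tfrac{(|S|-1)\delta}{2}\Bigr\rfloor\geq\Bigl(r+\tfrac{\delta}{2}-1\Bigr)|S|+\tfrac{\delta}{2},
\]
which is exactly the calculation used in the proof of Theorem \ref{thm51}. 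If $l+1\in S$, then $G_{l+1}$ is shorter than the full edge $e_{l+1}$ by exactly $R-(v+\delta-1)=r-v$ vertices, so applying the hypergraph property to the full edges $\{e_i:i\in S\}$ and subtracting these at most $r-v$ missing points gives the second branch $(r+\delta/2-1)|S|+\delta/2+v-r$ of (\ref{ineq_evaluationpoint2}).

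With the combinatorial condition confirmed, Corollary \ref{cor41} immediately implies that $\mathcal{C}$ is an optimal $[n,k,h+\delta;(r,\delta)_i]_q$-LRC. I do not expect any serious obstacle beyond careful book-keeping: the only subtlety is to account precisely for the shortened block $G_{l+1}$ of size $v+\delta-1$ when $l+1\in S$, which contributes exactly the compensating term $v-r$ appearing in the second branch of the hypothesis of Corollary \ref{cor41}. Everything else is a direct reduction of the required set-union inequality to the assumed sparse-hypergraph property of $\mathcal{H}$.
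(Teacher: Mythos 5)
Your proposal follows exactly the same route as the paper's proof: assign full edges of the sparse hypergraph $\mathcal{H}$ to the first $l$ repair groups, a $(v+\delta-1)$-subset of one more edge to the $(l+1)$-th group, observe that the $\mathcal{G}_R$-freeness gives the union bound $|\bigcup_{i\in S}e_i|\geq (r+\delta/2-1)|S|+\delta/2$, and account for the shortened block by subtracting $r-v$ when $l+1\in S$, after which Corollary \ref{cor41} finishes. The argument is correct and matches the paper's reasoning in all substantive respects.
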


\begin{proof}
Similarly, let $l+1=|E|$. Take any $e_{l+1}\in E(\mathcal{H})$, choose a $(v+\delta-1)$-subset of $e_{l+1}$ as the generating set of matrices $\mathbf{U}_{l+1}$ and $\mathbf{V}_{l+1}$, and for the rest $e_i\in E(\mathcal{H})$, take $e_i$ as the generating set of matrices $\mathbf{U}_i$ and $\mathbf{V}_i$ ($1\leq i\leq l$) in Construction B. Note that $\mathcal{H}$ is simultaneously $\mathcal{G}_{R}(iR-\lfloor(i-1)\frac{\delta}{2}\rfloor-1,i)$-free for $2\leq i\leq\mu$, therefore,
for any subset $S\subseteq[m]$ with $2\leq |S|\leq\lfloor\frac{h+\delta-1}{\delta}\rfloor$, we have
\begin{equation*}
|\bigcup_{i\in S}e_i|\geq
\begin{cases}
(r+\frac{\delta}{2}-1)|S|+\frac{\delta}{2}, \text{ when } l+1\notin S;\\
(r+\frac{\delta}{2}-1)|S|+\frac{\delta}{2}+v-r, \text{ when } l+1\in S.\\
\end{cases}
\end{equation*}
Thus, the conclusion easily follows from Corollary \ref{cor41}.
\end{proof}

Recall that in Theorem \ref{main0}, $r\geq d-\delta$ and $R|n$. When $2\delta+1\leq d\leq 3\delta$, one can get optimal LRCs with length $\Omega(q^{\delta})$ and minimum distance $d$ via packings or Steiner systems as in \cite{Jin19} and \cite{CMST20}. For $3\delta+1\leq d\leq 4\delta$, we have the following explicit construction.
\begin{cor}\label{cor42}
For $3\delta+1\leq d\leq4\delta$ and $r\geq d-\delta$, there exist explicit constructions of optimal $[n,k,d;(r,\delta)_a]_q$-LRCs with length
\begin{equation*}
	n = \begin{cases}
    Rq^{\frac{\delta}{2}+1-o(1)}, & \text{if}\  \delta \text{ is even};\\
    Rq^{\frac{\delta+1}{2}-o(1)}, & \text{if}\  \delta \text{ is odd}.\\
		\end{cases}
\end{equation*}
\end{cor}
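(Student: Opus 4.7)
\medskip
\noindent\textbf{Proof plan.} The range $3\delta+1\le d\le 4\delta$ gives $\mu:=\lfloor(d-1)/\delta\rfloor=3$, so by Theorem \ref{thm51} it suffices, for each prime power $q$, to exhibit an explicit $R$-uniform hypergraph $\mathcal{H}$ on vertex set $V=\mathbb{F}_q$ that is simultaneously
\[
\mathcal{G}_R\!\bigl(2R-\lfloor\delta/2\rfloor-1,\,2\bigr)\text{-free}\quad\text{and}\quad\mathcal{G}_R\!\bigl(3R-\delta-1,\,3\bigr)\text{-free},
\]
with as many edges as possible; the length of the resulting LRC is then $n=R\cdot|E(\mathcal{H})|$. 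So the task reduces to an extremal sparse hypergraph construction.

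\medskip
\noindent The plan is to feed this into the Alon--Shapira construction of Lemma \ref{lem42}. For any parameter $l$ with $2\le l<R$, that construction yields, on $q$ vertices, an explicit $R$-graph with $q^{\,l-o(1)}$ edges that is simultaneously $\mathcal{G}_R(3(R-l)+l+1,3)$-free (i.e.\ $\mathcal{G}_R(3R-2l+1,3)$-free) and $\mathcal{G}_R(2(R-l)+l,2)$-free (i.e.\ $\mathcal{G}_R(2R-l,2)$-free). Since being $\mathcal{G}_R(v',e)$-free is stronger when $v'$ is larger, I need
\[
3R-2l+1\;\ge\;3R-\delta-1\quad\text{and}\quad 2R-l\;\ge\;2R-\lfloor\delta/2\rfloor-1,
\]
equivalently $l\le (\delta+2)/2$ and $l\le \lfloor\delta/2\rfloor+1$. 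Both bounds coincide when $\delta$ is even and the second is strictly tighter when $\delta$ is odd, so I choose the maximum admissible integer
\[
l\;=\;\lfloor\delta/2\rfloor+1\;=\;
\begin{cases}\delta/2+1,&\delta\text{ even},\\[2pt](\delta+1)/2,&\delta\text{ odd}.\end{cases}
\]
The condition $2\le l<R$ is automatic: $\delta\ge 2$ gives $l\ge 2$, while $r\ge d-\delta\ge 2\delta+1$ yields $R=r+\delta-1\ge 3\delta$, which dominates $l$ easily.

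\medskip
\noindent With this choice of $l$, Lemma \ref{lem42} produces the required explicit $R$-graph on $\mathbb{F}_q$ with $|E(\mathcal{H})|=q^{l-o(1)}$ edges, and Theorem \ref{thm51} then packages it into an optimal $[n,k,d;(r,\delta)_a]_q$-LRC of length
\[
n\;=\;R\cdot|E(\mathcal{H})|\;=\;
\begin{cases}Rq^{\,\delta/2+1-o(1)},&\delta\text{ even},\\[2pt]Rq^{\,(\delta+1)/2-o(1)},&\delta\text{ odd},\end{cases}
\]
as claimed. The only delicate step is simultaneously matching both forbidden hypergraph families with a single integer parameter $l$; the tightness of the $l\le\lfloor\delta/2\rfloor+1$ constraint for odd $\delta$ is what causes the exponent to drop from $\delta/2+1$ to $(\delta+1)/2$. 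Everything else (explicitness, the vertex set being $\mathbb{F}_q$, the asymptotic count, and the passage from hypergraph to LRC) is read off directly from Lemma \ref{lem42} and Theorem \ref{thm51}.
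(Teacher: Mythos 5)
Your proposal is correct and takes essentially the same route as the paper: both reduce the problem via Theorem~\ref{thm51} to constructing an $R$-graph that is simultaneously $\mathcal{G}_R(2R-\lfloor\delta/2\rfloor-1,2)$-free and $\mathcal{G}_R(3R-\delta-1,3)$-free, and then invoke the explicit Alon--Shapira construction of Lemma~\ref{lem42} with $l=\lfloor\delta/2\rfloor+1$ (i.e.\ $l=\delta/2+1$ for even $\delta$, $l=(\delta+1)/2$ for odd $\delta$). The only cosmetic difference is that you frame the parity split as one constraint becoming ``strictly tighter'' for odd $\delta$, whereas once $l$ is forced to be an integer both inequalities yield the same bound $l\le(\delta+1)/2$; this does not affect the argument.
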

\begin{proof}
When $\delta$ is even, take $l=\frac{\delta}{2}+1$ in Lemma \ref{lem42}, there exists an $R$-graph $\mathcal{H}_0$ which is both $\mathcal{G}_R(3R-\delta-1,3)$-free and $\mathcal{G}_R(2R-\frac{\delta}{2}-1,2)$-free with
\begin{equation*}
q^{l-o(1)}=q^{\frac{\delta}{2}+1-o(1)}
\end{equation*}
edges.

When $\delta$ is odd, take $l=\frac{\delta+1}{2}$ in Lemma \ref{lem42}, there exists an $R$-graph $\mathcal{H}_0$ which is both $\mathcal{G}_R(3R-\delta,3)$-free and $\mathcal{G}_R(2R-\frac{\delta+1}{2},2)$-free with
\begin{equation*}
q^{l-o(1)}=q^{\frac{\delta+1}{2}-o(1)}
\end{equation*}
edges.

Therefore, the conclusion follows from Theorem \ref{thm51}.
\end{proof}

\begin{cor}\label{cor43}
For $2\delta+1\leq h\leq3\delta$, there exist explicit constructions of optimal $[n,k,h+\delta;(r,\delta)_i]$-LRCs with length
\begin{equation*}
	n = \begin{cases}
    Rq^{\frac{\delta}{2}+1-o(1)}, & \text{if}\  \delta \text{ is even};\\
    Rq^{\frac{\delta+1}{2}-o(1)}, & \emph{if}\  \delta \text{ is odd}.\\
		\end{cases}
\end{equation*}
\end{cor}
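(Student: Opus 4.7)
The plan is to mirror the proof of Corollary \ref{cor42} almost verbatim, replacing the application of Theorem \ref{thm51} (all symbol locality) with Theorem \ref{thm52} (information locality). First I would observe that for $2\delta+1 \leq h \leq 3\delta$, we have $h+\delta-1 \in [3\delta, 4\delta-1]$, so $\mu = \lfloor (h+\delta-1)/\delta \rfloor = 3$. Consequently, the hypergraph hypothesis of Theorem \ref{thm52} reduces to being simultaneously $\mathcal{G}_R(2R-\lfloor\delta/2\rfloor-1,\,2)$-free and $\mathcal{G}_R(3R-\delta-1,\,3)$-free. This is precisely the pair of conditions that Lemma \ref{lem42} can supply for a suitable choice of the parameter $l$.

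Next I would choose an arbitrary $h$-subset $G_{l+2} \subseteq \mathbb{F}_q$ and set the vertex set $V := \mathbb{F}_q \setminus G_{l+2}$, which has size $q-h$. I then apply Lemma \ref{lem42} on this vertex set, making the same choice of $l$ as in the proof of Corollary \ref{cor42}: take $l = \delta/2 + 1$ when $\delta$ is even, and $l = (\delta+1)/2$ when $\delta$ is odd. In the even case, a direct substitution gives $\mathcal{G}_R(3R-2l+1,\,3) = \mathcal{G}_R(3R-\delta-1,\,3)$-free and $\mathcal{G}_R(2R-l,\,2) = \mathcal{G}_R(2R-\delta/2-1,\,2)$-free, matching the required properties exactly. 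In the odd case, $\mathcal{G}_R(3R-\delta,\,3)$-freeness is strictly stronger than the required $\mathcal{G}_R(3R-\delta-1,\,3)$-freeness, while $\mathcal{G}_R(2R-l,\,2) = \mathcal{G}_R(2R-(\delta+1)/2,\,2) = \mathcal{G}_R(2R-\lfloor\delta/2\rfloor-1,\,2)$-free matches exactly. In either case, Lemma \ref{lem42} yields an $R$-graph with $(q-h)^{l-o(1)} = q^{l-o(1)}$ edges on $V$.

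Finally I would invoke Theorem \ref{thm52} with the hypergraph just constructed, obtaining an optimal $[n,k,h+\delta;(r,\delta)_i]_q$-LRC of length $n = R|E| - r + v$. Since $|E| = q^{l-o(1)}$ and the additive constant $-r+v$ is absorbed into $q^{-o(1)}$, the length simplifies to $Rq^{\delta/2+1-o(1)}$ when $\delta$ is even and $Rq^{(\delta+1)/2-o(1)}$ when $\delta$ is odd, which is exactly the claimed bound.

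I do not anticipate a serious obstacle: the two potential pitfalls are only bookkeeping. The first is ensuring that the choice of $l$ in Lemma \ref{lem42} satisfies $2 \leq l < R$, which is automatic for the relevant range of $r$ and $\delta$. The second is the arithmetic in the odd case, where one must notice that $\lfloor\delta/2\rfloor = (\delta-1)/2$ forces $2R - \lfloor\delta/2\rfloor - 1 = 2R - (\delta+1)/2 = 2R - l$, so that the output of Lemma \ref{lem42} aligns with the hypothesis of Theorem \ref{thm52}; once this alignment is verified, the remainder of the argument is immediate.
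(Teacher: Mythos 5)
Your proposal is correct and takes essentially the same route as the paper: invoke Lemma~\ref{lem42} with $l=\delta/2+1$ (even $\delta$) or $l=(\delta+1)/2$ (odd $\delta$) on the vertex set $\mathbb{F}_q\setminus G_{l+2}$, then feed the resulting sparse hypergraph into Theorem~\ref{thm52}. Your computations aligning the $\mathcal{G}_R(\cdot,\cdot)$-freeness parameters of Lemma~\ref{lem42} with the hypotheses of Theorem~\ref{thm52}, including the observation that the odd case gives a strictly stronger $\mathcal{G}_R(3R-\delta,3)$-freeness, are accurate and simply spell out details the paper leaves implicit.
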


\begin{proof}
Based on the sparse hypergraph given by Lemma \ref{lem42} and Construction B, the conclusion easily follows from Theorem \ref{thm52}.
\end{proof}

For LRCs with larger minimal distance, we have the following results from Theorem \ref{main51}, Theorem \ref{thm51} and Theorem \ref{thm52}.

\begin{cor}\label{lowerboundA}
For $\delta \geq 2$, $r\geq d-\delta$, $d\geq 3\delta+1$ and $q$ large enough. Let $\mu=\lfloor\frac{d-1}{\delta}\rfloor$, then there exists an optimal $[n,k,d;(r,\delta)_a]_q$-LRC of length
\begin{equation*}\label{lb01A}
n=\begin{cases}
\Omega(q^{\frac{\delta}{2}}{(q\log{q})}^{\frac{1}{\mu-1}}), ~\text{when}~\delta~\text{is even};\\
\Omega(q^{\frac{\delta}{2}}{(q\log{q})}^{\frac{1}{2(\mu-1)}}), ~\text{when}~\delta~\text{is odd}~{and}~\mu~\text{is even};\\
\Omega(q^{\frac{\delta}{2}}{(q\log{q})}^{\frac{1}{2(\mu-2)}}), ~\text{when both}~\delta~\text{and}~\mu>3~\text{are odd};\\
\Omega(q^{\frac{\delta+1}{2}}), ~\text{when}~\delta~\text{is odd}~{and}~\mu=3.
\end{cases}
\end{equation*}
\end{cor}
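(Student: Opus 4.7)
The plan is to derive Corollary~\ref{lowerboundA} as a direct consequence of Theorem~\ref{main51} (super-linear sparse hypergraphs) together with the hypergraph-to-code conversion of Theorem~\ref{thm51}. Essentially all the heavy lifting has been done upstream; what remains is parameter matching and an algebraic rearrangement of the exponents.

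First I would verify the hypotheses of both theorems. Given $d\geq 3\delta+1$ and $r\geq d-\delta$, we obtain $\mu = \lfloor (d-1)/\delta\rfloor \geq 3$ and $R = r+\delta-1 \geq d-1 \geq 3\delta \geq \min\{\delta,3\}$ for $\delta\geq 2$, so Theorem~\ref{main51} applies with vertex set $V = \mathbb{F}_q$. The conditions $r\geq d-\delta$ and $d\geq 2\delta+1$ required by Theorem~\ref{thm51} also clearly hold.

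Next, for $q$ sufficiently large, Theorem~\ref{main51} produces an $R$-uniform hypergraph $\mathcal{H}$ on $\mathbb{F}_q$ that is simultaneously $\mathcal{G}_R(iR - \lfloor (i-1)\delta/2\rfloor - 1, i)$-free for all $2\leq i\leq \mu$, with edge count $|E|$ as in~(\ref{shg01}). Feeding $\mathcal{H}$ into Theorem~\ref{thm51} yields an optimal $[n,k,d;(r,\delta)_a]_q$-LRC of length $n = R|E|$. Since $R$ depends only on $r$ and $\delta$, not on $q$, the constant is absorbed into the $\Omega(\cdot)$, giving $n = \Omega(|E|)$.

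Finally, to put the bound in the form stated, I would rewrite each of the four cases of~(\ref{shg01}) using the identity $q^{\delta/2 + c}(\log q)^{c} = q^{\delta/2}(q\log q)^{c}$ for the appropriate exponent $c\in\{1/(\mu-1),\, 1/(2(\mu-1)),\, 1/(2(\mu-2))\}$; in the last case ($\mu=3$ with $\delta$ odd) there is no logarithmic factor and $q^{(\delta+1)/2}$ is already in the target form. The only bookkeeping is to check that the four parity regimes in Theorem~\ref{main51} align one-for-one with those in the corollary, which they do by construction. No serious obstacle remains at this level; the real difficulty was already handled inside Theorem~\ref{main51}, in particular the probabilistic argument of Claim~\ref{claim2} that covers the odd-$\mu$ edge case outside the scope of Lemma~\ref{lem45}.
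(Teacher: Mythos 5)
Your proposal is correct and matches the paper's (implicit) proof exactly: the corollary is obtained by plugging the sparse hypergraph of Theorem~\ref{main51} into the hypergraph-to-code conversion of Theorem~\ref{thm51}, and your parameter checks ($\mu\geq 3$, $R\geq\min\{\delta,3\}$, $r\geq d-\delta$, $d\geq 2\delta+1$) and the algebraic rewrite $q^{\delta/2+c}(\log q)^c = q^{\delta/2}(q\log q)^c$ are all sound.
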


\begin{cor}\label{lowerboundB}
For $1\leq v\leq r$, $\delta \geq 2$ and $h\geq 2\delta+1$ and $q$ large enough. Let $\mu=\lfloor\frac{h+\delta-1}{\delta}\rfloor$, then there exists an optimal $[n,k,h+\delta;(r,\delta)_i]_q$-LRC of length
\begin{equation*}\label{lb01B}
n=\begin{cases}
\Omega(q^{\frac{\delta}{2}}{(q\log{q})}^{\frac{1}{\mu-1}}), ~\text{when}~\delta~\text{is even};\\
\Omega(q^{\frac{\delta}{2}}{(q\log{q})}^{\frac{1}{2(\mu-1)}}), ~\text{when}~\delta~\text{is odd}~{and}~\mu~\text{is even};\\
\Omega(q^{\frac{\delta}{2}}{(q\log{q})}^{\frac{1}{2(\mu-2)}}), ~\text{when both}~\delta~\text{and}~\mu>3~\text{are odd};\\
\Omega(q^{\frac{\delta+1}{2}}), ~\text{when}~\delta~\text{is odd}~{and}~\mu=3.
\end{cases}
\end{equation*}
\end{cor}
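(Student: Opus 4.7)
The plan is to obtain Corollary \ref{lowerboundB} by composing Theorem \ref{main51} (existence of large sparse hypergraphs) with Theorem \ref{thm52} (translation of sparse hypergraphs into optimal $(r,\delta)_i$-LRCs). The only thing to check is that the parameters line up and that replacing the hypergraph's vertex count $n$ by $q-h$ does not affect the asymptotics.

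First I would fix any $h$-subset $G_{l+2} \subseteq \mathbb{F}_q$ and set $V = \mathbb{F}_q \setminus G_{l+2}$ so that $|V| = q - h$. Since $r$, $\delta$, $v$, and $h$ are treated as constants while $q \to \infty$, we have $|V| = \Theta(q)$ and, setting $R = r + \delta - 1$, we verify $R \geq \min\{\delta,3\}$ (indeed $R \geq \delta$ and, for $\delta \geq 3$, also $R \geq 3$), so the hypotheses of Theorem \ref{main51} are satisfied. Applying Theorem \ref{main51} with vertex set $V$, uniformity $R$, and $\mu = \lfloor (h+\delta-1)/\delta \rfloor$ produces an $R$-uniform hypergraph $\mathcal{H}(V,E)$ that is simultaneously $\mathcal{G}_R(iR - \lfloor (i-1)\delta/2 \rfloor - 1, i)$-free for all $2 \leq i \leq \mu$, whose edge count $|E|$ enjoys the four case-by-case lower bounds in (\ref{shg01}) with $n$ replaced by $q-h$.

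Next I would feed this hypergraph into Theorem \ref{thm52}: since $h \geq 2\delta + 1 \geq \delta + 1$ forces $\mu \geq 3$, the edge set $E$ is nonempty and we may single out one edge $e_{l+1} \in E$ to play the role of the $(v+\delta-1)$-generating set, using its first $v+\delta-1$ elements, while assigning the remaining $l = |E|-1$ edges to the generating sets $G_1, \ldots, G_l$ of Construction B. Theorem \ref{thm52} then yields an optimal $[n,k,h+\delta;(r,\delta)_i]_q$-LRC of length $n = R|E| - r + v$.

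Finally, since $R$, $r$, $v$ are constants, the code length is $\Theta(|E|)$, and since $q - h = \Theta(q)$ for fixed $h$, the asymptotic bounds on $|E|$ in terms of $|V| = q-h$ transfer verbatim to bounds on $n$ in terms of $q$; this gives exactly the four branches in the claim. No serious obstacle arises because the hard combinatorial work has already been done in Theorems \ref{main51} and \ref{thm52}; the only bookkeeping is matching the parity cases for $\delta$ and $\mu$ between the two statements, which is immediate from the definition of $\mu$ and the fact that $\lfloor (h+\delta-1)/\delta \rfloor \geq 3$ whenever $h \geq 2\delta+1$.
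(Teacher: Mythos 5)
Your proposal is correct and follows exactly the route the paper intends: Corollary \ref{lowerboundB} is stated immediately after the sentence ``For LRCs with larger minimal distance, we have the following results from Theorem \ref{main51}, Theorem \ref{thm51} and Theorem \ref{thm52},'' so the paper too obtains it by instantiating Theorem \ref{main51} on the vertex set $\mathbb{F}_q\setminus G_{l+2}$ and feeding the resulting hypergraph into Theorem \ref{thm52}. Your bookkeeping (checking $R\geq\min\{\delta,3\}$, $\mu\geq 3$ from $h\geq 2\delta+1$, and that $|V|=q-h=\Theta(q)$ preserves the four asymptotic branches) is the whole content of the omitted argument.
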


\begin{rem}
For more details about sparse hypergraphs and other related applications, we recommend \cite{bujtas2015turan} and \cite{ST20} for interested readers.
\end{rem}

In Table I and Table II, we have listed all the known parameters of optimal LRCs of super-linear length together with our results. As one can see, for optimal $(r,\delta)_a$-LRCs:
\begin{itemize}
  \item when $\delta=2$, our results from Corollary \ref{cor42} and Corollary \ref{lowerboundA} agree with those in \cite{XY19} for $d=7,8$ and $d\geq 11$; for $d=9,10$, Xing and Yuan \cite{XY19} provided longer codes;
  \item when $\delta>2$ and $2\delta+1\leq d\leq 3\delta$, Cai et.al \cite{CMST20} provided the longest known codes of length $\Omega(q^{\delta})$ which meets the upper bound for the case $d=2\delta+1$;
  \item when $\delta>2$ and $d\geq 3\delta+1$, Corollary \ref{cor42} gives the longest known codes for $d\leq 4\delta$ and $\delta$ is even; Corollary \ref{lowerboundA} gives the longest known codes for other cases.
\end{itemize}
For optimal $(r,\delta)_i$-LRCs:
\begin{itemize}
  \item when $\delta>2$ and $\delta+1\leq d\leq 2\delta$, Corollary \ref{cor40} provides a code of arbitrarily long length;
  \item when $\delta>2$ and $2\delta+1\leq d\leq 3\delta$, Cai and Schwartz \cite{CS20} provided codes of order optimal length $\Omega(q^{\delta})$;
  \item when $\delta>2$ and $d\geq 3\delta+1$, Corollary \ref{cor43} gives the longest known codes for $d\leq 4\delta$ and $\delta$ is even; Corollary \ref{lowerboundB} gives the longest known codes for other cases.
\end{itemize}

\begin{table}
  \caption{Optimal $(r,\delta)_a$-LRCs over $\mathbb{F}_q$ with super-linear lengths and corresponding upper bounds}
  \begin{center}
  \begin{tabular}{cccc}
  \toprule
  Distance & Other conditions & Length~ & ~Upper Bound \\
  \midrule
  $d=5,6$ & $\delta=2$, $r\geq d-2$, $r+1|n$ & $\Omega(q^{2})$ (\cite{Jin19}, \cite{guruswami2019long} and \cite{CMST20}) & $\begin{cases}O(q^2), d=5\\O(q^3), d=6\end{cases}$(\cite{guruswami2019long})\\
  \midrule
  $d=7,8$ & $\delta=2$, $r\geq d-2$, $r+1|n$ & $\Omega(q^{2-o(1)})$ (\cite{XY19}) & $\begin{cases}O(q^3), d=7\\O(q^4), d=8\end{cases}$(\cite{guruswami2019long})\\
  \midrule
  $d=9,10$ & $\delta=2$, $r\geq d-2$, $r+1|n$ & $\Omega(q^{\frac{3}{2}-o(1)})$ (\cite{XY19}) & $\begin{cases}O(q^\frac{5}{2}), d=9\\O(q^3), d=10\end{cases}$(\cite{guruswami2019long})\\
  \midrule
  $d\geq 11$ & $\delta=2$, $r\geq d-2$, $r+1|n$ & \tabincell{c}{$\Omega(q(q\log{q})^{\frac{1}{\lfloor(d-3)/2\rfloor}})$ \\(\cite{XY19} and \cite{ST20})}& $\begin{cases}O(dq^3), 4\nmid d\\O(dq^{3+\frac{4}{d-4}}), 4|d\end{cases}$(\cite{guruswami2019long})\\
  \midrule
  $\delta+1\leq d\leq 2\delta$ & \tabincell{c}{$d\leq r+\delta-1\leq q$, \\$r+\delta-1|n$} & $\infty$ (\cite{CMST20} and \cite{ZL20}) & $\infty$ \\
  \midrule
  $d\ge 2\delta+1$ & $r\geq d-\delta+1$, $r+\delta-1|n$ & $\Omega(q^{\frac{\delta}{\lceil d/\delta\rceil-2}})$ (\cite{CMST20}) & \tabincell{c}{$\begin{cases}O(q^{\frac{2(d-\delta-1)}{\lfloor(d-1)/\delta\rfloor}-1}),\lfloor\frac{d-1}{\delta}\rfloor~\text{odd}\\O(q^{\frac{2(d-\delta)}{\lfloor(d-1)/\delta\rfloor}-1}),\lfloor\frac{d-1}{\delta}\rfloor~\text{even}\end{cases}$ (\cite{CMST20})}\\
  \midrule
  $d\geq 2\delta+1$ & $r\geq d-\delta+1$, $r+\delta-1|n$ & $\Omega(q^{1+\lfloor\frac{\delta^2}{d-\delta}\rfloor})$~(\cite{CMST20}) & ------\\
  \midrule
  $3\delta+1\leq d\leq 4\delta$ & \tabincell{c}{$\delta$ even, $r\geq d-\delta$,\\ $r+\delta-1|n$} & $\Omega(q^{1+\frac{\delta}{2}-o(1)})$~ (Corollary \ref{cor42})& ------\\
  \midrule
  $3\delta+1\leq d\leq 4\delta$ & \tabincell{c}{$\delta$ odd, $r\geq d-\delta$,\\ $r+\delta-1|n$} & $\Omega(q^{\frac{\delta+1}{2}})$~ (Corollary \ref{lowerboundA})& ------\\
  \midrule
  $d\geq 3\delta+1$ & \tabincell{c}{$\delta$ even, $r\geq d-\delta$,\\ $r+\delta-1|n$} & \tabincell{c}{$\Omega(q^{\frac{\delta}{2}}{(q\log{q})}^{\frac{1}{\lfloor\frac{d-1}{\delta}\rfloor-1}})$\\(Corollary \ref{lowerboundA})}& ------\\
  \midrule
  $d\geq 3\delta+1$ & \tabincell{c}{$\delta$ odd, $\lfloor\frac{d-1}{\delta}\rfloor$ even, \\$r\geq d-\delta$, $r+\delta-1|n$} & \tabincell{c}{$
  \Omega(q^{\frac{\delta}{2}}{(q\log{q})}^{\frac{1}{2(\lfloor\frac{d-1}{\delta}\rfloor-1)}})$ \\(Corollary \ref{lowerboundA})}& ------\\
  \midrule
  $d\geq 3\delta+1$ & \tabincell{c}{$\delta$ odd, $\lfloor\frac{d-1}{\delta}\rfloor>3$ odd, \\$r\geq d-\delta$, $r+\delta-1|n$} & \tabincell{c}{$
  \Omega(q^{\frac{\delta}{2}}{(q\log{q})}^{\frac{1}{2(\lfloor\frac{d-1}{\delta}\rfloor-2)}})$ \\(Corollary \ref{lowerboundA})}& ------\\
  \bottomrule
  \end{tabular}
  \end{center}
\end{table}

\begin{table}
  \caption{Optimal $(r,\delta)_i$-LRCs over $\mathbb{F}_q$ with super-linear lengths and corresponding upper bounds}
  \begin{center}
  \begin{tabular}{cccc}
  \toprule
  Distance & Other conditions & Length~ & ~Upper Bound \\
  \midrule
  $\delta+1\leq d\leq 2\delta$ & \tabincell{c}{$q\geq r+\delta-1$} & $\infty$ (Corollary \ref{cor40}) & $\infty$ \\
  \midrule
  $d\ge 2\delta+1$ & Null & \tabincell{c}{$\Omega(q^{\tau+1})$, where \\$\tau=\max\{x\in \mathbb{N}^*:\lceil\frac{\delta}{x}\rceil=\lceil\frac{d-\delta}{\delta}\rceil\}$ \\(\cite{CS20})} & \tabincell{c}{$\begin{cases}O(q^{\frac{2(d-\delta-a-1)}{T(a)-1}-1}),T(a)~\text{odd}\\O(q^{\frac{2(d-\delta-a)}{T(a)-1}-1}),T(a)~\text{even}\end{cases}$ (\cite{CS20})\\when $r|k$ and $T(a)\geq 2$,\\where $T(a)=\lfloor\frac{d-a-1}{\delta}\rfloor$, for any\\$0\leq a\leq d-\delta$}\\
  \midrule
  $3\delta+1\leq d\leq 4\delta$ & \tabincell{c}{$\delta$ even} & $\Omega(q^{1+\frac{\delta}{2}-o(1)})$~ (Corollary \ref{cor43})& ------\\
  \midrule
  $3\delta+1\leq d\leq 4\delta$ & \tabincell{c}{$\delta$ odd} & \tabincell{c}{$\Omega(q^{\frac{\delta+1}{2}})$ (Corollary \ref{lowerboundB})}& ------\\
  \midrule
  $d\geq 3\delta+1$ & \tabincell{c}{$\delta$ even} & \tabincell{c}{$\Omega(q^{\frac{\delta}{2}}{(q\log{q})}^{\frac{1}{\lfloor\frac{d-1}{\delta}\rfloor-1}})$\\(Corollary \ref{lowerboundB})}& ------\\
  \midrule
  $d\geq 3\delta+1$ & \tabincell{c}{$\delta$ odd and $\lfloor\frac{d-1}{\delta}\rfloor$ even} & \tabincell{c}{$
  \Omega(q^{\frac{\delta}{2}}{(q\log{q})}^{\frac{1}{2(\lfloor\frac{d-1}{\delta}\rfloor-1)}})$ \\(Corollary \ref{lowerboundB})}& ------\\
  \midrule
  $d\geq 3\delta+1$ & \tabincell{c}{$\delta$ odd and $\lfloor\frac{d-1}{\delta}\rfloor>3$ odd} & \tabincell{c}{$
  \Omega(q^{\frac{\delta}{2}}{(q\log{q})}^{\frac{1}{2(\lfloor\frac{d-1}{\delta}\rfloor-2)}})$ \\(Corollary \ref{lowerboundB})}& ------\\
  \bottomrule
  \end{tabular}
  \end{center}
\end{table}

\section{Applications: Constructions of H-LRCs and generalized sector-disk codes}\label{sec6}

In this section, we present two applications of Constructions A and B, respectively. In Subsection \ref{sec61}, based on Construction A, we construct optimal H-LRCs with super-linear length, which improves the results given by \cite{Zhang20}. In Subsection \ref{sec62}, based on Construction B, we provide two constructions of generalized sector-disk codes, which provide a code of unbounded length.

\subsection{Optimal Codes with Hierarchical Locality}\label{sec61}

The conception of \emph{hierarchical locality} was first introduced by Sasidharan et al. in \cite{SAK15}. The authors considered the intermediate situation when the code can correct a single erasure by contacting a small number of helper nodes, while at the same time maintains local recovery of multiple erasures. Such codes are called locally recoverable codes with hierarchical locality, its formal definition is given as follows.
\begin{definition}\cite{SAK15}\label{HLRC}
Let $2\leq \delta_2<\delta_1$ and $r_2\leq r_1$ be positive integers. An $[n,k,d]_q$ code $\mathcal{C}$ is an H-LRC with parameters $[(r_1,\delta_1),(r_2,\delta_2)]$ if for every $i\in [n]$, there is a punctured code $\mathcal{C}_i$ such that $i\in Supp(\mathcal{C}_i)$ and the following conditions hold
\begin{itemize}
  \item [1)] $\dim(\mathcal{C}_i)\leq r_1$;
  \item [2)] $d(\mathcal{C}_i)\geq \delta_1$;
  \item [3)] $\mathcal{C}_i$ is a code with $(r_2,\delta_2)$-locality.
\end{itemize}
\end{definition}
For each $i\in [n]$, the punctured code $\mathcal{C}_i$ associated with $c_i$ is referred to as a \emph{middle code} of $\mathcal{C}$. In \cite{SAK15}, the authors extended the Singleton-type bound (\ref{singleton_bound}) and proved the following bound for H-LRCs with parameters $[(r_1,\delta_1),(r_2,\delta_2)]$:
\begin{equation}\label{singleton_bound_H}
d\leq n-k+1-(\left\lceil\frac{k}{r_2}\right\rceil-1)(\delta_2-1)-(\left\lceil\frac{k}{r_1}\right\rceil-1)(\delta_1-\delta_2).
\end{equation}
H-LRCs with parameters $[(r_1,\delta_1),(r_2,\delta_2)]$ that attain this bound are called optimal. Using Reed-Solomon codes, Sasidharan et al. \cite{SAK15} construct optimal H-LRCs of length $n\leq q-1$. Later in \cite{BBV2019}, Ballentine et al. presented a general construction of H-LRCs via maps between algebraic curves. From the perspective of parity-check matrices, Zhang and Liu \cite{ZL20} obtained a family of optimal H-LRCs with parameters $[(r_1,\delta_1),(r_2,\delta_2)]$ and minimum distance $d\leq 3\delta_2$. Recently, in \cite{Zhang20}, Zhang extended the constructions in \cite{ZL20} and obtained H-LRCs with new parameters.

Based on Construction A and a correspondence between optimal $(r_2,\delta_2)_a$-LRCs and optimal H-LRCs with parameters $[(r_1,\delta_1),(r_2,\delta_2)]$ from \cite{ZL20}, we have the following result.

\begin{thm}\label{opt_HLRCs}
For positive integers $m_2$, $r_2$, $\delta_2\geq 2$ and $d_2< r_2+\delta_2$. Let $\mathbf{H}_{middle}=\mathbf{H}_{middle}(m_2,r_2,\delta_2,d_2)$ be the parity-check matrix in (\ref{pcm1}) from Construction A with parameters $m=m_2$, $r=r_2$, $\delta=\delta_2$ and $d=d_2$. For positive integer $m_1<\frac{r_2}{d_2-\delta_2}$, define
\begin{equation}\label{pcm_HLRC}
\mathbf{H}(m_1,\mathbf{H}_{middle})=\left(\begin{array}{cccc}
\mathbf{H}_{middle} &O &\cdots &O\\
O &\mathbf{H}_{middle} &\cdots &O\\
\vdots &\vdots &\ddots &\vdots\\
O &O &\cdots &\mathbf{H}_{middle}
\end{array}\right),
\end{equation}
where there are $m_1$ $\mathbf{H}_{middle}$s on the diagonal. Let $r_1$ and $\delta_1$ be positive integers satisfying
\begin{equation}\label{para_cond}
r_1(1-\frac{1}{m_1})<m_2r_2-d+\delta_2\leq r_1, \text{ and } \delta_1=d_2.
\end{equation}
If for any subset $S\subseteq[m_2]$ with $2\leq |S|\leq\lfloor\frac{d_2-1}{\delta_2}\rfloor$, we have
$|\bigcup_{i\in S}G_i|\geq (r_2+\frac{\delta_2}{2}-1)|S|+\frac{\delta_2}{2}$. Then, the code $\mathcal{C}$ with parity-check matrix $\mathbf{H}(m_1,\mathbf{H}_{middle})$ is an optimal $[n,k,d]_q$ H-LRC with parameters $[(r_1,\delta_1),(r_2,\delta_2)]$, where $n=m_1m_2(r_2+\delta_2-1)$, $k=m_1(m_2r_2-d_2+\delta_2)$ and $d=d_2$.
\end{thm}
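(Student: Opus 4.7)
The plan is to exploit the block-diagonal structure of $\mathbf{H}(m_1,\mathbf{H}_{middle})$ and reduce essentially everything to properties of the middle code $\mathcal{C}_{middle}$ defined by $\mathbf{H}_{middle}$. First, I would observe that the hypothesis on the generating sets $\{G_i\}_{i=1}^{m_2}$ is precisely the hypothesis of Theorem \ref{main0} with parameters $(m, r, \delta, d) = (m_2, r_2, \delta_2, d_2)$. Hence $\mathcal{C}_{middle}$ is an optimal $[n_2, k_2, d_2; (r_2,\delta_2)_a]_q$-LRC with $n_2 = m_2(r_2+\delta_2-1)$ and $k_2 = m_2 r_2 - d_2 + \delta_2$. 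In particular, every symbol of $\mathcal{C}_{middle}$ has $(r_2,\delta_2)$-locality and $d(\mathcal{C}_{middle}) = d_2$.

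Next, because the parity-check matrix is the direct sum of $m_1$ copies of $\mathbf{H}_{middle}$, a codeword of $\mathcal{C}$ is exactly an ordered $m_1$-tuple of codewords of $\mathcal{C}_{middle}$. This immediately gives $n = m_1 n_2$ and $k = m_1 k_2$ as claimed. For the minimum distance, any non-zero codeword has at least one non-zero block, and each non-zero block has weight $\geq d_2$, so $d(\mathcal{C}) \geq d_2$; placing a single minimum-weight codeword of $\mathcal{C}_{middle}$ in one block and zeros elsewhere shows $d(\mathcal{C}) = d_2$. Thirdly, to verify the hierarchical locality, for each index $j \in [n]$ I would take $\mathcal{C}_j$ to be the punctured code on the coordinates of the block containing $j$: this is a copy of $\mathcal{C}_{middle}$, so $\dim(\mathcal{C}_j) = k_2 \leq r_1$ by the right inequality in (\ref{para_cond}), $d(\mathcal{C}_j) = d_2 = \delta_1$, and $\mathcal{C}_j$ inherits $(r_2,\delta_2)$-locality on all its symbols. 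Thus $\mathcal{C}$ is an H-LRC with parameters $[(r_1,\delta_1),(r_2,\delta_2)]$.

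The remaining task is to confirm that $d = d_2$ meets the bound (\ref{singleton_bound_H}) with equality. This requires pinning down the two ceilings. Writing $k = m_1 m_2 r_2 - m_1(d_2-\delta_2)$, the hypothesis $m_1 < r_2/(d_2-\delta_2)$ gives $0 < m_1(d_2-\delta_2)/r_2 < 1$, so $\lceil k/r_2\rceil = m_1 m_2$. From (\ref{para_cond}), $k_2 \leq r_1$ yields $k \leq m_1 r_1$ and $r_1(1-1/m_1) < k_2$ yields $k > (m_1-1)r_1$, so $\lceil k/r_1\rceil = m_1$. Substituting these into the right-hand side of (\ref{singleton_bound_H}) with $\delta_1 = d_2$,
\begin{equation*}
n-k+1-(m_1m_2-1)(\delta_2-1)-(m_1-1)(d_2-\delta_2)
\end{equation*}
telescopes (after using $n-k = m_1 m_2(\delta_2-1) + m_1(d_2-\delta_2)$) to $(\delta_2-1)+(d_2-\delta_2)+1 = d_2$, so the bound reads $d \leq d_2$, matching the distance already established.

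I do not expect a serious obstacle; the argument is essentially bookkeeping. The only subtle step is the verification that the two ceiling functions collapse to $m_1 m_2$ and $m_1$ respectively, which is precisely what the two inequalities $m_1 < r_2/(d_2-\delta_2)$ and $r_1(1-1/m_1) < k_2 \leq r_1$ are designed to guarantee; everything else reduces directly to Theorem \ref{main0} applied to each diagonal block.
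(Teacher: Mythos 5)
Your proof is correct and follows essentially the same route as the paper: invoke Theorem~\ref{main0} on each diagonal block to identify the middle code, read off $n$, $k$, and the hierarchical-locality conditions from the block-diagonal structure, and then verify that the two ceiling functions in the bound~(\ref{singleton_bound_H}) collapse to $m_1m_2$ and $m_1$ so that the right-hand side telescopes to $d_2$. Your write-up is in fact slightly more complete than the paper's, since you explicitly establish $d(\mathcal{C})=d_2$ by the block argument, whereas the paper leaves this implicit.
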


\begin{proof}
The proof is a routine check of the conditions in Definition \ref{HLRC} and the equality in (\ref{singleton_bound_H}).

Clearly, $n=m_1m_2(r_2+\delta_2-1)$ and $rank(\mathbf{H}(m_1,\mathbf{H}_{middle}))=m_1rank(\mathbf{H}_{middle})$. By Theorem \ref{main0}, the code $\mathcal{C}_{middle}$ with parity-check matrix $\mathbf{H}_{middle}$ is an optimal $[m_2(r_2+\delta_2-1),m_2r_2-d_2+\delta_2,d_2]_q$-LRC with $(r_2,\delta_2)_a$-locality. This verifies condition $3)$ in Definition \ref{def01}. And conditions $1),2)$ in Definition \ref{HLRC} follow from (\ref{para_cond}). Moreover, we also have $rank(\mathbf{H}_{middle})=m_2(\delta_2-1)+d_2-\delta_2$, which leads to $k=m_1(m_2r_2-d_2+\delta_2)$.

It remains to verify the optimality of $\mathcal{C}$ w.r.t. bound (\ref{singleton_bound_H}). From $d_2< r_2+\delta_2$, $m_1<\frac{r_2}{d_2-\delta_2}$ and $r_1(1-\frac{1}{m_1})<m_2r_2-d+\delta_2$, we have $\left\lceil\frac{k}{r_2}\right\rceil=m_1m_2$ and $\left\lceil\frac{k}{r_1}\right\rceil=m_1$. Therefore,
\begin{align*}
&n-k+1-(\left\lceil\frac{k}{r_2}\right\rceil-1)(\delta_2-1)-(\left\lceil\frac{k}{r_1}\right\rceil-1)(\delta_1-\delta_2)\\
=&m_1(m_2(\delta_2-1)+d_2-\delta_2)+1-(m_1m_2-1)(\delta_2-1)-(m_1-1)(d_2-\delta_2)\\
=&m_1(d_2-\delta_2)+\delta_2-(m_1-1)(d_2-\delta_2)=d_2.
\end{align*}

This completes the proof of Theorem \ref{opt_HLRCs}.
\end{proof}

Analogous to the case for $(r,\delta)_a$-LRCs, as immediate consequences of Theorem \ref{opt_HLRCs}, we have the following corollaries.

\begin{cor}\label{lowerbound_HLRC0}
Let $r_1$, $r_2$, $\delta_1$, $\delta_2$, $3\delta_2+1\leq d_2\leq4\delta_2$ be those parameters defined in Theorem \ref{opt_HLRCs}. Then, there exist explicit constructions of optimal $[n,k,d_2]_q$ H-LRCs with parameters $[(r_1,\delta_1),(r_2,\delta_2)]$ of length
\begin{equation*}
	n = \begin{cases}
    (r_2+\delta_2-1)q^{\frac{\delta_2}{2}+1-o(1)}, & \text{if}\  \delta_2 \text{ is even};\\
    (r_2+\delta_2-1)q^{\frac{\delta_2+1}{2}-o(1)}, & \text{if}\  \delta_2 \text{ is odd}.\\
		\end{cases}
\end{equation*}
\end{cor}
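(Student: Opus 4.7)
The plan is to realize Corollary~\ref{lowerbound_HLRC0} as a direct combination of Theorem~\ref{opt_HLRCs} and the explicit sparse hypergraph construction of Alon and Shapira (Lemma~\ref{lem42}), which is the same ingredient already used for the all-symbol case in Corollary~\ref{cor42}. The key observation is that the hypothesis on the generating sets $G_1,\ldots,G_{m_2}$ needed to invoke Theorem~\ref{opt_HLRCs} is exactly the sparse-hypergraph condition ``$|\bigcup_{i\in S}G_i|\ge (r_2+\tfrac{\delta_2}{2}-1)|S|+\tfrac{\delta_2}{2}$ for all $2\le |S|\le \lfloor (d_2-1)/\delta_2\rfloor$''. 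Since $3\delta_2+1\le d_2\le 4\delta_2$ forces $\lfloor (d_2-1)/\delta_2\rfloor=3$, only the cases $|S|=2$ and $|S|=3$ must be checked.

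First, I would apply Lemma~\ref{lem42} over $\mathbb{F}_q$ with $R=r_2+\delta_2-1$ and with $l=\tfrac{\delta_2}{2}+1$ when $\delta_2$ is even, or $l=\tfrac{\delta_2+1}{2}$ when $\delta_2$ is odd. This produces an $R$-uniform hypergraph $\mathcal{H}$ on vertex set $V=\mathbb{F}_q$ which is simultaneously $\mathcal{G}_R(2R-l,2)$-free and $\mathcal{G}_R(3(R-l)+l+1,3)$-free, with $|E(\mathcal{H})|=q^{l-o(1)}$ edges. A routine arithmetic check shows that in both parities $2R-l\ge (r_2+\tfrac{\delta_2}{2}-1)\cdot 2+\tfrac{\delta_2}{2}-1$ and $3(R-l)+l+1\ge (r_2+\tfrac{\delta_2}{2}-1)\cdot 3+\tfrac{\delta_2}{2}-1$, so the forbidden configurations are exactly those in which two (resp.\ three) of the $G_i$'s would have a union violating the desired lower bound.

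Next, setting $m_2=|E(\mathcal{H})|$ and taking the edges of $\mathcal{H}$ as the generating sets $G_1,\ldots,G_{m_2}$ in Construction~A, Theorem~\ref{main0} yields a parity-check matrix $\mathbf{H}_{\mathrm{middle}}=\mathbf{H}_{\mathrm{middle}}(m_2,r_2,\delta_2,d_2)$ whose code is an optimal $(r_2,\delta_2)_a$-LRC of length $m_2 R$. Now fix any positive integer $m_1$ (for instance $m_1=2$) satisfying $m_1<\tfrac{r_2}{d_2-\delta_2}$, which is permitted because $d_2\le 4\delta_2\le r_2+\delta_2$ gives room to choose such an $m_1$ together with $r_1,\delta_1$ obeying the window $r_1(1-\tfrac{1}{m_1})<m_2 r_2-d_2+\delta_2\le r_1$ and $\delta_1=d_2$. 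Applying Theorem~\ref{opt_HLRCs} with this $m_1$ and the matrix $\mathbf{H}_{\mathrm{middle}}$ delivers an optimal $[n,k,d_2]_q$ H-LRC with parameters $[(r_1,\delta_1),(r_2,\delta_2)]$ whose length equals
\[
n=m_1\cdot m_2\cdot(r_2+\delta_2-1)=(r_2+\delta_2-1)\cdot q^{l-o(1)},
\]
which matches the stated asymptotics since $m_1$ is an absolute constant.

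The only potentially delicate point, and hence the main obstacle, is verifying that the numerical parameters fit together cleanly in the two parity cases: namely that the Alon--Shapira forbidden sizes $2R-l$ and $3(R-l)+l+1$ translate faithfully into the union bound required by Theorem~\ref{opt_HLRCs} and that a valid triple $(m_1,r_1,\delta_1)$ meeting \eqref{para_cond} exists. Both are elementary but must be written out carefully for even vs.\ odd $\delta_2$; apart from this bookkeeping, the corollary follows from plugging the explicit hypergraph of Lemma~\ref{lem42} into Theorem~\ref{opt_HLRCs}, exactly parallel to the way Corollary~\ref{cor42} was deduced for optimal $(r,\delta)_a$-LRCs.
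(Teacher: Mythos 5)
Your route is the same as the paper's: take the explicit Alon--Shapira hypergraph of Lemma~\ref{lem42} with $R=r_2+\delta_2-1$ and $l=\lfloor\delta_2/2\rfloor+1$, verify that $\mathcal{G}_R(2R-l,2)$- and $\mathcal{G}_R(3R-2l+1,3)$-freeness give the union bound required by Theorem~\ref{opt_HLRCs} (your arithmetic here is correct and in fact tight for even $\delta_2$), and feed the resulting $\mathbf{H}_{\mathrm{middle}}$ into Theorem~\ref{opt_HLRCs}, exactly as Corollary~\ref{cor42} was deduced. The only slips are your parenthetical ``$m_1=2$'' and the claimed chain $4\delta_2\le r_2+\delta_2$: the hypotheses only give $d_2<r_2+\delta_2$, so $r_2/(d_2-\delta_2)>1$ but not necessarily $>2$, and $r_2\ge 3\delta_2$ need not hold when $d_2<4\delta_2$; however $m_1=1$ always satisfies $m_1<r_2/(d_2-\delta_2)$ and yields a valid triple via \eqref{para_cond} with $r_1=m_2r_2-d_2+\delta_2$, so the stated length asymptotic is unaffected and the corollary still follows.
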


\begin{cor}\label{lowerbound_HLRC}
Let $r_1$, $r_2$, $\delta_1$, $d_2\geq 3\delta_2+1$ be those parameters defined in Theorem \ref{opt_HLRCs} and $\mu=\lfloor\frac{d_2-1}{\delta_2}\rfloor$. For $q$ sufficiently large, there exists an optimal $[n,k,d_2]_q$ H-LRC with parameters $[(r_1,\delta_1),(r_2,\delta_2)]$ of length
\begin{equation}\label{lb01_HLRC}
n=\begin{cases}
\Omega(q^{\frac{\delta_2}{2}}{(q\log{q})}^{\frac{1}{\mu-1}}), ~\text{when}~\delta_2~\text{is even};\\
\Omega(q^{\frac{\delta_2}{2}}{(q\log{q})}^{\frac{1}{2(\mu-1)}}), ~\text{when}~\delta_2~\text{is odd}~{and}~\mu~\text{is even};\\
\Omega(q^{\frac{\delta_2}{2}}{(q\log{q})}^{\frac{1}{2(\mu-2)}}), ~\text{when both}~\delta_2~\text{and}~\mu>3~\text{are odd};\\
\Omega(q^{\frac{\delta_2+1}{2}}), ~\text{when}~\delta_2~\text{is odd}~{and}~\mu=3.
\end{cases}
\end{equation}
\end{cor}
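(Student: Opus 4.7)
The plan is to combine the sparse hypergraph existence result of Theorem \ref{main51} with the block-diagonal construction of Theorem \ref{opt_HLRCs}. At a high level, one feeds a suitably sparse hypergraph into Construction A to produce the middle code $\mathbf{H}_{middle}$, and then stacks $m_1$ copies of $\mathbf{H}_{middle}$ diagonally to obtain the H-LRC.

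The first step is to invoke Theorem \ref{main51} with $R = r_2+\delta_2-1$, $\delta = \delta_2$, and $\mu = \lfloor (d_2-1)/\delta_2 \rfloor$ (which is at least $3$ since $d_2 \geq 3\delta_2 + 1$), taking the vertex set to be $\mathbb{F}_q$. For $q$ large enough this yields an $R$-uniform hypergraph $\mathcal{H}$ on $q$ vertices that is simultaneously $\mathcal{G}_R(iR - \lfloor (i-1)\delta_2/2 \rfloor - 1, i)$-free for every $2 \leq i \leq \mu$, with $m_2 := |E(\mathcal{H})|$ edges matching the case of (\ref{shg01}) appropriate to the parity of $\delta_2$ and $\mu$. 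Using the edges of $\mathcal{H}$ as the generating sets $G_1, \dots, G_{m_2}$ of Construction A, the sparsity of $\mathcal{H}$ forces $|\bigcup_{i \in S} G_i| \geq |S|R - \lfloor (|S|-1)\delta_2/2 \rfloor \geq (r_2 + \delta_2/2 - 1)|S| + \delta_2/2$ for every $2 \leq |S| \leq \mu$, which is exactly the hypothesis required by Theorem \ref{opt_HLRCs}. Hence the resulting $\mathbf{H}_{middle} = \mathbf{H}_{middle}(m_2, r_2, \delta_2, d_2)$ is well-formed.

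Next, pick a positive integer $m_1 < r_2/(d_2 - \delta_2)$ (possible under the parameter constraints inherited from Theorem \ref{opt_HLRCs}), set $\delta_1 = d_2$, and choose any $r_1$ in the admissible range $r_1(1 - 1/m_1) < m_2 r_2 - d_2 + \delta_2 \leq r_1$; for instance $r_1 = m_2 r_2 - d_2 + \delta_2$ works. Forming the block-diagonal parity-check matrix $\mathbf{H}(m_1, \mathbf{H}_{middle})$ of (\ref{pcm_HLRC}), Theorem \ref{opt_HLRCs} produces an optimal $[n, k, d_2]_q$ H-LRC with parameters $[(r_1, \delta_1), (r_2, \delta_2)]$ and length $n = m_1 m_2 (r_2 + \delta_2 - 1)$.

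Finally, the asymptotics follow by bookkeeping. Since $m_1$ and $r_2 + \delta_2 - 1$ are constants determined only by the fixed parameters $r_2, \delta_2, d_2$, we have $n = \Theta(m_2) = \Theta(|E(\mathcal{H})|)$, and substituting the four cases of (\ref{shg01}) directly yields the four asymptotic bounds of (\ref{lb01_HLRC}). The routine part is the parameter matching and the translation between the hypergraph sparsity and condition (\ref{main_cond}); the substantive content has already been packaged into Theorem \ref{main51} and Theorem \ref{opt_HLRCs}. The main obstacle, if any, is checking that the parameter window for $m_1$ and $r_1$ in (\ref{para_cond}) remains nonempty as $m_2$ grows with $q$, but this is immediate since $m_2 r_2 - d_2 + \delta_2$ merely fixes a legal (and arbitrarily large) choice of $r_1$.
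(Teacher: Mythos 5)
Your proof is correct and takes exactly the approach the paper intends: the paper states this corollary as an "immediate consequence" of Theorem~\ref{opt_HLRCs} without spelling out the details, and what you supply — feed the sparse hypergraph from Theorem~\ref{main51} (on vertex set $\mathbb{F}_q$, so $n_{\mathcal{H}}=q$) into Construction A, verify that $\mathcal{G}_R$-freeness yields the union lower bound $(r_2+\delta_2/2-1)|S|+\delta_2/2$, then stack $m_1$ copies via Theorem~\ref{opt_HLRCs} and track the constants — is precisely the expected argument, parallel to how Corollary~\ref{lowerboundA} follows from Theorems~\ref{main51} and~\ref{thm51}. Your parameter bookkeeping (e.g., $r_1 = m_2 r_2 - d_2 + \delta_2$ and any $m_1 < r_2/(d_2-\delta_2)$, which is nonempty since $d_2 < r_2 + \delta_2$ forces $r_2/(d_2-\delta_2) > 1$) is also consistent with the theorem's hypotheses.
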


\begin{rem}\label{rem4}
\begin{itemize}
  \item[(i)] As mentioned in Remark \ref{rem1}, the construction of optimal $(r,\delta)_a$-LRCs in \cite{Zhang20} is under the condition (\ref{cond_zhang}). As a consequence, H-LRCs generated from this construction require the generating sets of the middle code satisfying (\ref{cond_zhang}). Therefore, compared to the construction in \cite{Zhang20}, Theorem \ref{opt_HLRCs} provides a way to construct H-LRCs under a more relaxed condition.
  \item[(ii)] When $\delta_2+1\leq d_2\leq 2\delta_2$, like Theorem \ref{thm31}, Zhang and Liu \cite{ZL20} provide a construction of optimal $[n,k,d]_q$ H-LRCs with parameters $[(r_1,\delta_1),(r_2,\delta_2)]$ with unbounded length. When $2\delta_2+1\leq d_2\leq 3\delta_2$, H-LRCs obtained from Theorems IV.3 and IV.4 in \cite{ZL20} can have length $\Omega(q^2)$ through $(q,r_2+\delta_2-1,1)$-packings. When $d_2\geq 3\delta_2+1$, Corollary \ref{lowerbound_HLRC0} and Corollary \ref{lowerbound_HLRC} give the longest known optimal H-LRCs for these cases.
\end{itemize}
\end{rem}

\subsection{Generalized Sector-Disk Codes}\label{sec62}

Aiming to construct codes that can recover erasure patterns beyond the minimum distance, Cai and Schwartz \cite{CS20} relaxed the restrictions of sector-disk codes \cite{PB14} and considered the following array codes.

\begin{definition}\cite{CS20}\label{GSD}
Let $\mathcal{C}$ be an optimal $[n,k,d;(r,\delta)_i]_q$-LRC. Then the code $\mathcal{C}$ is said to be a $(\gamma,s)$-generalized sector disk code (GSD code) if the codewords can be arranged into an array
\begin{equation*}
C=\left(\begin{array}{cccc}
c_{1,1} &c_{1,2} &\cdots &c_{1,a}\\
c_{2,1} &c_{2,2} &\cdots &c_{2,a}\\
\vdots &\vdots &\ddots &\vdots\\
c_{b,1} &c_{b,2} &\cdots &c_{b,a}\\
\end{array}\right)
\end{equation*}
such that:
\begin{description}
  \item[(i)] all the erasure patterns consisting of any $\gamma$ columns and additional $s$ sectors can be recovered;
  \item[(ii)] $\gamma b+s>d-1$.
\end{description}
\end{definition}
In \cite{CS20}, based on locally repairable codes with information locality constructed from regular packings, Cai and Schwartz obtained GSD codes with super-linear length for several different $(\gamma,s)$s. As an application of Theorem \ref{thm41}, we have the following construction.

\textbf{Construction C:} For positive integers $r$, $\delta\geq 2$ and $1\leq h\leq \delta$. Let $S$ be an $h$-subset of $\mathbb{F}_q$ and $G$ be an $(r+\delta-1)$-subset of $\mathbb{F}_q\setminus S$. For any positive integer $l\geq 1$, let $n=(l+1)(r+\delta-1)+h$ and $k=(l+1)r$. Then take $G_{l+2}=S$ and $G_i=G$ for $1\leq i\leq l+1$, we can obtain an $[n,k,h+\delta;(r,\delta)_i]$-LRC $\mathcal{C}_0$ by Construction B. Denote $G=\{x_1,\ldots,x_{r+\delta-1}\}$ and $c=(c_{1,1},\ldots,c_{1,r+\delta-1},\ldots,c_{l+1,1},\ldots,c_{l+1,r+\delta-1},c_{l+2,1},\ldots,c_{l+2,h})$ for any $c\in\mathcal{C}_0$. Define column vectors $\mathbf{v}_{x_{a}}\in \mathbb{F}_{q}^{l+2}$ for $a\in [r+\delta-1]$ as
\begin{equation*}
\mathbf{v}_{x_{a}}^{T}=
(c_{i_{x_a,1},j_{x_a,1}},c_{i_{x_a,2},j_{x_a,2}},\ldots,c_{i_{x_a,l+1},j_{x_a,l+1}},c'_{l+2,a}),
\end{equation*}
where the generating element corresponding to $c_{i_{x_a,b},j_{x_a,b}}$ satisfies $g_{i_{x_a,b},j_{x_{a},b}}=x_a$ for $1\leq b\leq l+1$, $c'_{l+2,a}=c_{l+2,a}$ for $1\leq a\leq h$ and $c'_{l+2,a}=0$ for $h+1\leq a\leq r+\delta-1$.

\begin{thm}\label{GSD0}
Let $\mathcal{C}$ be the $(l+2)\times (r+\delta-1)$ array code generated by Construction C. Then,
\begin{itemize}
  \item when $\gamma\leq h$, the code $\mathcal{C}$ is a $(\gamma, h+\delta-1-2\gamma)$-GSD code;
  \item when $h<\gamma< \delta-1$, the code $\mathcal{C}$ is a $(\gamma, \delta-1-\gamma)$-GSD code.
\end{itemize}
\end{thm}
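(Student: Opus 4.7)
The plan is to verify the hypotheses of Theorem~\ref{main1} for every admissible erasure pattern, exploiting the fact that the underlying code $\mathcal{C}_0$ produced by Construction C is an optimal $[n,k,h+\delta;(r,\delta)_i]_q$-LRC by Theorem~\ref{thm41}, with $G_1=\cdots=G_{l+1}=G$ and $G_{l+2}=S$. Given $\gamma$ erased columns $X\subseteq \{1,\ldots,r+\delta-1\}$ and $s$ additional sector erasures with $s_i$ sectors erased in row $i$, I translate this into an erasure pattern $\mathcal{E}=\{E_1,\ldots,E_{l+2}\}$ on $\mathcal{C}_0$: for each $i\in [l+1]$, $E_i$ picks up the $\gamma$ generators $\{x_a:a\in X\}$ together with the $s_i$ sector-erasure generators, so $|E_i|=\gamma+s_i$; the set $E_{l+2}$ collects only the column-erasure generators $g_{l+2,a}$ with $a\in X\cap \{1,\ldots,h\}$ and the sector erasures in row $l+2$ at positions $\le h$ (positions $>h$ contain known zeros), giving $|E_{l+2}|\le \gamma_1+s_{l+2}$ with $\gamma_1:=|X\cap \{1,\ldots,h\}|\le \min(\gamma,h)$.

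The first step is to check that $S:=\{i\in [l+1]:|E_i|\ge\delta\}$ has $|S|\le 1$. The nonnegativity of $s$ forces $\gamma<\delta$ in both cases (in Case~1 from $\gamma\le (h+\delta-1)/2 <\delta$, in Case~2 directly), hence $|E_i|\ge \delta$ requires $s_i\ge \delta-\gamma\ge 1$. If $|S|\ge 2$, summing sector counts yields $s\ge 2(\delta-\gamma)$; in Case~1 this rearranges to $h\ge \delta+1$, contradicting $h\le \delta$, while in Case~2 it rearranges to $\gamma\ge \delta+1$, contradicting $\gamma\le \delta-2$. With $|S|\le 1$ and $|G_i|=r+\delta-1$ for every $i\in [l+1]$, condition (\ref{ineq_evaluationpoint}) of Theorem~\ref{main1} reduces to $r+\delta-1 \ge (r+\tfrac{\delta}{2}-1)|S|+\tfrac{\delta}{2}$, which is clearly satisfied.

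The second step is to verify the Singleton-type condition (\ref{ineq_erasurepattern}). For $|S|=0$ it follows from $|E_{l+2}|\le \gamma_1+s_{l+2}\le h+s \le h+\delta-1$. For $S=\{i_0\}$, one has $|E_{i_0}|+|E_{l+2}|\le \gamma+\gamma_1+(s_{i_0}+s_{l+2}) \le \gamma+\gamma_1+s$; in Case~1 this equals $h+\delta-1+(\gamma_1-\gamma)\le h+\delta-1$ by $\gamma_1\le \gamma$, and in Case~2 it equals $\gamma_1+\delta-1\le h+\delta-1$ by $\gamma_1\le h$. Theorem~\ref{main1} then yields the recovery of $\mathcal{E}$. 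Finally, condition (ii) of Definition~\ref{GSD} follows from $\gamma(l+2)+s = \gamma l + (h+\delta-1) > h+\delta-1$ in Case~1 and $\gamma(l+2)+s = \gamma(l+1)+(\delta-1) > h+\delta-1$ in Case~2 (since $\gamma(l+1)\ge 2\gamma>h$). The main bookkeeping challenge will be carefully distinguishing erasures of genuine code symbols from the zero padding in row $l+2$, so that Theorem~\ref{main1} is applied to the correct underlying pattern.
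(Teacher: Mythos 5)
Your argument follows essentially the same route as the paper: translate the GSD erasure pattern (columns plus sectors) into an erasure pattern $\mathcal{E}$ on the underlying code $\mathcal{C}_0$ and verify the hypotheses of Theorem~\ref{main1}. In fact you are more explicit than the paper in two useful places. The paper simply asserts that ``since $G_i=G$ for all $i\in[l+1]$, (\ref{ineq_evaluationpoint}) holds naturally,'' which is only true once one knows $|S|\leq 1$ (for $|S|\geq 2$ the requirement $r+\delta-1\geq (r+\tfrac{\delta}{2}-1)|S|+\tfrac{\delta}{2}$ fails); your first step supplies exactly the missing justification, and your counting argument for $|S|\le 1$ is the implicit content behind the paper's appeal to Theorem~\ref{thm41}. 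You also check condition (ii) of Definition~\ref{GSD}, which the paper leaves to the reader.

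There is one small slip. In your verification of (\ref{ineq_erasurepattern}) for $|S|=0$ you write $|E_{l+2}|\le \gamma_1+s_{l+2}\le h+s\le h+\delta-1$, but the last inequality requires $s\le \delta-1$, which fails in Case~1 precisely when $\gamma < h/2$ (then $s = h+\delta-1-2\gamma > \delta-1$). The fix is immediate and you do not actually need the chain: since $E_{l+2}\subseteq G_{l+2}=S$ and $|G_{l+2}|=h$, one always has $|E_{l+2}|\le h\le h+\delta-1$, and with $|S|=0$ the left side of (\ref{ineq_erasurepattern}) is just $|E_{l+2}|$. With that replacement your proof is complete.
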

\begin{proof}
According to Definition \ref{GSD} and Theorem \ref{thm41}, we only need to show that erasure patterns consisting of $\gamma$ columns and any other $s$ erasures satisfy (\ref{ineq_erasurepattern}) and (\ref{ineq_evaluationpoint}), where
\begin{equation*}
s=\begin{cases}
h+\delta-1-2\gamma,~\text{when}~\gamma\leq h;\\
\delta-1-\gamma,~\text{when}~h<\gamma<\delta-1.
\end{cases}
\end{equation*}

Let $\mathcal{F}=\{F_1,\ldots,F_{l+2}\}$ be the erasure pattern formed by given $\gamma$ columns and other $s$ erasures. Denote $\mathcal{F}'=\{F_{i}\in \mathcal{F}: |F_{i}|\geq \delta\}\cup\{F_{l+2}\}$. Clearly, we have
\begin{align*}
|\bigcup_{F_{i}\in \mathcal{F}'}F_{i}|+|F_{l+2}|\leq |\bigcup_{F_{i}\in \mathcal{F}}F_{i}|+|F_{l+2}|&\leq
\begin{cases}
2\gamma+s,~\text{when}~\gamma\leq h;\\
\gamma+h+s,~\text{when}~h<\gamma<\delta-1
\end{cases}\\
&\leq h+\delta-1.
\end{align*}
Therefore, $\mathcal{F}$ satisfies (\ref{ineq_erasurepattern}). Moreover, since $G_i=G$ for all $i\in [l+1]$, thus (\ref{ineq_evaluationpoint}) holds naturally.
\end{proof}

From Corollary \ref{cor40}, the code $\mathcal{C}$ generated by Construction C can be arbitrarily long and its minimal distance $d=h+\delta$ satisfies $\delta+1\leq d\leq 2\delta$. For general $d\geq 2\delta+1$, based on Theorem \ref{main1}, we can extend Cai and Schwartz's construction as follows.

\textbf{Construction D:} Given positive integers $r$, $1\leq v< r$ and $\delta\geq 2$. Let $S$ be an $(r-v)$-subset of $\mathbb{F}_q$ and $\mathcal{H}(V,E)$ be a $t$-regular $(t\geq 2)$ $R$-uniform hypergraph with $V=\mathbb{F}_q\setminus S$ that is $\mathcal{G}_{R}(iR-\lfloor(i-1)\frac{\delta}{2}\rfloor-1,i)$-free for every $2\leq i\leq {r-v+\delta-1\choose \delta}$. Let $E=\{e_{i}\}_{i=1}^{|E|}$, $G_i=e_i$ for $1\leq i\leq |E|-1$ and $G_{|E|}$ be a $(v+\delta-1)$-subset of $e_{|E|}$. Let $n=t(q-r+v)$ and $k=(|E|-1)r+v$. Based on $S$ and $\mathcal{H}$, we can obtain an $[n,k,r-v+\delta;(r,\delta)_i]$-LRC $\mathcal{C}_0$ by Construction B. Denote $e_{|E|}\setminus G_{|E|}=\{x_1,\ldots,x_{r-v}\}$, $V=\{x_1,\ldots,x_{q-r+v}\}$ and $c=(c_{1,1},\ldots,c_{1,r+\delta-1},\ldots,c_{|E|,1},\ldots,c_{|E|,v+\delta-1},c_{|E|+1,1},\ldots,c_{|E|+1,r-v})$ for any $c\in\mathcal{C}_0$. Define column vectors $\mathbf{v}_{x_{a}}\in \mathbb{F}_{q}^{t}$ for $a\in [q-r+v]$ as
\begin{equation*}
\mathbf{v}_{x_{a}}^{T}=
\begin{cases}
(c_{i_{x_a,1},j_{x_a,1}},c_{i_{x_a,2},j_{x_a,2}},\ldots,c_{i_{x_a,t-1},j_{x_a,t-1}},c_{|E|+1,a}),~\text{if}~1\leq a\leq r-v;\\
(c_{i_{x_a,1},j_{x_a,1}},c_{i_{x_a,2},j_{x_a,2}},\ldots,c_{i_{x_a,t},j_{x_a,t}}),~\text{otherwise},
\end{cases}
\end{equation*}
where the generating element corresponding to $c_{i_{x_a,b},j_{x_a,b}}$ satisfies $g_{i_{x_a,b},j_{x_{a},b}}=x_a$, $1\leq b\leq t-1$ for $1\leq a\leq r-v$ and $1\leq b\leq t$ for $r-v+1\leq a\leq q-r+v$.

%

\begin{thm}\label{GSD_SP}
Let $\mathcal{C}$ be the $t\times (q-r+v)$ array code generated by Construction D. Then,
\begin{itemize}
  \item when $\gamma\leq r-v$, the code $\mathcal{C}$ is a $(\gamma, r-v+\delta-1-2\gamma)$-GSD code;
  \item when $r-v<\gamma< \delta-1$, the code $\mathcal{C}$ is a $(\gamma, \delta-1-\gamma)$-GSD code.
\end{itemize}
\end{thm}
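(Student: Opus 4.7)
The plan is to mirror the proof of Theorem~\ref{GSD0} by reducing the recovery claim to an application of Theorem~\ref{main1}. First, observe that $\mathcal{C}_0$ itself is an optimal $[n,k,r-v+\delta;(r,\delta)_i]_q$-LRC by Theorem~\ref{thm52}: the sparsity range $\binom{r-v+\delta-1}{\delta}$ stipulated in Construction~D comfortably dominates the threshold $\lfloor(r-v+\delta-1)/\delta\rfloor$ required there. Denote a generic erasure pattern by $\mathcal{F}=\{F_1,\ldots,F_{|E|},F_{|E|+1}\}$, where for $i\leq|E|$, $F_i\subseteq G_i$ collects the generating elements of the erased code symbols, and $F_{|E|+1}$ collects the erasures in the extra block. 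Let $T=\{i\in[|E|]:|F_i|\geq\delta\}$. It then suffices to check the two hypotheses of Theorem~\ref{main1} for $\mathcal{F}$ against this $T$.

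To verify the cardinality bound (\ref{ineq_erasurepattern}), I would account for each erasure's contribution to $|\bigcup_{i\in T}F_i|+|F_{|E|+1}|$. An erased column $\mathbf{v}_{x_a}$ contributes the single vertex $x_a$ to $\bigcup F_i$, regardless of how many edges contain $x_a$; when $a\leq r-v$ it additionally contributes the symbol $c_{|E|+1,a}$ to $F_{|E|+1}$. A lone sector erasure contributes at most one unit to either term. Taking the worst case (saturating the range $a\leq r-v$), the total contribution is bounded by $2\gamma+s$ in the regime $\gamma\leq r-v$ and by $\gamma+(r-v)+s$ in the regime $r-v<\gamma<\delta-1$. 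Substituting the prescribed value of $s$ gives exactly $r-v+\delta-1=h+\delta-1$ in both cases.

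For the hypergraph condition (\ref{ineq_evaluationpoint}), the key preliminary step is to bound $|T|$, since the sparsity of $\mathcal{H}$ is only guaranteed for $i\leq\binom{r-v+\delta-1}{\delta}$. The $\mathcal{G}_R(2R-\lfloor\delta/2\rfloor-1,2)$-freeness forbids any two edges from sharing $\geq\delta$ elements, for otherwise their union would have size at most $2R-\delta<2R-\lfloor\delta/2\rfloor$. Hence for distinct $i,j\in T$ the sets $F_i,F_j$ cannot share any $\delta$-subset, so selecting one $\delta$-subset from each $F_i$ defines an injection from $T$ into the family of $\delta$-subsets of $\bigcup_{i\in T}F_i$. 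Combined with $|\bigcup_{i\in T}F_i|\leq r-v+\delta-1$ from the previous step, this yields $|T|\leq\binom{r-v+\delta-1}{\delta}$. Applying the $\mathcal{G}_R(|T|R-\lfloor(|T|-1)\delta/2\rfloor-1,|T|)$-freeness then gives $|\bigcup_{i\in T}e_i|\geq(r+\tfrac{\delta}{2}-1)|T|+\tfrac{\delta}{2}$; replacing $e_{|E|}$ by $G_{|E|}$ shrinks the union by at most $|e_{|E|}\setminus G_{|E|}|=r-v$ when $|E|\in T$, which is precisely absorbed by the $+v-r$ slack allowed in (\ref{ineq_evaluationpoint}).

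With both conditions verified, Theorem~\ref{main1} delivers the recovery guarantee. The principal obstacle is the bound $|T|\leq\binom{r-v+\delta-1}{\delta}$: a priori, the heavy set could include an unbounded number of edges, and without this control the higher-$i$ sparsity conditions would not be applicable. The leverage comes from the $i=2$ instance of sparsity, which forces pairwise almost-disjointness of the edges and yields the required injection.
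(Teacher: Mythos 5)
Your proof is correct and follows the same route as the paper: verify the two hypotheses of Theorem~\ref{main1} for the erasure pattern, using the erasure accounting for condition~(\ref{ineq_erasurepattern}) and the bound $|T|\leq\binom{r-v+\delta-1}{\delta}$ together with the sparsity of $\mathcal{H}$ for condition~(\ref{ineq_evaluationpoint}). In fact, your argument is slightly more complete than the paper's: the paper asserts that $|\bigcup_{i\in I_{\mathcal{F}'}}F_i|\leq r-v+\delta-1$ and $|F_i|\geq\delta$ for each $i\in I_{\mathcal{F}'}$ ``indicates'' $|I_{\mathcal{F}'}|\leq\binom{r-v+\delta-1}{\delta}$, but this does not follow from those two facts alone (arbitrarily many sets of size $\geq\delta$ can sit inside a fixed ground set). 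You correctly identify the missing ingredient: the $\mathcal{G}_R(2R-\lfloor\delta/2\rfloor-1,2)$-freeness forces $|e_i\cap e_j|\leq\lfloor\delta/2\rfloor<\delta$, so distinct heavy blocks cannot share a $\delta$-subset, and picking one $\delta$-subset from each $F_i$ gives the needed injection into $\binom{\bigcup_{i\in T}F_i}{\delta}$. Your handling of the $|E|\in T$ case (the $r-v$ loss absorbed by the $v-r$ slack in (\ref{ineq_evaluationpoint})) is also the right observation and matches what the paper implicitly relies on.
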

\begin{proof}
According to Definition \ref{GSD} and Theorem \ref{main1}, we only need to show that erasure patterns consisting of $\gamma$ columns and any other $s$ erasures satisfy (\ref{ineq_erasurepattern}) and (\ref{ineq_evaluationpoint}), where
\begin{equation*}
s=\begin{cases}
r-v+\delta-1-2\gamma,~\text{when}~\gamma\leq r-v;\\
\delta-1-\gamma,~\text{when}~r-v<\gamma< \delta-1.
\end{cases}
\end{equation*}

Let $\mathcal{F}=\{F_1,\ldots,F_{|E|+1}\}$ be the erasure pattern formed by $\gamma$ given columns and other $s$ erasures. Denote $\mathcal{F}'=\{F_{i}\in \mathcal{F}: |F_{i}|\geq \delta\}\cup\{F_{|E|+1}\}$ and $|\bigcup_{F_{i}\in \mathcal{F}'}F_{i}|+|F_{|E|+1}|\leq r-v+\delta-1$ follows from the choice of $s$. Therefore, $\mathcal{F}$ satisfies (\ref{ineq_erasurepattern}).

On the other hand, denote $I_{\mathcal{F}'}=\{i\in [|E|]: F_i\in \mathcal{F}'\}$. Note that $|\bigcup_{i\in I_{\mathcal{F}'}}F_{i}|\leq r-v+\delta-1$ and $|F_{i}|\geq \delta$ for each $i\in I_{\mathcal{F}'}$, which indicates that
\begin{equation*}
|I_{\mathcal{F}'}|\leq {r-v+\delta-1\choose \delta}.
\end{equation*}
Therefore, (\ref{ineq_evaluationpoint}) follows from the sparsity of $\mathcal{H}$.
\end{proof}

\begin{rem}\label{rem3}
Unfortunately, all known results about large sparse hypergraphs can not guarantee the regularity of every vertex of $\mathcal{H}$. A standard probabilistic argument like Claim \ref{claim2} can only provide sparse hypergraphs with bounded degree. Thus, more advanced methods are required to construct large regular sparse hypergraphs.
\end{rem}

\section{Conclusions}\label{sec7}

In this paper, we provide general constructions for both optimal $(r,\delta)_a$-LRCs and optimal $(r,\delta)_i$-LRCs. Based on a connection between sparse hypergraphs and optimal $(r,\delta)$-LRCs, we obtain optimal $(r,\delta)_a$-LRCs and optimal $(r,\delta)_i$-LRCs with super-linear (in $q$) length. This improves all known results when the minimal distance $d$ satisfies $d\geq 3\delta+1$. Moreover, as applications, we provide new constructions for H-LRCs and GSD codes.

As shown in Theorem \ref{main1}, codes generated by Construction B can recover special erasure patterns beyond the minimal distance, which enables us to further construct GSD codes. This phenomenon also appears in codes from Construction A. Note that the parity check matrix in (\ref{pcm1}) have similar structure as that for MR-LRC (see \cite{GGY20}), therefore, it's worth trying to obtain longer MR-LRCs using similar approaches.

According to Tables I and II, there are gaps between our constructions and upper bounds on the code length given in \cite{CMST20} and \cite{CS20}. Therefore, improvements of the upper bounds and constructions of longer codes will be interesting topics for future work. Moreover, to our knowledge, explicit constructions of large sparse hypergraphs are very rare. Results of Lemma \ref{lem45} and therefore results of Theorem \ref{thm51} are both from the perspective of probabilistic existence. Therefore, explicit constructions or algorithmic constructions in polynomial time (like Theorem 4.2 in \cite{XY19}) for optimal $(r,\delta)$-LRCs with super-linear length are also worth studying.

\bibliographystyle{abbrv}
\bibliography{New_Constructions_of_Optimal_Locally_Repairable_Codes_with_Super-Linear_Length}
\end{document}